\begin{document}

\newtheorem{theorem}{Theorem}[section]
\newtheorem{proposition}[theorem]{Proposition}
\newtheorem{lemma}[theorem]{Lemma}
\newtheorem{corollary}[theorem]{Corollary}
\newtheorem{conjecture}[theorem]{Conjecture}
\newtheorem{question}[theorem]{Question}
\newtheorem{problem}[theorem]{Problem}
\newtheorem{definition}[theorem]{Definition}

\theoremstyle{remark}
\newtheorem{remark}[theorem]{Remark}

\def\theenumi{\roman{enumi}}

\numberwithin{equation}{section}

\renewcommand{\Re}{\operatorname{Re}}
\renewcommand{\Im}{\operatorname{Im}}

\def \R {{\mathbb R}}
\def \HH {{\mathbb H}}
\def \C {{\mathbb C}}
\def \Z {{\mathbb Z}}
\def \Q {{\mathbb Q}}
\def \TT {{\mathbb T}}
\newcommand{\T}{\mathbb T}
\def \Dc {{\mathcal D}}

\newcommand{\tr}[1] {\hbox{tr}\left( #1\right)}

\newcommand{\area}{\operatorname{area}}

\newcommand{\Norm}{\mathcal N}
\newcommand{\simgeq}{\gtrsim}%
\newcommand{\simleq}{\lesssim}

\newcommand{\length}{\operatorname{length}}

\newcommand{\curve}{\mathcal C} 
\newcommand{\vE}{\mathcal E} 
\newcommand{\Ec}{\mathcal {E}} 
\newcommand{\Sc}{\mathcal{S}} 

\newcommand{\dist}{\operatorname{dist}}
\newcommand{\supp}{\operatorname{supp}}
\newcommand{\spec}{\operatorname{spec}}
\newcommand{\diam}{\operatorname{diam}}

\newcommand{\Ccap}{\operatorname{Cap}}
\newcommand{\E}{\mathbb E}

\newcommand{\sumstar}{\sideset{}{^\ast}\sum}

\newcommand {\Zc} {\mathcal{Z}} 
\newcommand{\ninumber}{\Zc}

\newcommand{\zeigen}{E} 
\newcommand{\eigen}{m}

\newcommand{\ave}[1]{\left\langle#1\right\rangle} 

\newcommand{\Var}{\operatorname{Var}}
\newcommand{\Prob}{\operatorname{Prob}}

\newcommand{\var}{\operatorname{Var}}
\newcommand{\Cov}{{\rm{Cov}}}
\newcommand{\meas}{\operatorname{meas}}

\title{Nodal intersections for random eigenfunctions on the torus}
\author{Ze\'ev Rudnick and Igor Wigman}
\address{
School of Mathematical Sciences,
Tel Aviv University, Tel Aviv, Israel} \email{rudnick@post.tau.ac.il}
\address{Department of Mathematics, King's College London, UK}
\email{igor.wigman@kcl.ac.uk}


\begin{abstract}
We investigate the number of nodal intersections of random Gaussian Laplace
eigenfunctions on the standard two-dimensional flat torus
(``arithmetic random waves") with a fixed smooth reference
curve with nonvanishing curvature.  The expected intersection number
is universally proportional to the length of the reference curve,
times the wavenumber, independent of the geometry.

Our main result prescribes the asymptotic behaviour of the nodal
intersections variance for smooth curves in the high energy limit;
remarkably, it is dependent on both the angular distribution of
lattice points lying on the circle with radius corresponding to the
given wavenumber, and the geometry of the given curve. In particular,
this implies that the nodal intersection number admits a universal
asymptotic law with arbitrarily high probability.
\end{abstract}

\date{\today}
\maketitle

\section{Introduction}

\subsection{Background}
A number of recent papers studied the fine structure of nodal lines
of eigenfunctions of the Laplacian, and in particular the number of
intersections of these nodal lines with a fixed reference curve.
Thus let $\curve\subset M$ be a smooth curve on a 
(smooth) Riemannian surface $M$. Let $F$ be a real-valued
eigenfunction of the Laplacian on $M$ with eigenvalue $\lambda^2$:
$-\Delta F =\lambda^2 F$. We want to estimate the number of nodal
intersections
\begin{equation}
\ninumber(F)=\#\{x: F(x)=0\} \cap \curve \end{equation}
 that is the number of zeros of $F$ on $\curve$, as $\lambda\to \infty$.

It is expected that in many  situations, there is an upper bound of
the form $\ninumber(F)\ll \lambda$, and general criteria for this to
happen exist \cite{TZ, ET}, though it is difficult to verify these
criteria in most situations. As for lower bounds, nothing seems to
be known in general, see \cite{GRS} for results on Hecke
eigenfunctions on hyperbolic surfaces (and \cite{Magee} for
analogous results on the sphere), and \cite{Jung, JZ} for results on
density one subsequences for hyperbolic surfaces. Aronovich and
Smilansky \cite{Aronovich Smilansky} studied the nodal intersections
of random monochromatic waves on the plane  ~\cite{Berry 1977}
with various reference curves.


\vspace{3mm}

The one context where we have more information is for the standard
flat torus $\TT^2=\R^2/ \Z^2$. Let $\curve \subset \TT^2$ be a
smooth curve.
Bourgain and Rudnick \cite{BRinvent} showed that if $\curve$ is not
a segment of a closed geodesic, then it is not part of the nodal
line of any eigenfunction with $\lambda>\lambda_\curve$ sufficiently
large, hence $\ninumber(F) <\infty$ for $\lambda$ sufficiently
large. If the reference curve $\curve$ has  nowhere-zero curvature,
they gave upper and lower bounds \cite{BRGAFA} on the intersection
numbers
\begin{equation}
\label{eq:BR ninnumber} \lambda^{1-o(1)} \ll \ninumber(F)\ll \lambda.
\end{equation}
The lower bound is strengthened in \cite{BRNI}, and assuming a
number theoretic conjecture takes the form $\ninumber(F)\gg
\lambda$ and is thus optimal up to a constant multiple. Moreover the
number theoretic condition is known to hold for "generic"
eigenvalues hence we know that for almost all eigenvalues, {\em all}
eigenfunctions in the eigenspace satisfy the lower bound
$\ninumber(F)\gg \lambda$.

In this paper we show that in this setting, for "generic" toral
eigenfunctions there is in fact an asymptotic law for these nodal
intersection numbers.
 We will show that for {\em all} eigenspaces, we in fact have an asymptotic result for "almost all"
eigenfunctions in that eigenspace, once we take a limit of large
eigenspace dimension.

\subsection{Our setting}
Let
\begin{equation}
\vE = \{\mu\in \Z^2:|\mu|^2=\eigen\}
\end{equation}
be the set of lattice points on the circle of radius
$\sqrt{\eigen}$, and denote

\begin{equation} N_\eigen = \#\vE. \end{equation}
We consider the random Gaussian toral eigenfunctions
\begin{equation}
\label{eq:F def} F(x) = \frac{1}{\sqrt{
N_{\eigen}}}\sum\limits_{\mu\in \vE}a_{\mu}e^{2\pi i \langle
\mu,x\rangle},
\end{equation}
with eigenvalue $$ \lambda^2 = 4\pi^2 \eigen,$$ defined on the
standard torus $\TT^2=\R^{2}/\Z^{2}$, where $a_{\mu}$ are standard
complex Gaussian random variables (that is $\E(a_\mu)=0$,
$\E(|a_\mu|^2)=1$), independent save for the relations $a_{-\mu} =
\overline{a}_{\mu}$. The random functions $F$ are called
``arithmetic random waves" ~\cite{KKW}.

We define the probability measures on the unit circle $\mathcal{S}^{1}\subseteq\R^{2}$
\begin{equation}
\label{eq:taum def}
\tau_{\eigen} = \frac{1}{N_{\eigen}}\sum\limits_{\mu\in \vE}\delta_{\mu/\sqrt{\eigen}},
\end{equation}
where $\delta_{x}$ is the Dirac delta function at $x$.
It is well known that the lattice points $\Ec$ are
equidistributed on $\Sc^{1}$ along generic subsequences of energy
levels (see e.g. \cite{FKW}, Proposition 6) in the sense that
$\tau_{m_{j}}\Rightarrow \frac{1}{2\pi}d\theta$ along some density $1$ 
sequence $\{m_{j}\}$ (relatively to the set of integers representable as sum of two squares), 
and thus, in particular, 
$\widehat{\tau_{m_{j}}}(4)\rightarrow 0$. Below we will assume that 
$|\widehat{\tau_{m_{j}}}(4)|\le 1$ is bounded away from $1$ (see the formulation
of the main results);
for $\tau$ invariant w.r.t. rotation by $\pi/2$, $\widehat{\tau}(4)= \pm 1$ if and only if
$\tau = \frac{1}{4}(\sum\limits_{k=0}^{4}\delta_{k\pi/2})$ or
$\tau = \frac{1}{4}(\delta_{\pm \pi/4}+\delta_{\pm 3\pi/4})$
(thinking of the circle as $\Sc^{1}\cong \R/[0,2\pi)$),
thus we only exclude these two limiting probability measures
(see section \ref{sec:leading constant} for more discussion on the possible
limiting angular measures, and the peculiarities of these two).

Given a curve $\curve \subset \TT^2$, we wish to study the
statistics of the number of nodal intersections $\Zc(F)$ for  an
arithmetic random wave  $F$. We do this when the curve $\curve$ is
smooth, with nowhere zero curvature.

\begin{theorem}
\label{thm:expected, variance small} Let $\curve \subset \TT^2$ be a
smooth curve on the torus, with nowhere-zero curvature, of
total length $L$.

 i) The expected number of nodal intersections is precisely
\begin{equation}\label{expected ni}
\E \left[ \ninumber \right] = \sqrt{2\eigen}L =
\frac{\lambda}{\pi\sqrt{2}}L.
\end{equation}

ii) 
Let $\{\eigen\}$ be a sequence s.t. $N_{\eigen}\rightarrow\infty$ and the 
Fourier coefficients $\{\widehat{\tau_{m}}(4)\}$ do not accumulate at $\pm 1$,
i.e. no subsequence of $\{\widehat{\tau_{m}}(4)\}$ converges to $+1$ or $-1$.
Then the variance is
\begin{equation}
\Var( \ninumber ) \ll \frac{\eigen}{N_\eigen}  \ll
\frac{\lambda^2}{N_\eigen}.
\end{equation}
\end{theorem}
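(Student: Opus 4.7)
The plan for part (i) is to apply the Kac-Rice formula to the one-dimensional centred stationary Gaussian process $G(t):=F(\curve(t))$, parametrised by arc length. Since $F$ has covariance $r(x)=\frac{1}{N_\eigen}\sum_{\mu\in\vE}e^{2\pi i\langle\mu,x\rangle}$, one has $\Var G(t)=r(0)=1$; differentiating twice and using the $\pi/2$-rotation invariance of $\vE$ (which forces $\sum_{\mu\in\vE}\mu\mu^{T}=\tfrac{\eigen N_\eigen}{2}I$) gives $\Var G'(t)=2\pi^{2}\eigen$ independently of $t$. Because $r$ is even, $G(t)$ and $G'(t)$ are uncorrelated (hence independent) at each $t$, and the Kac-Rice density reduces to the constant $\tfrac{1}{\pi}\sqrt{2\pi^{2}\eigen}=\sqrt{2\eigen}$; integrating over $\curve$ yields \eqref{expected ni}.

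For part (ii), I would use the Wiener-It\^o chaos decomposition $\Zc=\sum_{k\ge 0}\Zc[k]$. The $F\leftrightarrow -F$ symmetry annihilates every odd chaos, so by orthogonality $\Var(\Zc)=\sum_{k\ge 2,\text{ even}}\Var(\Zc[k])$. Hermite-expanding the formal Kac-Rice integrand $\delta(G(t))|G'(t)|$ in the independent pair $G(t)$ and $\widetilde G'(t):=G'(t)/\sqrt{2\pi^{2}\eigen}$ writes each $\Zc[k]$ explicitly as a linear combination of curve integrals of products $H_{k_{1}}(G(t))H_{k_{2}}(\widetilde G'(t))$ with $k_{1}+k_{2}=k$ and both $k_{i}$ even. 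Using the Hermite orthogonality $\E[H_{p}(X)H_{q}(Y)]=\delta_{p,q}p!(\E XY)^{p}$ for jointly standard Gaussians, $\Var(\Zc[k])$ becomes a linear combination of double integrals along $\curve\times\curve$ of products of $r(\curve(s)-\curve(t))$ and derivatives thereof. Substituting the Fourier expansion of $r$ and interchanging summation turns each such integral into a weighted lattice sum of the shape
\[
\frac{\eigen}{N_\eigen^{2}}\sum_{\mu,\nu\in\vE}P(\mu,\nu)\,\bigl|\widehat{\sigma_\curve}(\mu\pm\nu)\bigr|^{2},
\]
where $\widehat{\sigma_\curve}(\xi):=\int_\curve e^{-2\pi i\langle\xi,\curve(t)\rangle}dt$ is the Fourier transform of arc-length measure on $\curve$ and $P$ is a homogeneous polynomial of controlled degree.

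The dominant contribution is $\Var(\Zc[2])$. Its diagonal part (pairs with $\mu\pm\nu=0$, of which there are $N_\eigen$) produces $|\widehat{\sigma_\curve}(0)|^{2}=L^{2}$ per term, giving precisely the target order $O(\eigen/N_\eigen)$ after a direct evaluation of the weights $P(\mu,-\mu)$ via the $\pi/2$-symmetry of $\vE$. Higher chaoses $\Zc[k]$, $k\ge 4$, are handled by the same machinery and contribute lower-order terms. The main obstacle \emph{and} the reason for the hypothesis on $\widehat{\tau_\eigen}(4)$ is the off-diagonal sum: although the stationary-phase bound $|\widehat{\sigma_\curve}(\xi)|\ll|\xi|^{-1/2}$ (valid by the nowhere-zero curvature of $\curve$) gives a pointwise gain, one must sum it against the polynomial weights over $\sim N_\eigen^{2}$ lattice pairs without losing more than $O(\eigen/N_\eigen)$ in total. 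In the exceptional limits $\widehat{\tau_\eigen}(4)\to\pm 1$ the set $\vE$ degenerates to the vertices of an axial or diagonal square, all nonzero sums $\mu\pm\nu$ concentrate in a handful of coordinate directions, and the stationary-phase gain is destroyed along precisely those directions. The hypothesis that $\widehat{\tau_\eigen}(4)$ stays away from $\pm 1$ supplies the angular spread of $\vE$ necessary to absorb the off-diagonal sum into the bound $O(\eigen/N_\eigen)$, and this is where the bulk of the technical effort will lie.
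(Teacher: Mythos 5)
Part (i) of your proposal is correct and is essentially the paper's own argument (Kac--Rice for the restricted process, unit variance, $\Var f'(t)=2\pi^{2}\eigen$ from the $\pi/2$-invariance of $\vE$, independence of $f(t)$ and $f'(t)$, constant density $\sqrt{2\eigen}$).

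For part (ii) your chaos-expansion route is a legitimate alternative in principle, and your identification of the dominant second-chaos term does reproduce the paper's main term $\left(4B_\curve(\vE)-L^{2}\right)\eigen/N_\eigen$: the diagonal pairs give exactly the combination $\iint\bigl(r^{2}-(r_1/\sqrt{\alpha})^{2}-(r_2/\sqrt{\alpha})^{2}+(r_{12}/\alpha)^{2}\bigr)$, and the off-diagonal pairs are controlled by van der Corput plus the arithmetic bound $\sum_{\mu\neq\mu'}|\mu-\mu'|^{-1}\ll N_\eigen^{\epsilon}$ (which in the paper requires Mordell's theorem on binary quadratic forms and Jarn\'{\i}k's theorem, a nontrivial input you should not take for granted). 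The genuine gap is your one-line dismissal of the chaoses $k\geq 4$. The formal Hermite expansion of $\delta(f(t))\,|f'(t)|$ is not absolutely summable --- $\delta$ is not in $L^{2}$ of the Gaussian measure, so $\sum_{j}(2j)!\,\gamma_{2j}^{2}=\infty$ --- and consequently you cannot bound $\sum_{k\geq4}\Var(\Zc[k])$ termwise by the fourth-moment estimates. In the literature this tail is controlled either by an independent Kac--Rice computation of the full variance (which is precisely what you are trying to avoid) or by a delicate near-diagonal analysis; the paper's substitute is the \emph{approximate} Kac--Rice formula of Proposition \ref{prop:var(Z) kac-rice approx}, built from a partition into arcs of length $\asymp\eigen^{-1/2}$, a singular/nonsingular dichotomy, and the non-degeneracy of the $4\times4$ covariance matrix $\Sigma(t_1,t_2)$ near the diagonal (Lemma \ref{lem:matrix nongen diag}, proved in Appendix \ref{app:det>0}). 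Some version of this near-diagonal control is unavoidable on your route too, and it is where the bulk of the work lies.

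Relatedly, your explanation of the hypothesis on $\widehat{\tau_\eigen}(4)$ is wrong, which matters because it means you would deploy the hypothesis in the wrong place. The stationary-phase bound $|\widehat{\sigma_\curve}(\xi)|\ll|\xi|^{-1/2}$ is uniform in the direction of $\xi$ for a curve with nowhere-zero curvature (Lemma \ref{osc int curve}), so no ``bad directions'' arise even when $\vE$ clusters angularly; and the off-diagonal arithmetic estimates (Propositions \ref{second mom of r} and \ref{prop:sum}) hold unconditionally. The hypothesis $|\widehat{\tau_\eigen}(4)|$ bounded away from $1$ enters only through the non-degeneracy of $\Sigma(t_1,t_2)$ for $0<|t_2-t_1|<c_0/\sqrt{\eigen}$: the leading Taylor coefficient of $\det\Sigma$ at the diagonal is proportional to $(1-A(t_1))(1-A(t_1)^{2})$ with $A(t)=\widehat{\tau_\eigen}(4)\cos(4\varphi(t))$, and it is exactly the Cilleruelo-type limits $\widehat{\tau_\eigen}(4)\to\pm1$ that make this degenerate. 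Without locating the hypothesis there, your argument has no mechanism to exclude these cases.
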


By Chebyshev's inequality we deduce that under the conditions of 
Theorem \ref{thm:expected, variance small}, we
have with arbitrarily high probability
\begin{equation}
\ninumber(F) \sim \sqrt{2\eigen}L
\end{equation}
for eigenfunctions with eigenvalue $4\pi^2 \eigen$. Our main result
in fact prescribes the asymptotic form for the variance, which
depends on the distribution of the lattice points $\vE$ once
projected to the unit circle.
\begin{theorem}
\label{thm:nodal var intr asympt} Let $\curve \subset \TT^2$ be a
smooth curve on the torus, with nowhere-zero curvature, of
total length $L$, and $\{\eigen\}$ a sequence s.t. $N_{\eigen}\rightarrow\infty$ and the
Fourier coefficients $\{\widehat{\tau_{m}}(4)\}$ do not accumulate at $\pm 1$.
Then
\begin{equation}
\label{eq:nodal var intr asympt} \Var(\ninumber)  = \left(
4B_\curve(\vE) -L^2 \right) \cdot \frac{\eigen }{N_\eigen}   +
O\left( \frac{\eigen}{N_\eigen^{3/2}}\right)
\end{equation}
where
\begin{equation} \label{def of B in int}
B_\curve(\vE):= \int_{\curve}\int_{ \curve} \frac {1}{N_{\eigen}}
\sum_{\mu \in \vE} \left\langle \frac{\mu}{|\mu|}, \dot \gamma(t_1)
\right\rangle^2 \cdot  \left\langle \frac{\mu}{|\mu|}, \dot
\gamma(t_2) \right\rangle^2 dt_1 dt_2
\end{equation}
with $\gamma:[0,L]\to \curve$ a unit speed parameterization.
\end{theorem}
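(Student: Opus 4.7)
The plan is to combine the Kac-Rice formula along the curve with a Wiener chaos decomposition of the zero count. With unit-speed parameterisation $\gamma\colon[0,L]\to\curve$ and $f(t):=F(\gamma(t))$, so that $\ninumber = \#\{t\in [0,L]:f(t)=0\}$, the Kac-Rice formula gives
\[
\Var(\ninumber) = \E[\ninumber] + \int_0^L\int_0^L\bigl(K_2(t_1,t_2) - K_1(t_1)K_1(t_2)\bigr) dt_1 dt_2,
\]
where the one- and two-point zero intensities $K_1,K_2$ are expressed in terms of the $4\times 4$ covariance matrix of the jet $(f(t_1),f'(t_1),f(t_2),f'(t_2))$, every entry of which is a lattice sum in $\vE$ of $\cos(2\pi\langle\mu,\gamma(t_1)-\gamma(t_2)\rangle)$ or its $t$-derivatives. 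Rather than expanding $K_2-K_1K_1$ directly in powers of the correlation $r(t_1,t_2):=\E[f(t_1)f(t_2)]$, I would decompose $\ninumber=\sum_{q\ge 0}\ninumber[q]$ into orthogonal Wiener chaoses with respect to $\{a_\mu\}$; invariance of the Kac-Rice integrand $|f'(t)|\delta_0(f(t))$ under $F\mapsto -F$ kills the odd chaoses, so $\Var(\ninumber) = \sum_{q\ge 1}\Var(\ninumber[2q])$.

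The leading asymptotic comes from the second chaos. The Hermite expansions $|x|=\sqrt{2/\pi}+\sum_{k\ge 1}c_k H_{2k}(x)$ and the formal expansion of $\delta_0$ give
\[
\ninumber[2] = \frac{\sqrt{2\eigen}}{2}\int_0^L\bigl(H_2(f'(t)/\sqrt{2\pi^2\eigen}) - H_2(f(t))\bigr) dt.
\]
Squaring and invoking the diagram identity $\E[H_2(X)H_2(Y)] = 2(\E[XY])^2$ reduces $\Var(\ninumber[2])$ to a linear combination of $\int_0^L\int_0^L(\partial_{t_1}\partial_{t_2} r)^2 dt_1 dt_2$, $\int\int(\partial_{t_1} r)^2 dt_1 dt_2$, and $\int\int r^2 dt_1 dt_2$, suitably normalised. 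Each of these is a double lattice sum in $\vE$, and the dominant contribution comes from the diagonals $\mu_1=\pm\mu_2$, on which the oscillatory factor trivialises. The three diagonal pieces yield respectively $4B_\curve(\vE)\cdot\eigen/N_\eigen$ (from the $\partial_1\partial_2$ term), $-2L^2\eigen/N_\eigen$ (from the cross term), and $L^2\eigen/N_\eigen$ (from the pure-value term), summing to $(4B_\curve(\vE)-L^2)\cdot\eigen/N_\eigen$; the off-diagonal contributions $\mu_1\neq\pm\mu_2$ are controllable thanks to the non-accumulation of $\widehat{\tau_\eigen}(4)$ at $\pm 1$, which rules out the Cilleruelo-type coherent configurations that would otherwise cancel the leading behaviour.

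The main technical obstacle is then to bound the higher-chaos tail $\sum_{q\ge 2}\Var(\ninumber[2q])$ by $O(\eigen/N_\eigen^{3/2})$. At each order the variance is a $2q$-fold lattice sum weighted by Hermite coefficients of $|\cdot|$ and $\delta_0$ and integrated against powers of $\dot\gamma$ and oscillatory factors along $\curve$. The key input is the stationary-phase bound
\[
\int_0^L e^{2\pi i\langle\mu,\gamma(t)\rangle} dt \ll |\mu|^{-1/2}
\]
uniformly in $\mu\in\vE$, guaranteed by the nowhere-vanishing curvature of $\curve$; combined with a careful enumeration of diagonal sub-configurations in $\vE^{2q}$ and hypercontractivity-type bounds on the relevant Hermite coefficients, this should yield the claimed power saving. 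The hypothesis that $\widehat{\tau_\eigen}(4)$ stays away from $\pm 1$ also ensures that the jet covariance matrix remains uniformly non-degenerate along $\curve$, so that the Kac-Rice formula and the chaos expansion are valid.
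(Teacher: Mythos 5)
Your second-chaos computation does reproduce the correct main term: the projection $\ninumber[2]=\tfrac{\sqrt{2\eigen}}{2}\int_0^L\bigl(H_2(f'(t)/\sqrt{2\pi^2\eigen})-H_2(f(t))\bigr)dt$ and the identity $\E[H_2(X)H_2(Y)]=2\,\E[XY]^2$ give exactly the integrand $r^2-(r_1/\sqrt{2\pi^2\eigen})^2-(r_2/\sqrt{2\pi^2\eigen})^2+(r_{12}/2\pi^2\eigen)^2$ of Proposition \ref{prop:var(Z) kac-rice approx}, and the diagonal $\mu=\pm\mu'$ contributions do sum to $(4B_\curve(\vE)-L^2)\eigen/N_\eigen$. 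However, there are two genuine gaps. The first is that your justification of the starting point is false: you claim the hypothesis on $\widehat{\tau_\eigen}(4)$ makes the jet covariance matrix uniformly non-degenerate along $\curve$, so that the precise Kac--Rice variance formula holds. It does not. Since every $\mu\in\vE$ satisfies $\mu_1+\mu_2\equiv\eigen\pmod 2$, one has $F(x+(1/2,1/2))=(-1)^{\eigen}F(x)$ identically, so for curves containing pairs of points differing by $(1/2,1/2)$ the vector $(f(t_1),f(t_2),f'(t_1),f'(t_2))$ degenerates at well-separated $(t_1,t_2)$ and the precise Kac--Rice formula for the second moment is in general wrong --- this is exactly the counterexample given in the paper. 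The hypothesis on $\widehat{\tau_\eigen}(4)$ only yields non-degeneracy in the near-diagonal range $0<|t_2-t_1|<c_0/\sqrt{\eigen}$ (Lemma \ref{lem:matrix nongen diag}), which is why the paper must construct an \emph{approximate} Kac--Rice formula by partitioning $[0,L]^2$ into cubes of side $\asymp 1/\sqrt{\eigen}$, applying Kac--Rice off a singular set, and controlling the singular set via fourth-moment bounds. A chaos decomposition could in principle sidestep the two-point degeneracy, but then you must independently establish $\ninumber\in L^2$ and the $L^2$-convergence of the expansion; as written, your argument rests on a non-degeneracy statement that is false.

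The second gap is that the arithmetic heart of the proof is absent. After the curvature/stationary-phase bound $\int_0^L e^{2\pi i\langle\mu-\mu',\gamma(t)\rangle}dt\ll|\mu-\mu'|^{-1/2}$, the off-diagonal contribution to each second moment is $O\bigl(N_\eigen^{-2}\sum_{\mu\neq\mu'}|\mu-\mu'|^{-1}\bigr)$; the trivial bound on that sum is $O(N_\eigen^{2}\cdot 1)$, which after multiplying by $\eigen$ gives an error comparable to the main term. The needed estimate $\sum_{\mu\neq\mu'}|\mu-\mu'|^{-1}\ll N_\eigen^{\epsilon}$ (Proposition \ref{prop:sum}) is proved in the paper via Mordell's theorem on representing binary quadratic forms as sums of two squares together with Jarn\'ik's theorem on lattice points on short arcs; you instead attribute the control of off-diagonal terms to the non-accumulation of $\widehat{\tau_\eigen}(4)$ at $\pm1$, which plays no role there. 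Likewise, your bound $\sum_{q\ge2}\Var(\ninumber[2q])=O(\eigen/N_\eigen^{3/2})$ is asserted rather than proved; the corresponding step in the paper requires the fourth-moment estimates $\iint r^4\ll N_\eigen^{-3/2}$ and their analogues for the derivatives (Lemma \ref{lem:4th moment r and der along gamma}), which again rest on Zygmund's observation and Jarn\'ik's theorem. Without these arithmetic inputs the proposal does not close.
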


Theorem~\ref{thm:nodal var intr asympt} immediately implies the
second part of Theorem~\ref{thm:expected, variance small}. In
Section ~\ref{sec:leading constant} we discuss the possible partial
limits of $B_\curve(\vE)$ as $\eigen\to \infty$: there is no unique
limit, similar to what happens for the variance of the length of
nodal lines in this model \cite{KKW}. The leading constant $$0 \le
4B_\curve(\vE) -L^2 \le L^{2}$$ is always non-negative and bounded
(see Proposition \ref{prop:c(mu,gamma)=L^2/4}); it can however
vanish, for instance when $\curve$ is a full circle, see
\S~\ref{sec:leading const circ}.


\subsection{About the proof and plan of the paper}

First, we may restrict $F$ along $\curve$; this reduces computing
the nodal intersections $\ninumber$ to counting zeros of a random
process $f$ defined on an interval. The Kac-Rice
formula (see e.g. ~\cite{CL1967} or \cite[Theorems 11.2.1,
11.5.1]{Adler Taylor}) is a standard tool for studying the expected
number of zeros of a process and its higher moments by expressing
the $k$-th (factorial) moment in terms of a certain $k$-dimensional
integral.

For the expected value of $\Zc$ we do this in
~\S~\ref{sec:expectation}. For the second moment, the Kac-Rice
formula would state
\begin{equation}
\label{eq:Kac-Rice example} \E[\ninumber^{2}] =
\iint\limits_{\curve\times\curve} K_{2}(t_{1},t_{2})dt_{1}dt_{2} +
\E[\ninumber],
\end{equation}
where $K_{2}$ is the suitably defined ``$2$-point correlation
function", that is, provided that we justify its use. Unfortunately,
to our best knowledge, all the available references impose a certain
non-degeneracy condition on $f$ and its derivative $f'$, which is
far from being satisfied. In fact, it is easy to construct an
example where the Kac-Rice integral in \eqref{eq:Kac-Rice example}
is off from computing the second (factorial) moment: one checks
that the functions \eqref{eq:F def} satisfy that $F(x)=0$ if
and only if $F(x+(1/2,1/2))=0$. Hence if $\curve$ is a simple closed
curve, invariant w.r.t. the translation
$$\phi:x\mapsto \left(\frac{1}{2},\frac{1}{2} \right) + x, $$ i.e. $\curve=\curve_{1}\cup\curve_{2}$,
where $\curve_{i}$ are the maximal subsets of $\curve$ so that
$\phi(\curve_{1})=\curve_{2}$, then the total number of nodal
intersections $\ninumber$ is twice the number of intersections with
$\curve_{1}$ (so that the variance is multiplied by $4$); however
the linear part on the RHS of \eqref{eq:Kac-Rice example} is not
invariant, and therefore the precise Kac-Rice formula as stated in
\eqref{eq:Kac-Rice example} is in general wrong.

To cope with this situation we develop an {\em approximate} form of
the Kac-Rice for the second moment of the number of zeros of a
random eigenfunction along a smooth curve, which is
sufficient for our purposes. This is quite delicate and takes up all
of sections ~\ref{sec:2-pnt corr func}, \ref{sec:var(Z_eta)
kac-rice approx}, and  Appendix \ref{app:det>0}; we believe that the developed techniques are of
independent interest, and could be used in a variety of situations
where Kac-Rice is not directly applicable (e.g. ~\cite{CMW}). In our
situation the result gives the variance of $\ninumber$ in terms of
the second moments of the covariance function (also referred to as
covariance {\em kernel}) $r(t_1,t_2)=
\E\{F(\gamma(t_1))F(\gamma(t_2))\}$ and its derivatives $r_j =
\partial r /\partial t_j$, $r_{ij} = \partial^2 r/ \partial
r_i\partial t_j$  along the curve:
\begin{proposition}
\label{prop:var(Z) kac-rice approx} Fix $\epsilon_{0} > 0$.
Then for all $m$ such that $ |\widehat{\tau_{m}}(4)|<1-\epsilon_{0} $ 
one has the following approximate Kac-Rice formula,
\begin{equation}
\label{eq:var(Z) kac-rice approx}
\begin{split}
\Var (\ninumber) =  \eigen &\iint_0^L   \left(  r^{2}-
\left(\frac{r_{1}}{\sqrt{2\pi^2\eigen}}\right)^{2} -
\left(\frac{r_{2}}{\sqrt{2\pi^2\eigen}}\right)^{2}+
\left(\frac{r_{12}}{2\pi^2\eigen}\right)^{2} \right)dt_{1}dt_{2}
\\&+ O\left( \frac{\eigen}{N_{\eigen}^{3/2}}  \right),
\end{split}
\end{equation}
where the implied constant depends only on $\epsilon_{0}$.
\end{proposition}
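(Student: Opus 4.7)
The plan is to replace the classical Kac--Rice formula, which fails due to spectral-arithmetic degeneracies, with a smoothed approximate version, and to isolate and quantify the contributions from the degenerate loci.

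\textbf{Smoothing and splitting.} I would introduce the bump $\chi_\eta(s) = (2\eta)^{-1}\mathbf{1}_{|s|<\eta}$ and the smoothed counting functional
\[
Z_\eta(F) = \int_0^L \chi_\eta(f(t))\,|f'(t)|\,dt,\qquad f(t) := F(\gamma(t)),
\]
which tends to $\ninumber$ almost surely as $\eta \to 0$ since the zeros of $f$ are simple a.s. By Fubini,
\[
\E[Z_\eta^2] = \iint_{[0,L]^2} G_\eta(t_1, t_2)\,dt_1\,dt_2,\qquad G_\eta(t_1, t_2) := \E\bigl[\chi_\eta(f(t_1))\chi_\eta(f(t_2))|f'(t_1)f'(t_2)|\bigr],
\]
which is determined by the joint Gaussian vector $(f(t_1), f(t_2), f'(t_1), f'(t_2))$ with covariance matrix $\Sigma(t_1,t_2)$. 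I partition $[0,L]^2 = G \cup B$ into a \emph{regular} region $G$ where $\Sigma$ has singular values uniformly bounded below, and a \emph{singular} region $B$ containing neighborhoods of (i) the diagonal $\{t_1 = t_2\}$; (ii) the translation-singular loci $\{\gamma(t_1) - \gamma(t_2) \in \tfrac{1}{2}\Z^2 \setminus \Z^2\}$, where $F(\gamma(t_1)) = \pm F(\gamma(t_2))$ identically because of the common parity of $\mu_1+\mu_2$ over $\mu \in \vE$; and (iii) a quantitative buffer around (i)--(ii) where $|r|$ is not yet small. The curvature assumption forces the loci in (ii) to be isolated transverse intersections, while the hypothesis $|\widehat{\tau_m}(4)| < 1 - \epsilon_0$ provides the quantitative non-degeneracy of $\Sigma$ on $G$; a careful choice of buffer width makes $\meas(B)$ small enough to fit the target error.

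\textbf{The regular contribution.} On $G$, dominated convergence lets me send $\eta \to 0$ inside the integral, producing $\iint_G K_2$, where $K_2(t_1,t_2) = \E[|f'(t_1)f'(t_2)| \mid f(t_1) = f(t_2) = 0] \cdot p_{(f(t_1),f(t_2))}(0,0)$ is the classical Kac--Rice two-point density. Combining the bivariate Gaussian density $(2\pi\sqrt{1-r^2})^{-1}$ with the closed formula $\E|X_1 X_2| = \tfrac{2}{\pi} s_1 s_2(\sqrt{1-\rho^2} + \rho \arcsin\rho)$, I Taylor-expand $K_2$ in the small parameters $r$, $r_1/\sqrt{2\pi^2 m}$, $r_2/\sqrt{2\pi^2 m}$, $r_{12}/(2\pi^2 m)$. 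Using that $\widehat{\tau_m}(2) = 0$ by the $i$-invariance of $\vE$, the zeroth-order term equals $K_1(t_1)K_1(t_2) = 2m$ and integrates to $\bigl(\E[\ninumber]\bigr)^2 = 2mL^2$, precisely cancelling the $-(\E[\ninumber])^2$ in $\Var(\ninumber)$; the quadratic-order terms assemble into exactly $m$ times the bracketed integrand in \eqref{eq:var(Z) kac-rice approx}; the higher-order (quartic and beyond) terms contribute at most $O(m/N^{3/2})$ after integration.

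\textbf{The singular contribution and main obstacle.} On $B$, I establish a uniform pointwise bound $G_\eta(t_1, t_2) \lesssim m$ independent of $\eta$, via Cauchy--Schwarz and standard Gaussian moment estimates that are robust to the degeneracy of $\Sigma$; combined with the chosen size of $B$ this yields $\iint_B G_\eta = O(m/N^{3/2})$. The linear residue $\E[\ninumber] = O(\sqrt{m})$ arising from the identity $\E[Z_\eta^2] = \E[Z_\eta(Z_\eta-1)] + \E[Z_\eta]$ is itself within $O(m/N^{3/2})$, since $N = O(m^{\varepsilon})$ for any $\varepsilon>0$. The hard step is the handling of the singular set $B$: the translation singularities are genuine arithmetic degeneracies, and quantifying both their geometric transversality (from curvature) and the near-singularity of $\Sigma$ in their vicinity (from the $\widehat{\tau_m}(4)$ hypothesis) so that their product gives the required $O(m/N^{3/2})$, rather than merely $o(m/N)$, is the core technical challenge. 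It is precisely this interplay between geometry and spectral arithmetic that blocks a direct application of any classical Kac--Rice statement and necessitates the smoothing framework.
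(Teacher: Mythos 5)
Your overall architecture (a regular/singular decomposition of $[0,L]^2$, Taylor expansion of the two-point density on the regular part, a measure-times-pointwise bound on the singular part) is the same as the paper's, but two of your central claims fail as stated, and they are exactly where the real work lies. First, the uniform-in-$\eta$ pointwise bound $G_\eta(t_1,t_2)\lesssim \eigen$ on $B$ is false. At a translation-degenerate pair, where $f(t_2)=\pm f(t_1)$ identically, one has $\chi_\eta(f(t_1))\chi_\eta(f(t_2))=(2\eta)^{-2}\mathbf{1}_{\{|f(t_1)|<\eta\}}$, so $G_\eta(t_1,t_2)\asymp \eta^{-1}\eigen\to\infty$ as $\eta\to 0$; Cauchy--Schwarz at the level of densities gives only $G_\eta\ll \eta^{-1}\eigen$. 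This blow-up is not a technicality: it is precisely why the naive Kac--Rice second-moment formula is wrong for this model (the paper's example of a curve invariant under $x\mapsto x+(1/2,1/2)$). The paper's substitute is to prove that $\Sigma(t_1,t_2)$ is genuinely nonsingular for $0<|t_2-t_1|<c_0/\sqrt{\eigen}$ (Lemma \ref{lem:matrix nongen diag}, the content of Appendix \ref{app:det>0}, and the only place the hypothesis $|\widehat{\tau_{\eigen}}(4)|<1-\epsilon_0$ enters quantitatively, via a lower bound on the tenth-order Taylor coefficient of $\det\Sigma$ along the diagonal), so that exact Kac--Rice applies on each short arc and yields $\var(\Zc_i)=O(1)$; then $\Cov(\Zc_i,\Zc_j)=O(1)$ for \emph{every} pair of short arcs by Cauchy--Schwarz applied to the random variables $\Zc_i$, not to the densities. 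Your appeal to the $\widehat{\tau_{\eigen}}(4)$ hypothesis remains at the level of an assertion, and no replacement for this step is offered.

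Second, you give no mechanism for bounding $\meas(B)$. The singular set cannot be taken to be a neighborhood of the diagonal together with the loci $\gamma(t_1)-\gamma(t_2)\in\frac12\Z^2\setminus\Z^2$: one must also excise the set where $|r|$, $|r_1|/\sqrt{\eigen}$, $|r_2|/\sqrt{\eigen}$ or $|r_{12}|/\eigen$ exceeds a fixed threshold, and that set is not a priori localized near any arithmetic locus. The paper controls its measure by Chebyshev--Markov applied to the fourth moments $\iint r^4$, etc.\ (Lemma \ref{lem:4th moment r and der along gamma}), which in turn require the van der Corput estimate for $\int_0^L e^{2\pi i\langle v,\gamma(t)\rangle}\,dt$ (this is where the curvature hypothesis enters) together with the arithmetic bound $\sum 1/\|\mu_1+\mu_2+\mu_3+\mu_4\|=O(N_\eigen^{5/2})$ proved via Zygmund's observation and Jarnik's theorem. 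The same fourth-moment bounds are also needed on the regular region to control the quartic remainder in the expansion of $K_2$, a point your sketch asserts but does not justify. Without these two ingredients the proposal does not close.
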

In the proof of Proposition~\ref{prop:var(Z) kac-rice approx} we
also have to control the fourth moment of $r$ and its derivatives;
this is done in \S~\ref{sec:4th moment r and der along gamma}.

Proposition \ref{prop:var(Z) kac-rice approx} reduces our problem to
evaluating the second moment of the covariance function and its
various derivatives along the given curve. To this end, we
eventually encounter an arithmetic problem, which is to show that
\begin{equation}\label{Riesz energy}
\sum_{\mu\neq \mu'\in \vE} \frac 1{|\mu-\mu'|}  = o(N_\eigen).
\end{equation}
This is done in \S~\ref{sec:r,r1,r2,r12 2nd mom}, appealing among
other things to a theorem of Mordell \cite{Mordell 1932} about
representing a binary quadratic form as a sum of two squares, in
other words counting the number of pairs of distinct vectors $(\mu,
\mu') \in \vE\times \vE$ with a given inner product. The
$3$-dimensional version of the quantity \eqref{Riesz energy} is
essentially the electrostatic energy of point charges placed at the
integer points at on the sphere of radius $\sqrt{\eigen}$ and is
analyzed in \cite{BRS}.

The  term $B_\curve(\vE)$ in \eqref{def of B in int}, which determines
the leading term of the variance, arises from the asymptotics
of the second moment $\iint (r_{12})^2$. In \S~\ref{sec:leading
constant} we analyze $B_\curve(\vE)$ and determine when it vanishes
and its limiting value distribution when $N_\eigen\to \infty$, as a
function of the curve $\curve$.



\subsection{Acknowledgements}

We thank the Israel Institute for Advanced Studies of
Jerusalem for its hospitality during the writing of this paper.
We would like to thank Domenico Marinucci and Valentina Cammarota
for several discussions.
The research leading to these results has received funding from the
European Research Council under the European Union's Seventh
Framework Programme (FP7/2007-2013) / ERC grant agreements
n$^{\text{o}}$ 320755 (Z.R.) and  n$^{\text{o}}$ 335141 (I.W.), and
by an EPSRC Grant EP/J004529/1 under the First Grant Scheme (I.W.).

\section{The expected number of nodal intersections}\label{sec:expectation}

\subsection{Kac-Rice formula for computing the expected number of zeros}

\label{eq:Kac-Rice expectation}

Let $f:I\rightarrow \R$ be a centered Gaussian random function
(``process"), a.s. smooth (e.g. $C^{2}$), with the parameter space
$I$ some nice subset of $\R$, e.g. a closed interval or a finite
collection of closed intervals, and let
$$r(t_{1},t_{2})=r_{f}(t_{1},t_{2}):=\E [f(t_{1})f(t_{2})]$$ be the
covariance function of $f$. Denote $\Zc$ to be the number of zeroes
of $f$ on $I$. For $t\in I$ define $K_{1}(t)=K_{1;f}(t)$ to be the
Gaussian expectation
\begin{equation*}
K_{1}(t) = \phi_{f(t)}(0) \cdot \E[|f'(t)| \big| f(t)=0],
\end{equation*}
where $\phi_{f(t)}$ is the probability density function of the
random variable $f(t)$. The latter involves the centered Gaussian
vector $(f(t),f'(t))$ with covariance matrix
\begin{equation*}
\Gamma(t)=\Gamma_{f}(t)=\left( \begin{matrix} r(t,t) & \partial_{t_{1}}r(t_{1},t_{2})|_{(t,t)} \\
\partial_{t_{1}}r(t_{1},t_{2})|_{(t,t)} & \partial_{t_{1}}\partial_{t_{2}}r(t_{1},t_{2})|_{(t,t)} \end{matrix}\right).
\end{equation*}

The function $K_{1}(t)$ is the {\em zero density} (or first
intensity) of $f$; it may be computed explicitly in terms of the
entries of the matrix $\Gamma(t)$, and in our case the expression is
especially simple, as $\Gamma(t)$ is diagonal and independent of $t$
(a consequence of the fact that our process is induced from an
underlying $2$-dimensional {\em stationary} field restricted on a
curve), see below. By the Kac-Rice formula, if for all $t\in I$ the
matrix $\Gamma(t)$ is nonsingular, then ~\cite{CL1967}
$$\E[\Zc] = \int\limits_{I}K_{1}(t)dt.$$

\subsection{Zero density for nodal intersections}

The random field $F(x)$ is centered Gaussian with covariance
function
\begin{equation*}
r_{F}(x,y):= \E[F(x)\cdot
F(y)]=\frac{1}{N_{\eigen}}\sum\limits_{\mu\in \vE}\cos(2\pi  \langle
\mu,y-x\rangle)
\end{equation*}
for $x,y\in \TT^2$; it is {\em stationary} in the sense that
$r_{F}(x,y)=r_{F}(y-x)$ depends on $y-x$ only (by the well-accepted
abuse of notation). Let $\gamma(t):[0,L]\rightarrow \TT^2$ be the
arc-length parameterization of $\curve$; it induces the process
\begin{equation}
\label{eq:fm=Fm induced} f(t)=F(\gamma(t))
\end{equation}
on $I:=[0,L]$ with the covariance function
\begin{equation*}
r(t_{1},t_{2})=r_{F}(\gamma(t_{1})-\gamma(t_{2}));
\end{equation*}
the process $f$ is unit variance. Let $\Zc$ be the number of zeros
of $f$ (on $I$); it equals the number of nodal intersections of $F$
with $\curve$.

\begin{lemma}
\label{lem K1} The zero density of $f$ is
\begin{equation*}
K_{1}(t)=K_{1;m}(t)\equiv \sqrt{2}\cdot \sqrt{m}.
\end{equation*}
In particular,
\begin{equation*}
\E[\Zc] = \sqrt{2}\sqrt{m}\cdot L.
\end{equation*}
\end{lemma}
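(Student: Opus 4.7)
The plan is to apply the classical Kac--Rice zero-density formula recalled in \S\ref{eq:Kac-Rice expectation} to the induced process $f(t)=F(\gamma(t))$. First I would verify the non-degeneracy hypothesis by computing the $2\times 2$ covariance matrix $\Gamma(t)$ of the Gaussian vector $(f(t),f'(t))$ explicitly. Since $f$ is unit variance, $r(t,t)=1$, so $\phi_{f(t)}(0)=1/\sqrt{2\pi}$. The off-diagonal entry $\partial_{t_1}r(t_1,t_2)|_{t_1=t_2=t}$ vanishes because $r_F$ is even in its argument (or, equivalently, because $(f(t),f'(t))$ at coincident parameters is decorrelated for a stationary covariance).

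Next I would compute the variance of $f'(t)$ by the chain rule: writing $f'(t)=\langle \nabla F(\gamma(t)),\dot\gamma(t)\rangle$ and using that $\nabla F$ has covariance matrix
\[
\E[\partial_i F(x)\,\partial_j F(x)] = -\partial_{z_i}\partial_{z_j} r_F(0) = \frac{(2\pi)^{2}}{N_m}\sum_{\mu\in \vE}\mu_i\mu_j,
\]
I would invoke the symmetries of $\vE$ under $\mu\mapsto-\mu$ and the $\pi/2$-rotation $(\mu_1,\mu_2)\mapsto(\mu_2,-\mu_1)$ (which preserve $\vE\subset\Z^2$): the first makes $\sum \mu_1\mu_2=0$, and the second implies $\sum \mu_1^{2}=\sum \mu_2^{2}=\frac12\sum|\mu|^{2}=\tfrac{mN_m}{2}$. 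Hence $\E[\partial_iF\,\partial_jF]=2\pi^{2}m\,\delta_{ij}$, and since $|\dot\gamma|\equiv 1$, one gets $\Var(f'(t))=2\pi^{2}m$, independent of $t$. In particular, $\Gamma(t)=\mathrm{diag}(1,2\pi^{2}m)$ is nonsingular for every $t\in I$, so Kac--Rice applies.

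Because $\Gamma(t)$ is diagonal, $f(t)$ and $f'(t)$ are independent, so the conditional expectation in the Kac--Rice formula reduces to an unconditional one:
\[
K_1(t)=\phi_{f(t)}(0)\cdot \E[|f'(t)|]
=\frac{1}{\sqrt{2\pi}}\cdot \sqrt{2\pi^{2}m}\cdot\sqrt{\tfrac{2}{\pi}}
=\sqrt{2m},
\]
using the standard identity $\E|X|=\sigma\sqrt{2/\pi}$ for $X\sim N(0,\sigma^2)$. Integrating the constant density over $I=[0,L]$ gives $\E[\Zc]=\sqrt{2m}\,L$, as claimed.

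There is essentially no obstacle: the only subtle point is ensuring the covariance matrix $\Gamma(t)$ is genuinely diagonal and nondegenerate, and that is handled purely by the two elementary symmetries of $\vE$ inside $\Z^2$. The calculation is clean precisely because the underlying field is stationary and isotropic at the second-moment level, and the unit-speed parameterization eliminates any dependence on the geometry of $\curve$ from the leading-order intensity, which is the reason the expectation $\sqrt{2m}\,L$ is purely universal.
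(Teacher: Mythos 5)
Your proof is correct and follows essentially the same route as the paper: independence of $f(t)$ and $f'(t)$ at coincident parameters, the identity $\Var(f'(t))=2\pi^{2}m$ (which the paper imports from \cite{RW} as the Hessian identity $H_{r_F}(0)=-2\pi^{2}m\,I_2$ and which you rederive directly from the symmetries of $\vE$), and the one-dimensional Kac--Rice density $\phi_{f(t)}(0)\,\E|f'(t)|=\sqrt{2m}$. One small slip: the symmetry $\mu\mapsto-\mu$ preserves the product $\mu_1\mu_2$ and therefore does not force $\sum_{\mu\in\vE}\mu_1\mu_2=0$; it is the quarter-turn $(\mu_1,\mu_2)\mapsto(\mu_2,-\mu_1)$ you already invoke (or the reflection $(\mu_1,\mu_2)\mapsto(\mu_1,-\mu_2)$) that kills the cross term.
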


To facilitate the computation of the zero density we formulate the
following lemma whose proof will be given in a moment. It is
probably well-known, but nevertheless we give it here as we didn't
find a direct reference.

\begin{lemma}
\label{lem:unit var indep} If $f$ is unit variance, then for every
$t\in [0,L]$, $f(t)$ is independent of $f'(t)$.
\end{lemma}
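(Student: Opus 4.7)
The plan is to show that the pair $(f(t),f'(t))$ is jointly Gaussian with zero covariance, and then invoke the standard fact that uncorrelated jointly Gaussian random variables are independent.

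First I would note that since $f$ is a.s.\ $C^{2}$ (hence in particular $C^{1}$), the derivative $f'(t)$ is the a.s.\ limit of the Gaussian variables $(f(t+h)-f(t))/h$ as $h\to 0$, so $(f(t),f'(t))$ is a centered jointly Gaussian vector. Hence to prove independence it suffices to verify that their covariance vanishes, i.e.\ that
\begin{equation*}
\E[f(t)\,f'(t)] = \partial_{t_{2}}r(t_{1},t_{2})\big|_{t_{1}=t_{2}=t} = 0.
\end{equation*}

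Next I would exploit the unit variance assumption: $r(t,t)\equiv 1$. Differentiating this identity in $t$ yields
\begin{equation*}
0 = \frac{d}{dt}r(t,t) = \partial_{t_{1}}r(t_{1},t_{2})\big|_{(t,t)} + \partial_{t_{2}}r(t_{1},t_{2})\big|_{(t,t)}.
\end{equation*}
Combining this with the symmetry $r(t_{1},t_{2})=r(t_{2},t_{1})$, which forces the two partial derivatives to coincide on the diagonal, gives $\partial_{t_{1}}r|_{(t,t)}=\partial_{t_{2}}r|_{(t,t)}=0$, so $\mathrm{Cov}(f(t),f'(t))=0$.

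There is essentially no obstacle here; the only small point worth being careful about is the justification that differentiation under the expectation produces $\E[f(t)f'(t)]$ as the mixed partial of $r$ on the diagonal, which is immediate from the $C^{1}$ regularity of sample paths and the Gaussian convergence of the difference quotients (both in $L^{2}$ and a.s.).
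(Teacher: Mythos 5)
Your proposal is correct and follows essentially the same route as the paper: differentiate the identity $r(t,t)\equiv 1$ and use the symmetry of $r$ to conclude that the diagonal first partials vanish, hence the covariance of $f(t)$ and $f'(t)$ is zero. The only difference is that you spell out the standard joint-Gaussianity and "uncorrelated implies independent" steps, which the paper leaves implicit.
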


\begin{proof}[Proof of Lemma \ref{lem K1} assuming Lemma \ref{lem:unit var indep}]

We are to compute the zero density of $f(t)$:
\begin{equation}
\label{eq:K1 Gauss exp} K_{1}(t)=\frac{1}{\sqrt{2\pi}} \E[|f'(t)|
\big| f(t)=0],
\end{equation}
thus we are to compute the covariance matrix of $(f(t),f'(t))$.
Since $f$ is unit variance, by Lemma \ref{lem:unit var indep}, the
covariance matrix is
\begin{equation*}
A_{m}= \left( \begin{matrix} 1 &\\ & \alpha
\end{matrix}\right),
\end{equation*}
where
\begin{equation}
\label{eq:alpha2 def}
\alpha=\alpha_{m}(t)=\frac{\partial^{2}}{\partial t_{1}\partial
t_{2}}r|_{(t,t)},
\end{equation}
and, upon computing the Gaussian expectation \eqref{eq:K1 Gauss exp}
explicitly (see e.g. ~\cite{CL1967}), we obtain
\begin{equation}
\label{eq:K1m=1/pi*sqrt*lambda2} K_{1;m}(x) =
\frac{1}{\pi}\sqrt{\alpha}.
\end{equation}

Now (chain rule)
\begin{equation}
\label{eq:r1}
\partial_{t_{1}}r(t_{1},t_{2}) = \nabla r_{F}(\gamma(t_{1})-\gamma(t_{2})) \cdot  \dot{\gamma}(t_{1})
\end{equation}
and
\begin{equation*}
\alpha = -\dot{\gamma}(t_{2})^{t}\cdot
H_{r_{F}}(\gamma(t_{1})-\gamma(t_{2}))\cdot
\dot{\gamma}(t_{1})|_{(t,t)},
\end{equation*}
where $H_{r_{F}}$ is the Hessian of $r_{F}$ (thought of as
$r_{F}(x)=r_{F}(x_{1},x_{2})$. The Hessian $H_{r_{F}}(0)$ was
computed to be a scalar matrix ~\cite{RW}
\begin{equation*}
H_{r_{F}}(0) = - 2\pi^{2}m \cdot I_{2},
\end{equation*}
so that universally
\begin{equation}
\label{eq:alpha2 explicit} \alpha = 2\pi^{2}m \|\dot{\gamma}(t)
\|^{2} = 2\pi^{2}m ,
\end{equation}
since we assumed that $t$ is the arc-length parameter of $\curve$
(i.e. $\|\dot{\gamma}(t) \|=1$), and the zero density is
\begin{equation*}
K_{1}(t) =  \sqrt{2}\cdot \sqrt{m}.
\end{equation*}

\end{proof}

\begin{proof}[Proof of Lemma \ref{lem:unit var indep}]
The correlation between $f(t)$ and $f'(t)$ is given by
\begin{equation*}
\E[f(t)f'(t)] = \frac{\partial}{\partial t_{1}}r|_{(t,t)}.
\end{equation*}
Since we know that
\begin{equation*}
r(t,t) = 1,
\end{equation*}
upon differentiating,
\begin{equation*}
0= \left(\frac{\partial}{\partial t_{1}}r+\frac{\partial}{\partial
t_{2}}r\right)|_{(t,t)} = 2\frac{\partial}{\partial t_{1}}r|_{(t,t)}
\end{equation*}
by the symmetry.
\end{proof}

\begin{remark}
In fact, the proof above shows that the covariance of the
underlying stationary field $F$ satisfies $\nabla r_{F}(0) = 0$, as
$r_{F}(x,x)\equiv 1$.

\end{remark}

\section{The $2$-point correlation function}

\label{sec:2-pnt corr func}

\subsection{Kac-Rice formula for computing the second moment of the number of zero crossings}

\label{sec:Kac-Rice var}

Let $f$ and $\Zc$ be as in section \ref{eq:Kac-Rice expectation}. We
define the $2$-point correlation function $K_{2}=K_{2;f}:I\times
I\rightarrow \R$ (also called the second intensity) in the following way:
for $t_{1}\ne t_{2}$ we define it as the conditional Gaussian expectation
\begin{equation*}
K_{2}(t_{1},t_{2}) = \phi_{t_{1},t_{2}}(0,0)\cdot
\E[|f'(t_{1})|\cdot |f'(t_{2})| |f(t_{1})=f(t_{2})=0]
\end{equation*}
where $\phi_{t_{1},t_{2}}$ is the probability density function of
the random Gaussian vector $(f(t_{1}),f(t_{2}))$. The function
$K_{2}$ admits a continuation to a smooth function on the whole of $I\times I$
(see \ref{sec:K2 diag =O(m)}), though its values at the diagonal are
of no significance for our purposes.
We will find an explicit expression for $K_{2}(t_{1},t_{2})$ in terms of $r$
and its derivatives (see Lemma \ref{lem:K2 explicit} below); finding
such an expression involves studying the centered Gaussian vector
$(f(t_{1}),f(t_{2}),f'(t_{1}),f'(t_{2}))$ with the covariance matrix
$\Sigma=\Sigma_{4\times 4}(t_{1},t_{2})$ as in \eqref{eq:Sigma def}.

It is known ~\cite{CL1967} that under the assumption that for all
$t_{1}\ne t_{2}$ the matrix $\Sigma(t_{1},t_{2})$ is nonsingular
(i.e. the Gaussian distribution of
$$(f(t_{1}),f(t_{2}),f'(t_{1}),f'(t_{2}))$$ is nondegenerate), the
factorial second moment of $\Zc$ is
\begin{equation*}
\E[\Zc^{2}-\Zc] = \iint\limits_{I\times I}
K_{2}(t_{1},t_{2})dt_{1}dt_{2},
\end{equation*}
so that accordingly
\begin{equation}
\label{eq:Kac-Rice cov}
\var(\Zc) = \int\limits_{I\times I} \left(
K_{2}(t_{1},t_{2})-K_{1}(t_{1})\cdot K_{1}(t_{2})\right)dt_{1}dt_{2}
+ \E[\Zc];
\end{equation}
note that the ``extra" $\E[\Zc]$ manifests the degeneracy of the
matrix $\Sigma(t_{1},t_{2})$ on the diagonal $t_{2}=t_{1}$.

Moreover, if $I_{1},I_{2}\subseteq I$ are {\em disjoint} nice sets
(e.g. intervals), and the degeneracy assumption holds for all
$(t_{1},t_{2}) \in I_{1}\times I_{2}$, then if for $J\subseteq I$ we
denote $\Zc_{J}$ to be the number of zero crossing of $f$ in $J$,
then (either employing the proof in ~\cite{CL1967} or using
\cite[Theorems 11.2.1, 11.5.1]{Adler Taylor} on $I_{1}\cup I_{2}$,
whence we will need to make the non-degeneracy assumption for all
$(t_{1},t_{2}) \in (I_{1}\cup I_{2})^{2}$)
\begin{equation*}
\E[\Zc_{I_{1}}\cdot \Zc_{I_{2}}] = \int\limits_{I_{1}\times I_{2}}
K_{2}(t_{1},t_{2})dt_{1}dt_{2},
\end{equation*}
so that
\begin{equation}
\label{eq:CovIiIj disjoint}
\Cov[\Zc_{I_{1}}\cdot \Zc_{I_{2}}] = \int\limits_{I_{1}\times I_{2}}
\left( K_{2}(t_{1},t_{2}) -
K_{1}(t_{1})K_{1}(t_{2})\right)dt_{1}dt_{2}.
\end{equation}

However, the non-degeneracy assumption is not satisfied in the
case of $f$ as in \eqref{eq:fm=Fm induced}, and we may construct
examples of curves, where the Kac-Rice formula as stated is wrong.
However, in a situation like this we
will be able to write an {\em approximate} Kac-Rice formula,
prescribing the same order of magnitude for the fluctuations of the
nodal intersections as the {\em precise} Kac-Rice (see Proposition
\ref{prop:var(Z) kac-rice approx}). We will see in section
\ref{sec:K2m expansion r moments} (Proposition \ref{prop:K2m expansion r moments})
that under certain conditions on $r$ (namely
that $|r|$ is bounded away from $1$) we will be able to approximate
the $2$-point correlation function in terms of powers of $r$ and its
derivatives; this will allow us to write the approximate Kac-Rice
formula of Proposition \ref{prop:var(Z) kac-rice approx} in
terms of the relevant moments of $r$ and its derivatives rather than
in terms of the integral of $2$-point correlation function. We will
prove the approximate Kac-Rice formula of Proposition
\ref{prop:var(Z) kac-rice approx} in section
\ref{sec:var(Z_eta) kac-rice approx} assuming the preparatory work in section
\ref{sec:K2m expansion r moments}, and some upper bounds for the
$4$-th moments of $r$ and its derivatives along the relevant curve
in section \ref{sec:4th moment r and der along gamma} (Lemma
\ref{lem:4th moment r and der along gamma}).

\subsection{An explicit expression for the $2$-point correlation function}

Let $K_{2}(t_{1},t_{2})=K_{2;m}(t_{1},t_{2})$ be the $2$-point
correlation function of our process $f$ as in \eqref{eq:fm=Fm
induced}, i.e. for $t_{2}\ne t_{1}$
\begin{equation*}
K_{2}(t_{1},t_{2}) = \phi_{t_{1},t_{2}}(0,0)\cdot
\E[|f'(t_{1})|\cdot |f'(t_{2})| |f(t_{1})=f(t_{2})=0],
\end{equation*}
where $\phi_{t_{1},t_{2}}$ is the probability density function of
the random Gaussian vector $(f(t_{1}),f(t_{2}))$. The following
lemma gives an explicit expression for $K_{2}$ in terms of $r_{f}$
and its derivatives; recall the definition \eqref{eq:alpha2
def} for $\alpha$ and its explicit value $\alpha=2\pi^{2}\eigen$.

\begin{lemma}
\label{lem:K2 explicit} We have explicitly
\begin{equation}
\label{eq:K2 explicit} K_{2}=K_{2;m}(t_{1},t_{2}) = \frac{1}{\pi^{2}
(1-r^{2})^{3/2}}\cdot \mu\cdot
(\sqrt{1-\rho^{2}}+\rho\arcsin{\rho}),
\end{equation}
where 
\begin{equation}
\label{eq: mu def} \mu = \mu_{\eigen}(t_{1},t_{2}) =
\sqrt{\alpha(1-r^{2})-r_{1}^{2}}\cdot \sqrt{\alpha
(1-r^{2})-r_{2}^{2}},
\end{equation}
and
\begin{equation}
\label{eq:rho def} \rho = \rho_{m}(t_{1},t_{2}) =
\frac{r_{12}(1-r^{2})+rr_{1}r_{2}}{\sqrt{\alpha (1-r^{2})-r_{1}^{2}}
\cdot \sqrt{\alpha (1-r^{2})-r_{2}^{2}}},
\end{equation}
is the correlation between the derivatives $f'(t_{1})$ and
$f'(t_{2})$, conditioned on both values vanishing (thus satisfying
$|\rho|\le 1$).
\end{lemma}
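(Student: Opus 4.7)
The plan is to reduce the computation of $K_{2}(t_{1},t_{2})$ to a bivariate Gaussian absolute-moment calculation, by explicitly conditioning on $f(t_{1})=f(t_{2})=0$ in the joint distribution of the four-dimensional Gaussian vector $(f(t_1),f(t_2),f'(t_1),f'(t_2))$.

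First I would write down the $4\times 4$ covariance matrix $\Sigma(t_1,t_2)$ of this vector. Using the notation $r, r_1, r_2, r_{12}$ for $r$ and its partial derivatives evaluated at $(t_1,t_2)$, Lemma~\ref{lem:unit var indep} gives $\E[f(t_i)f'(t_i)]=0$, so
\begin{equation*}
\Sigma=\begin{pmatrix} A & B \\ B^{t} & C\end{pmatrix},\qquad A=\begin{pmatrix}1 & r \\ r & 1\end{pmatrix},\quad B=\begin{pmatrix}0 & r_{2} \\ r_{1} & 0\end{pmatrix},\quad C=\begin{pmatrix}\alpha & r_{12} \\ r_{12} & \alpha\end{pmatrix}.
\end{equation*}
The density of $(f(t_1),f(t_2))$ at the origin is simply
\begin{equation*}
\phi_{t_1,t_2}(0,0)=\frac{1}{2\pi\sqrt{1-r^{2}}}.
\end{equation*}

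Next, conditioning on $(f(t_1),f(t_2))=(0,0)$, the vector $(f'(t_1),f'(t_2))$ remains centered Gaussian with covariance given by the Schur complement $\Omega := C - B^{t} A^{-1} B$. Using $A^{-1}=(1-r^{2})^{-1}\begin{pmatrix}1 & -r \\ -r & 1\end{pmatrix}$ and multiplying out, I obtain
\begin{equation*}
\Omega=\frac{1}{1-r^{2}}\begin{pmatrix}\alpha(1-r^{2})-r_{1}^{2} & r_{12}(1-r^{2})+rr_{1}r_{2} \\ r_{12}(1-r^{2})+rr_{1}r_{2} & \alpha(1-r^{2})-r_{2}^{2}\end{pmatrix}.
\end{equation*}
Reading off the conditional standard deviations $\sigma_{1},\sigma_{2}$ and the conditional correlation $\rho$ of $f'(t_1),f'(t_2)$, one finds $\sigma_{1}\sigma_{2}=\mu/(1-r^{2})$ with $\mu$ as in \eqref{eq: mu def}, and $\rho$ precisely as in \eqref{eq:rho def}; in particular the Cauchy--Schwarz inequality applied to $\Omega$ yields $|\rho|\le 1$.

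Finally, I invoke the classical formula for the absolute first moment of a centered bivariate normal $(X,Y)$ with standard deviations $\sigma_1,\sigma_2$ and correlation $\rho$:
\begin{equation*}
\E[|X||Y|]=\frac{2\sigma_{1}\sigma_{2}}{\pi}\left(\sqrt{1-\rho^{2}}+\rho\arcsin\rho\right),
\end{equation*}
which follows from polar coordinates after diagonalising the covariance, and is standard (see, e.g., \cite{CL1967}). Multiplying $\phi_{t_1,t_2}(0,0)$ by this conditional expectation gives
\begin{equation*}
K_{2}(t_{1},t_{2})=\frac{1}{2\pi\sqrt{1-r^{2}}}\cdot\frac{2}{\pi}\cdot\frac{\mu}{1-r^{2}}\left(\sqrt{1-\rho^{2}}+\rho\arcsin\rho\right),
\end{equation*}
which simplifies to \eqref{eq:K2 explicit}. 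No step is genuinely hard; the only point requiring care is the bookkeeping of the Schur-complement computation and the clean identification of $\mu$ and $\rho$ with the quantities in the statement, so that the final bivariate-absolute-moment identity assembles into the advertised closed form.
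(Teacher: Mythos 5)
Your proposal is correct and follows essentially the same route as the paper: the same $4\times 4$ covariance matrix with blocks $A$, $B$, $C$, the same Schur complement $\Omega=C-B^{t}A^{-1}B$ for the conditional covariance, and the same bivariate absolute-moment identity $G(\rho)=\tfrac{2}{\pi}(\sqrt{1-\rho^{2}}+\rho\arcsin\rho)$, assembled with the prefactor $\phi_{t_1,t_2}(0,0)=1/(2\pi\sqrt{1-r^{2}})$. The identifications of $\mu$, $\rho$ and the final simplification all check out.
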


\begin{proof}

The covariance matrix for
$(f_{m}(t_{1}),f_{m}(t_{2}),f_{m}'(t_{1}),f_{m}'(t_{2}))$ is
\begin{equation}
\label{eq:Sigma def} \Sigma = \left( \begin{matrix} A & B \\ B^{t}
&C
\end{matrix}\right),
\end{equation}
where
\begin{equation*}
A= \left(\begin{matrix} 1 &r \\ r & 1
\end{matrix} \right), \quad
B = \left(\begin{matrix} 0 & \frac{\partial r }{\partial t_{2}}  \\
\frac{\partial r }{\partial t_{1}}   & 0\end{matrix} \right), \quad
C = \left( \begin{matrix} \alpha  & \frac{\partial^{2}r}{\partial
t_{1}\partial t_{2}}
\\ \frac{\partial^{2}r }{\partial t_{1}\partial t_{2}} &\alpha
\end{matrix}  \right).
\end{equation*}
We abbreviate
$$
r_{1}:= \frac{\partial r}{\partial t_{1}} , \quad r_{2}:=
\frac{\partial r}{\partial t_{2}} , \quad r_{12}:=
\frac{\partial^{2}r}{\partial t_{1}\partial t_{2}} .
$$
The covariance matrix for the conditional distribution of
$f_{m}'(t_{1}),f_{m}'(t_{2})$ conditioned on
$f_{m}(t_{1})=f_{m}(t_{2})=0$ is
\begin{equation}
\label{eq:Omega def}
\begin{split}
\Omega&=\Omega_{m}(t_{1},t_{2}) = C - B^{t}A^{-1}B = \left(
\begin{matrix} \alpha  &  r_{12}
\\ r_{12} &\alpha
\end{matrix}  \right)-\frac{1}{1-r^{2}}\left( \begin{matrix}
r_{1}^{2} &-rr_{1}r_{2} \\ -rr_{1}r_{2} &r_{2}^{2}
\end{matrix}   \right) \\&= \frac{1}{1-r^{2}}\left(\begin{matrix}
\alpha(1-r^{2})-r_{1}^{2} & r_{12}(1-r^{2})+rr_{1}r_{2} \\
r_{12}(1-r^{2})+rr_{1}r_{2} & \alpha(1-r^{2})-r_{2}^{2}
\end{matrix} \right).
\end{split}
\end{equation}
The two-point correlation function is then given by
\begin{equation*}
K_{2;m}(t_{1},t_{2}) = \frac{1}{2\pi \sqrt{\det{A}}}\E
[|W_{1}W_{2}|],
\end{equation*}
where $$(W_{1},W_{2})\sim N(0,\Omega)$$ are centered Gaussian with
covariance $\Omega$. By normalizing the random variables
$$(W_{1},W_{2}) =
\left(\frac{\sqrt{\alpha(1-r^{2})-r_{1}^{2}}}{\sqrt{1-r^{2}}}Y_{1},
\frac{\sqrt{\alpha(1-r^{2})-r_{2}^{2}}}{\sqrt{1-r^{2}}}Y_{2}\right)$$
we write $K_{2;m}$  as
\begin{equation}
\label{eq:K2 form} K_{2;m} = \frac{1}{2\pi (1-r^{2})^{3/2}}\cdot
\mu\cdot \E [|Y_{1}Y_{2}|],
\end{equation}
where $\mu$ is given by \eqref{eq: mu def}, $(Y_{1},Y_{2})\sim
N(0,\Delta(\rho))$ with
\begin{equation}
\label{eq:Delta(rho) def} \Delta(\rho) = \left( \begin{matrix} 1
&\rho \\ \rho &1
\end{matrix}\right)
\end{equation}
and $\rho$ is given by \eqref{eq:rho def}.

It remains to evaluate
\begin{equation*}
G(\rho)=\E[|Y_{1}Y_{2} |]
\end{equation*}
with $(Y_{1},Y_{2})\sim N(0,\Delta(\rho))$. We may compute $G$
explicitly to be equal to (see e.g. \cite{Bleher-Di}),
\begin{equation}
\label{eq:G(rho) def} G(\rho) =  \frac{2}{\pi}\left(
\sqrt{1-\rho^{2}}+\rho\arcsin\rho \right),
\end{equation}
which finally yields the explicit formula \eqref{eq:K2 explicit} via
\eqref{eq:K2 form}.
\end{proof}

\subsection{Asymptotics for the $2$-point correlation function}

\label{sec:K2m expansion r moments}

\begin{proposition}
\label{prop:K2m expansion r moments}

For every $\epsilon_{2}>0$, the two point correlation function satisfies,
uniformly for $|r| < 1-\epsilon_{2}$:
\begin{equation}
\label{eq:K2m expansion r moments}
\begin{split}
K_{2}(t_{1},
t_{2})&=\frac{\alpha}{\pi^{2}}\left(1+\frac{1}{2}r^{2}-\frac{1}{2}(r_{1}/\sqrt{\alpha})^{2}
-\frac{1}{2}(r_{2}/\sqrt{\alpha})^{2}
+\frac{1}{2}(r_{12}/\alpha)^{2} \right)
\\&+\alpha \cdot O\left(r^{4}+(r_{1}/\sqrt{\alpha})^{4}+
(r_{2}/\sqrt{\alpha})^{4} + (r_{12}/\alpha)^{4}\right).
\end{split}
\end{equation}
Bearing in mind \eqref{eq:K1m=1/pi*sqrt*lambda2}, we may
equivalently write
\begin{equation*}
\begin{split}
K_{2}(t_{1}, t_{2})- K_{1}(t_{1})K_{1}(t_{2})& =
\frac{\alpha}{2\pi^{2}}\left( 
 r^{2}- \left(\frac{r_{1}}{\sqrt{\alpha}}\right)^{2}
- \left(\frac{r_{2}}{\sqrt{\alpha}}\right)^{2} +
\left(\frac{r_{12}}{\alpha}\right)^{2} \right)
\\&+\alpha \cdot O\left(r^{4}+\left(\frac{r_{1}}{\sqrt{\alpha}}\right)^{4}+
\left(\frac{r_{2}}{\sqrt{\alpha}}\right)^{4} +
\left(\frac{r_{12}}{\alpha}\right)^{4}\right)
\end{split}
\end{equation*}
with constants involved in the `O`-notation depending on $\epsilon_{2}$
only.

\end{proposition}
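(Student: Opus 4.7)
The plan is to recast Lemma \ref{lem:K2 explicit} in the dimensionless variables $u=r$, $v_i=r_i/\sqrt{\alpha}$, and $w=r_{12}/\alpha$, so that $K_2=(\alpha/\pi^2)\,\Phi(u,v_1,v_2,w)$ with
$$
\Phi(u,v_1,v_2,w)=\frac{\sqrt{(1-u^2-v_1^2)(1-u^2-v_2^2)}}{(1-u^2)^{3/2}}\,H(\rho),\qquad H(\rho):=\sqrt{1-\rho^2}+\rho\arcsin\rho,
$$
and $\rho=(w(1-u^2)+uv_1v_2)/\sqrt{(1-u^2-v_1^2)(1-u^2-v_2^2)}$. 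The hypothesis $|r|<1-\epsilon_2$ keeps $(1-u^2)^{-1}$ bounded in terms of $\epsilon_2$, while positive semidefiniteness of the matrix $\Omega$ from \eqref{eq:Omega def} forces $v_i^2\le 1-u^2$ and $|\rho|\le 1$. Hence $\Phi$ extends continuously to a bounded function on the corresponding compact admissible set $D\subset\mathbb{R}^4$, and the proposition reduces to the uniform estimate
$$
\bigl|\Phi(u,v_1,v_2,w)-P_2(u,v_1,v_2,w)\bigr|\le C(\epsilon_2)\,(u^4+v_1^4+v_2^4+w^4)\qquad\text{on }D,
$$
with $P_2:=1+\tfrac{1}{2}u^2-\tfrac{1}{2}v_1^2-\tfrac{1}{2}v_2^2+\tfrac{1}{2}w^2$.

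First I would verify that $P_2$ is the degree-$2$ Taylor polynomial of $\Phi$ at the origin. The coordinate-axis slices immediately give the diagonal Hessian entries: $\Phi(u,0,0,0)=(1-u^2)^{-1/2}=1+\tfrac{u^2}{2}+O(u^4)$; $\Phi(0,v_i,0,0)=\sqrt{1-v_i^2}=1-\tfrac{v_i^2}{2}+O(v_i^4)$; and $\Phi(0,0,0,w)=H(w)=1+\tfrac{w^2}{2}+O(w^4)$, using $H'(0)=\arcsin 0=0$ and $H''(0)=1$. The mixed second partials at the origin vanish: those involving a $v_i$ because $\partial_{v_i}$ annihilates both the front factor and $\rho$ at $v=w=0$, combined with $H'(0)=0$; and $\partial_u\partial_w\Phi|_0=0$ by a direct calculation using the explicit form $\Phi(u,0,0,w)=H(w)/\sqrt{1-u^2}$.

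The decisive structural observation is that $\Phi$ enjoys two independent parity symmetries, which together force every cubic term in its Taylor expansion at the origin to vanish. First, $\Phi(u,-v_1,-v_2,w)=\Phi(u,v_1,v_2,w)$, since $v_i$ enters the front factor only through $v_i^2$ and enters the numerator of $\rho$ only through the product $v_1v_2$. Second, $\Phi(-u,v_1,v_2,-w)=\Phi(u,v_1,v_2,w)$, since $u$ appears in the front factor only through $u^2$, while the simultaneous sign flip sends $w(1-u^2)+uv_1v_2\mapsto -[w(1-u^2)+uv_1v_2]$ and hence $\rho\mapsto -\rho$, which is absorbed by $H(-\rho)=H(\rho)$. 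Every cubic monomial in $(u,v_1,v_2,w)$ has odd total degree in $(v_1,v_2)$ or odd total degree in $(u,w)$, so one of the two symmetries kills its coefficient.

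To conclude, $\Phi-P_2$ has vanishing Taylor expansion through order three at the origin. Applying Taylor's theorem with remainder in a small ball around the origin (where $\Phi$ is smooth with all partial derivatives bounded in terms of $\epsilon_2$ alone) gives $|\Phi-P_2|\le C(\epsilon_2)(u^2+v_1^2+v_2^2+w^2)^2$, which the elementary bound $(a+b+c+d)^2\le 4(a^2+b^2+c^2+d^2)$ (applied with $a=u^2$, etc.) converts into the desired fourth-power estimate. Outside a fixed neighbourhood of the origin, compactness of $D$ bounds $|\Phi-P_2|$ from above while $u^4+v_1^4+v_2^4+w^4$ is bounded below by a positive constant, so the estimate holds there as well. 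The main obstacle is obtaining a fourth-power rather than third-power error; the symmetry argument delivers this cleanly, after which the Taylor-with-remainder bound combined with a compactness estimate is routine.
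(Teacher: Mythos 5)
Your proof is correct. It begins from the same starting point as the paper --- the closed form for $K_{2}$ from Lemma \ref{lem:K2 explicit}, rewritten in the scaled variables $u=r$, $v_{i}=r_{i}/\sqrt{\alpha}$, $w=r_{12}/\alpha$ and Taylor-expanded about the origin --- but your control of the error term is genuinely different. The paper expands each factor ($\mu$, $(1-r^{2})^{-3/2}$, $\rho$ and $G(\rho)$) separately to the required order and multiplies the expansions, absorbing cross terms such as $(r_{12}/\alpha)\cdot O\bigl(r^{3}+\cdots\bigr)$ into the quartic error by an implicit arithmetic--geometric mean step; the bookkeeping is left largely to the reader. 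You instead exploit the two exact parity symmetries $\Phi(u,-v_{1},-v_{2},w)=\Phi(u,v_{1},v_{2},w)$ and $\Phi(-u,v_{1},v_{2},-w)=\Phi(u,v_{1},v_{2},w)$ (the latter via $H(-\rho)=H(\rho)$), which kill every odd-degree Taylor coefficient at once; combined with the one-variable slices that identify the quadratic part (and the direct checks that the two surviving mixed coefficients, of $uw$ and $v_{1}v_{2}$, vanish --- both of which are verified correctly), a single application of Taylor's theorem with remainder near the origin, plus boundedness of $\Phi$ on the admissible set away from the origin, yields the quartic error with a constant depending only on $\epsilon_{2}$. Your route buys a cleaner and more tightly justified error estimate with no cross-term juggling, at the cost of having to check the two mixed second partials by hand; the paper's route is more explicitly computational and produces the same polynomial, but its $O$-manipulations (e.g.\ squaring the expansion of $\rho$) are less carefully argued than yours.
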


\begin{proof}
Note that if $r$, $\frac{r_{1}}{\sqrt{m}}$,
$\frac{r_{2}}{\sqrt{m}}$, and $\frac{r_{12}}{m}$ are {\em small},
then $\rho$ is small too. We may then expand $\rho$ and $\mu$ about
$r =0$, $\frac{r_{1}}{\sqrt{\alpha}}=0$,
$\frac{r_{2}}{\sqrt{\alpha}}=0$, $\frac{r_{12}}{\alpha}=0$:
\begin{equation}
\label{eq:rho Taylor}
\begin{split}
\rho &= \frac{r_{12}}{\alpha} \cdot
\left(1-(r^{2}+(r_{1}/\alpha)^{2})\right)^{-1/2}) \cdot
\left(1-(r^{2}+(r_{2}/\alpha)^{2})\right)^{-1/2}) \\&+
O\left(r^{3}+(r_{1}/\sqrt{\alpha})^{3}+ (r_{2}/\sqrt{\alpha})^{3} +
(r_{12}/\alpha)^{3}\right)
\\&= \frac{r_{12}}{\alpha} + O\left(r^{3}+(r_{1}/\sqrt{\alpha})^{3}+
(r_{2}/\sqrt{\alpha})^{3} + (r_{12}/\alpha)^{3}\right),
\end{split}
\end{equation}

Next we need to Taylor expand the function $G(\rho)$ as in
\eqref{eq:G(rho) def} about $\rho=0$:
\begin{equation*}
G(\rho)
= \frac{2}{\pi}\left(1+ \frac{1}{2}\rho^{2}\right) + O(\rho^{4}).
\end{equation*}
Substituting \eqref{eq:rho Taylor}, we obtain
\begin{equation*}
G(\rho) = \frac{2}{\pi}\left(1+
\frac{1}{2}(r_{12}/\alpha)^{2}\right) +
O\left(r^{4}+(r_{1}/\sqrt{\alpha})^{4}+ (r_{2}/\sqrt{\alpha})^{4} +
(r_{12}/\alpha)^{4}\right).
\end{equation*}

Next,
\begin{equation*}
\begin{split}
\mu &= \alpha\sqrt{1-(r^{2}+(r_{1}/\sqrt{\alpha})^{2})}\cdot
\sqrt{1-(r^{2}+(r_{2}/\sqrt{\alpha})^{2})}
\\&= \alpha\left(1-r^{2} - \frac{1}{2}(r_{1}/\sqrt{\alpha})^{2} -\frac{1}{2}(r_{2}/\sqrt{\alpha})^{2}
\right) \\&+\alpha O\left(r^{4}+(r_{1}/\sqrt{\alpha})^{4}+
(r_{2}/\sqrt{\alpha})^{4} + (r_{12}/\alpha)^{4}\right),
\end{split}
\end{equation*}
and  
\begin{equation*}
\frac{1}{(1-r^{2})^{3/2}} = 1 + \frac{3}{2}r^{2}+O(r^{4}).
\end{equation*}

Finally, substituting all the estimates above into \eqref{eq:K2
form} we obtain
\begin{equation*}
\begin{split}
K_{2;m}(t_{1}, t_{2})& = \frac{1}{2\pi} \cdot \left(1 +
\frac{3}{2}r^{2} \right) \cdot \alpha\left(1-r^{2} -
\frac{1}{2}\left(\frac{r_{1}}{\sqrt{\alpha}}\right)^{2}
-\frac{1}{2}\left(\frac{r_{2}}{\sqrt{\alpha}}\right)^{2} \right)\\ & \quad
\cdot \frac{2}{\pi}\left(1+ \frac{1}{2}(\frac{r_{12}}{\alpha})^{2}\right) +\alpha
O\left(r^{4}+\left(\frac{r_{1}}{\sqrt{\alpha}}\right)^{4}+
\left(\frac{r_{2}}{\sqrt{\alpha}}\right)^{4} +
\left(\frac{r_{12}}{\alpha}\right)^{4}\right)
\\&= \frac{\alpha}{\pi^{2}}\left(1+\frac{1}{2}r^{2}-\frac{1}{2}\left(r_{1}/\sqrt{\alpha}\right)^{2}
-\frac{1}{2}\left(r_{2}/\sqrt{\alpha}\right)^{2}
+\frac{1}{2}\left(r_{12}/\alpha\right)^{2} \right)
\\&\quad +\alpha O\Big(r^{4}+(r_{1}/\sqrt{\alpha})^{4}+
(r_{2}/\sqrt{\alpha})^{4} + (r_{12}/\alpha)^{4}\Big).
\end{split}
\end{equation*}
An inspection of each step reveals that all the expansions are valid
under the assumption that $|r|$ is bounded away from $1$.
\end{proof}

\section{Approximate Kac-Rice for computing the variance of nodal intersections}

\label{sec:var(Z_eta) kac-rice approx}

This section is entirely dedicated
to proving Proposition \ref{prop:var(Z) kac-rice approx}. Throughout the present
section we assume that $\epsilon_{0}>0$ is fixed, and $\eigen$ satisfies $|\widehat{\tau_{\eigen}}(4)| < 1-\epsilon_{0}$.

\subsection{Nodal intersections on short arcs}

\label{sec:short arcs}

Let $c_{0}>0$ be a small number (depending on $\epsilon_{0}$),
and divide our curve into short arcs
of size roughly $\frac{c_{0}}{\sqrt{\eigen}}$.
More precisely, let
$K=K_{\eigen}=\left\lfloor L \cdot \frac{\sqrt{\eigen}}{c_{0}} \right\rfloor + 1$,
$$\delta_{0} =\delta_{0;\eigen}= \frac{L}{K} \le \frac{c_{0}}{\sqrt{\eigen}},$$
and define the partition $I=\bigcup\limits_{i=1}^{K}I_{i}$ of $I=[0,L]$
into short intervals $$I_{i}:= [(i-1)\cdot \delta_{0},i\cdot \delta_{0}],$$ $i=1,\ldots, K$,
disjoint save for the overlaps at the endpoints. 
We will eventually choose $c_{0}$ sufficiently small so that the Kac-Rice formula will
hold on the short intervals (see Lemma \ref{lem:matrix nongen diag}), and the value of $r$ or of one of its derivatives in a ``singular 
cube" will be bounded away from $0$ 
(see Definition \ref{def:singular} and Lemma \ref{lem:on B either r or der big}).

For the future we record that, as $c_{0}>0$
is constant,
\begin{equation}
\label{eq:delta<<>>1/sqrt(m)}
\delta_{0} \asymp \frac{1}{\sqrt{\eigen}}.
\end{equation}

For $1\le i \le K$, let $\Zc_{i}$ be the number of nodal
intersections of $F_{m}$ with $\gamma(I_{i}),$ that is
$\Zc_{i}$ is the number of zeros of $f$ on $I_{i}$. We have a.s.
\begin{equation*}
\mathcal{Z} = \sum\limits_{i=1}^{K}\mathcal{Z}_{i},
\end{equation*}
so that
\begin{equation}
\label{eq:tot 2nd mom = sum 2nd mom} \E[\mathcal{Z}^{2}] =
\sum\limits_{i=1}^{K}\E[\mathcal{Z}_{i}^{2}]
+2\sum\limits_{i<j}\E[\mathcal{Z}_{i}\cdot
\mathcal{Z}_{j}];
\end{equation}
equivalently
\begin{equation}
\label{eq:tot var = sum covar} \var(\mathcal{Z}) =
\sum\limits_{i=1}^{K}\var(\mathcal{Z}_{i})
+2\sum\limits_{i<j}Cov\left(\mathcal{Z}_{i},
\mathcal{Z}_{j}\right).
\end{equation}

Later we will apply Kac-Rice \eqref{eq:Kac-Rice cov} to ``most" of the summands in \eqref{eq:tot var = sum covar} (see section
\ref{sec:var(Zi) kac-rice approx}) and bound the contribution of the rest of the summands;
integrating and summing these up will eventually establish the statement of Proposition \ref{prop:var(Z) kac-rice approx}.

\subsection{Nodal intersections variance on short arcs}

As a first goal, we will establish an estimate on the variance $\var(\mathcal{Z}_{i})$ of nodal intersections with
a short arc of $\gamma$; with the help of the latter we will be able to control the contribution
of any individual summand in \eqref{eq:tot var = sum covar}, via Cauchy-Schwartz
(Corollary \ref{cor:Cov(Zi,Zj)=O(1)}).

\begin{proposition}
\label{prop:var short = O(1)}
For every $0<\epsilon_{0}<1$ we can choose $c_{0}=c_{0}(\epsilon_{0})$ sufficiently small, such that
for any $\eigen$ with $|\widehat{\tau_{\eigen}}(4)| < 1-\epsilon_{0}$, we have
\begin{equation*}
\var(\mathcal{Z}_{i}) = O(1),
\end{equation*}
uniformly for $i\le K$,
where the constant involved in the ``O"-notation depends on $\epsilon_{0}$ and $c_{0}$ only.
\end{proposition}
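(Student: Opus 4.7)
The plan is to apply the exact Kac-Rice formula \eqref{eq:Kac-Rice cov} on each short interval $I_i$ separately. For this we must verify the nondegeneracy hypothesis: the $4\times 4$ covariance matrix $\Sigma(t_1,t_2)$ of $(f(t_1),f(t_2),f'(t_1),f'(t_2))$ must be nonsingular for every pair of distinct $t_1,t_2\in I_i$. This is precisely the content of the forthcoming Lemma \ref{lem:matrix nongen diag}: provided the scale parameter $c_0=c_0(\epsilon_0)>0$ is chosen sufficiently small and the hypothesis $|\widehat{\tau_{\eigen}}(4)|<1-\epsilon_0$ holds, $\Sigma(t_1,t_2)$ stays nonsingular throughout $I_i\times I_i$ off the diagonal. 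Kac-Rice then yields
\begin{equation*}
\var(\mathcal{Z}_i) \;=\; \iint_{I_i\times I_i}\bigl(K_2(t_1,t_2)-K_1(t_1)K_1(t_2)\bigr)\,dt_1dt_2 \;+\; \E[\mathcal{Z}_i],
\end{equation*}
and the task reduces to bounding each of the three contributions on the right by $O(1)$.

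Two of the three are immediate from the setup. By Lemma \ref{lem K1}, $\E[\mathcal{Z}_i]=\sqrt{2\eigen}\,\delta_0\le\sqrt{2}\,c_0=O(1)$, and since $K_1(t)\equiv\sqrt{2\eigen}$, the integral of $K_1(t_1)K_1(t_2)$ over $I_i\times I_i$ equals $2\eigen\,\delta_0^2\le 2c_0^2=O(1)$. The main task is therefore the upper bound for $\iint_{I_i\times I_i}K_2\,dt_1dt_2$.

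To estimate the integral of $K_2$ we pass to the rescaled process $\tilde f(s):=f(t_{i-1}+s/\sqrt{\eigen})$ on $s\in[0,\tilde\delta_0]$ with $\tilde\delta_0:=\sqrt{\eigen}\,\delta_0\le c_0$. The rescaled $\tilde f$ has unit variance and derivative variance $\tilde\alpha=2\pi^2$ (independent of $\eigen$), and a direct change of variables gives $\iint_{I_i^2}K_2=\iint_{[0,\tilde\delta_0]^2}\tilde K_2$, so it is enough to show that $\tilde K_2$ is uniformly bounded on $[0,c_0]^2$. Using the explicit formula of Lemma \ref{lem:K2 explicit} together with the Taylor expansions
\begin{equation*}
\tilde r(s_1,s_2)=1-\pi^2u^2+O(u^4),\quad \tilde r_j=O(u),\quad \tilde r_{12}=2\pi^2+O(u^2),\qquad u:=s_1-s_2,
\end{equation*}
one finds $1-\tilde r^2\asymp u^2$ while $\tilde\alpha(1-\tilde r^2)-\tilde r_j^2=O(u^4)$; the quantitative nondegeneracy input from Lemma \ref{lem:matrix nongen diag} promotes this to $\tilde\alpha(1-\tilde r^2)-\tilde r_j^2\asymp u^4$, with constant uniform in $\eigen$. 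Consequently $\mu=O(u^4)$ and $\mu/(1-\tilde r^2)^{3/2}=O(|u|)$, while the remaining factor $\sqrt{1-\rho^2}+\rho\arcsin\rho$ is dominated by $\pi/2$; altogether $\tilde K_2=O(1)$ uniformly on $[0,c_0]^2$, which finishes the proof.

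The crux is therefore Lemma \ref{lem:matrix nongen diag}: one must quantitatively rule out degeneracy of $\Sigma(t_1,t_2)$ on short arcs, uniformly in $\eigen$. This is precisely where the hypothesis $|\widehat{\tau_{\eigen}}(4)|<1-\epsilon_0$ enters in an essential way, since the fourth Fourier coefficient of $\tau_{\eigen}$ governs the quartic coefficient in the Taylor expansion of the underlying covariance $r_F$ along the curve; the excluded values $\pm 1$ correspond precisely to the two exceptional limiting angular measures supported on the axis and diagonal directions, along which $\Sigma$ would collapse and Kac-Rice would fail.
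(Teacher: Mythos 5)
Your proposal is correct and follows essentially the same route as the paper: apply the exact Kac--Rice formula on each short interval (justified by Lemma \ref{lem:matrix nongen diag}), bound $\E[\mathcal{Z}_i]$ and the $K_1K_1$ integral trivially by the choice $\delta_0\asymp 1/\sqrt{\eigen}$, and control $\iint_{I_i^2}K_2$ via the explicit formula of Lemma \ref{lem:K2 explicit} and Taylor expansion of $r$ about the diagonal (using $r_2(t,t)=0$, $r_{22}(t,t)=-\alpha$, $r_{222}(t,t)=0$). The only cosmetic difference is that you rescale by $\sqrt{\eigen}$ and prove $\tilde K_2=O(1)$, whereas the paper states the equivalent bound $K_2=O(\eigen)$ as Proposition \ref{prop:K2 diag =O(m)} and integrates it over a cube of area $\asymp 1/\eigen$.
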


Before proving Proposition \ref{prop:var short = O(1)} we draw the following corollary, as
announced above.

\begin{corollary}
\label{cor:Cov(Zi,Zj)=O(1)}
For every $0<\epsilon_{0}<1$ we can choose $c_{0}=c_{0}(\epsilon_{0})$ sufficiently small, such that
for any $\eigen$ with $|\widehat{\tau_{\eigen}}(4)| < 1-\epsilon_{0}$, we have
\begin{equation*}
Cov(\Zc_{i},\Zc_{j}) = O(1),
\end{equation*}
uniformly for $i,j\le K$,
where the constant involved in the ``O"-notation depends on $\epsilon_{0}$ and $c_{0}$ only.
\end{corollary}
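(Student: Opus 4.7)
The proof is an essentially immediate application of the Cauchy--Schwarz inequality for covariances, using Proposition \ref{prop:var short = O(1)} as a black box. The plan is as follows.

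First, I would recall the standard Cauchy--Schwarz bound for covariances: for any two square-integrable random variables $X,Y$ one has
\begin{equation*}
|\Cov(X,Y)| \le \sqrt{\Var(X)\cdot \Var(Y)}.
\end{equation*}
Applying this with $X=\Zc_{i}$ and $Y=\Zc_{j}$ gives
\begin{equation*}
|\Cov(\Zc_{i},\Zc_{j})| \le \sqrt{\Var(\Zc_{i})\cdot \Var(\Zc_{j})}.
\end{equation*}
To make this meaningful I of course need that $\Zc_{i}\in L^{2}$, which is automatic once $\Var(\Zc_{i})$ is finite; this is ensured by Proposition \ref{prop:var short = O(1)}.

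Next, given any $0<\epsilon_{0}<1$, I choose the small constant $c_{0}=c_{0}(\epsilon_{0})$ provided by Proposition \ref{prop:var short = O(1)}. Then for every $\eigen$ satisfying $|\widehat{\tau_{\eigen}}(4)|<1-\epsilon_{0}$ and every $1\le i\le K$, that proposition gives the uniform bound $\Var(\Zc_{i})=O(1)$, where the implied constant depends only on $\epsilon_{0}$ and $c_{0}$. Plugging this into the Cauchy--Schwarz bound above yields
\begin{equation*}
|\Cov(\Zc_{i},\Zc_{j})| \le \sqrt{\Var(\Zc_{i})\cdot \Var(\Zc_{j})} = O(1)
\end{equation*}
uniformly in $i,j\le K$, with the implied constant depending only on $\epsilon_{0}$ and $c_{0}$, as required.

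Since Cauchy--Schwarz is a general fact with no obstruction, there is no real ``hard step'' here: the entire content of the corollary is carried by the uniform variance estimate of Proposition \ref{prop:var short = O(1)}. The only thing to note is that the bound holds even when $i=j$ (trivially, since it reduces to the variance bound itself) and is symmetric in $i,j$, so it is legitimately uniform over all pairs in $\{1,\dots,K\}\times\{1,\dots,K\}$, which is precisely what is needed to control the off-diagonal sum $2\sum_{i<j}\Cov(\Zc_{i},\Zc_{j})$ appearing in \eqref{eq:tot var = sum covar} in subsequent sections.
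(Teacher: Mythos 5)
Your proposal is correct and is essentially identical to the paper's proof: both apply the Cauchy--Schwarz inequality $|\Cov(\Zc_{i},\Zc_{j})|\le\sqrt{\Var(\Zc_{i})\Var(\Zc_{j})}$ and then invoke the uniform bound $\Var(\Zc_{i})=O(1)$ from Proposition \ref{prop:var short = O(1)}. Nothing further is needed.
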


\begin{proof}[Proof of Corollary \ref{cor:Cov(Zi,Zj)=O(1)}]

Applying Cauchy-Schwartz we have
\begin{equation*}
Cov(\Zc_{i},\Zc_{j}) \le \sqrt{\var(\mathcal{Z}_{i}) \cdot \var(\mathcal{Z}_{j})}=O(1),
\end{equation*}
by Proposition \ref{prop:var short = O(1)}.

\end{proof}

To prove Proposition \ref{prop:var short = O(1)} we will need Lemma \ref{lem:matrix nongen diag}
and Proposition \ref{prop:K2 diag =O(m)} stated below.

\begin{lemma}
\label{lem:matrix nongen diag}

For every $0<\epsilon_{0}<1$ we can choose $c_{0}=c_{0}(\epsilon_{0})$ sufficiently small, such that
for any $\eigen$ with $|\widehat{\tau_{\eigen}}(4)| < 1-\epsilon_{0}$, the
matrix $\Sigma(t_{1},t_{2})$, defined in \eqref{eq:Sigma def}, is nonsingular for all $t_{1},t_{2} \in [0,L]^{2}$
with $$0 < |t_{2}-t_{1}| < \frac{c_{0}}{\sqrt{\eigen}}.$$

\end{lemma}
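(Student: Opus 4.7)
Setting $s := t_2 - t_1$, I aim to prove $\det\Sigma(t_1,t_2) > 0$ for $0<|s|<c_0/\sqrt{\eigen}$. The matrix $\Sigma(t_1,t_1)$ has rank $2$ (rows $1,2$ and rows $3,4$ each coincide), so I would first desingularize by an invertible $s$-dependent linear change of basis $P(s)$ sending $(f(t_1), f(t_2), f'(t_1), f'(t_2))$ to
\begin{equation*}
Y \;:=\; \Bigl(f(t_1),\; f'(t_1),\; \tfrac{2}{s^2}\bigl(f(t_2)-f(t_1)-sf'(t_1)\bigr),\; \tfrac{2}{s^2}\bigl(f'(t_2)+f'(t_1)-\tfrac{2}{s}(f(t_2)-f(t_1))\bigr)\Bigr).
\end{equation*}
Direct Taylor expansion shows $Y\to\bigl(f(t_1), f'(t_1), f''(t_1), \tfrac{1}{3}f'''(t_1)\bigr)$ as $s\to 0$, while a short cofactor computation gives $|\det P(s)|=4/s^4$; hence $\det\Sigma(s) = (s^8/16)\det\Sigma_Y(s)$, and it suffices to bound $\det\Sigma_Y(s)$ below.

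I would then rescale to the dimensionless variable $u:=s\sqrt{\eigen}$ and work with $\tilde f(u):=f(t_1+u/\sqrt{\eigen})$. Writing $r_F(h) = \int_{\Sc^1}\cos(2\pi\sqrt{\eigen}\langle\theta,h\rangle)\,d\tau_\eigen(\theta)$ and $\gamma(t_1+u/\sqrt{\eigen}) = \gamma(t_1) + (u/\sqrt{\eigen})\dot\gamma(t_1) + O(1/\eigen)$, each entry of the rescaled matrix $\tilde\Sigma_Y(u)$ is, up to $O(1/\sqrt{\eigen})$ curvature corrections, a polynomial in $u$ whose coefficients are moments $\int_{\Sc^1}\cos^k(\theta-\phi_0)\,d\tau_\eigen(\theta)$, with $\phi_0$ the angle of $\dot\gamma(t_1)$. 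The $\theta\mapsto\theta+\pi$ symmetry of $\vE$ renders $\tilde\Sigma_Y(0)$ block diagonal (after reordering the basis to $(\tilde f, \tilde f'', \tilde f', \tilde f'''/3)(0)$), with
\begin{equation*}
\det\tilde\Sigma_Y(0) \;=\; \tfrac{1}{9}(a_2 - a_1^2)(a_1 a_3 - a_2^2),\qquad a_k := (2\pi)^{2k}\!\!\int_{\Sc^1}\!\cos^{2k}(\theta-\phi_0)\,d\tau_\eigen(\theta).
\end{equation*}

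The crux is bounding $\det\tilde\Sigma_Y(0)$ below uniformly. The $\pi/2$-rotational invariance of $\vE$ forces $\widehat{\tau_\eigen}(k) = 0$ unless $4\mid k$, so using $\cos^4 x = (3+4\cos 2x+\cos 4x)/8$ and $\cos^6 x = (10+15\cos 2x+6\cos 4x+\cos 6x)/32$ one obtains $a_1 = 2\pi^2$, $a_2 = 2\pi^4(3+c)$, $a_3 = 4\pi^6(5+3c)$, where $c := \Re\bigl[\overline{\widehat{\tau_\eigen}(4)}\,e^{4i\phi_0}\bigr]$, and hence
\begin{equation*}
a_2 - a_1^2 = 2\pi^4(1+c),\qquad a_1 a_3 - a_2^2 = 4\pi^8(1-c^2).
\end{equation*}
Since $|c|\le|\widehat{\tau_\eigen}(4)|<1-\epsilon_0$, both factors are bounded below by $2\pi^4\epsilon_0$ and $4\pi^8\epsilon_0$ respectively, uniformly in $\phi_0$ and in $\eigen$.

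To conclude, every partial $\partial_{u_1}^i\partial_{u_2}^j \tilde r$ is bounded by $(2\pi)^{i+j}$ times a $\tau_\eigen$-moment of a bounded trigonometric polynomial, so the entries of $\tilde\Sigma_Y(u)$ are Lipschitz in $u$ on $|u|\le 1$ with constants independent of $\eigen$. Hence $|\det\tilde\Sigma_Y(u) - \det\tilde\Sigma_Y(0)|\le C|u|$, and choosing $c_0 = c_0(\epsilon_0)$ small enough gives $\det\tilde\Sigma_Y(u) > 0$ for $|u|<c_0$; unwinding the rescaling yields $\det\Sigma(s) > 0$ for $0<|s|<c_0/\sqrt{\eigen}$. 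The main technical obstacle I anticipate is verifying that the $O(1/\sqrt{\eigen})$ curvature corrections to $\tilde\Sigma_Y$ remain uniform in both $\eigen$ and $t_1\in[0,L]$; this should reduce to the boundedness of $\ddot\gamma$ on $[0,L]$ and the uniform boundedness of all relevant trigonometric moments of $\tau_\eigen$.
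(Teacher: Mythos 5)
Your argument is correct in substance, and it takes a genuinely different route from the paper's. The paper (Appendix A) attacks $\det\Sigma=(1-r^{2})^{-1}\mu^{2}(1-\rho^{2})$ head on: it Taylor-expands $r,r_{1},r_{2},r_{12}$ about the diagonal to sixth order, checks by hand that everything in $\mu^{2}(1-\rho^{2})$ through order $(t_{2}-t_{1})^{9}$ cancels, and identifies the order-$10$ coefficient as $\frac{\alpha}{144}(\alpha^{2}-c_{m})(c_{m}^{2}+\alpha e_{m})$, with $c_{m},e_{m}$ the fourth and sixth diagonal derivatives of $r$, which are then evaluated through $\widehat{\tau_{\eigen}}(4)$ by the same $\cos^{4}$, $\cos^{6}$ identities you use. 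Your conjugation by the divided-difference matrix $P(s)$ instead pulls out the $s^{8}$ degeneracy structurally, as $(\det P)^{-2}$, and reduces the lemma to the uniform positivity of the Gram determinant of the $3$-jet $(f,f',f'',f''')$ at a single point, i.e.\ to the Hankel-type bound $(a_{2}-a_{1}^{2})(a_{1}a_{3}-a_{2}^{2})\gg\epsilon_{0}^{2}$; this explains conceptually why the order-$10$ cancellation in the paper's expansion is forced, and your leading term $\det\Sigma(s)\sim\frac{s^{8}}{144}(a_{2}-a_{1}^{2})(a_{1}a_{3}-a_{2}^{2})$ coincides exactly with what \eqref{eq:musqr*(1-rho^2)} gives after division by $1-r^{2}\sim 2\pi^{2}\eigen s^{2}$ (the constant displayed in Lemma \ref{lem:taylor exp det} itself carries a harmless factor-of-two/sign slip, so your computation is a useful independent check). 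What your route costs is exactly the point you flag at the end: one must prove that the entries of $\tilde\Sigma_{Y}(u)$ are Lipschitz at $u=0$ uniformly in $\eigen$ and $t_{1}$, which means expressing the divided-difference covariances through integral Taylor remainders of $\tilde r$ up to order roughly eight and verifying that curvature enters only at relative order $O(1/\sqrt{\eigen})$ (it does, because the odd derivatives of $r_{F}$ vanish at the origin); this is routine but must actually be written out. Finally, note that, exactly as in the paper's proof, the error terms are only dominated once $\eigen\gg_{\epsilon_{0}}1$, so the finitely many remaining values of $\eigen$ have to be absorbed into the choice of $c_{0}$.
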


The proof of Lemma \ref{lem:matrix nongen diag} is quite long and technical,
and is thereupon relegated to Appendix \ref{app:det>0}.

\begin{proposition}
\label{prop:K2 diag =O(m)} For $t_{1}\in [0,L]$ and $|t_{2}-t_{1}| <
\frac{c_{0}}{\sqrt{m}}$ one has the uniform estimate
\begin{equation*}
K_{2}(t_{1},t_{2}) = O(\eigen)
\end{equation*}
with constant depending on $c_{0}$ only.
\end{proposition}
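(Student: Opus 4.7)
\medskip

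The plan is to start from the explicit formula of Lemma~\ref{lem:K2 explicit},
\[
K_{2}(t_{1},t_{2}) = \frac{\mu \cdot \left(\sqrt{1-\rho^{2}}+\rho\arcsin\rho\right)}{\pi^{2}(1-r^{2})^{3/2}},
\]
and exploit a delicate cancellation in $\mu$ near the diagonal. Since $|\rho|\le 1$, the trigonometric factor is bounded by $1+\pi/2$, so it is enough to control $\mu/(1-r^{2})^{3/2}$. Setting $u:=t_{2}-t_{1}$, the strategy is to show that the numerator is $O(u^{4}\eigen^{3})$ while the denominator is $\gtrsim u^{3}\eigen^{3/2}$; combined with the hypothesis $|u|<c_{0}/\sqrt{\eigen}$, this will yield
\[
K_{2} \lesssim \frac{u^{4}\eigen^{3}}{u^{3}\eigen^{3/2}} = u\,\eigen^{3/2} \lesssim c_{0}\,\eigen,
\]
as claimed. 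Note that, by Lemma~\ref{lem:matrix nongen diag}, $\Sigma$ is non-singular in the range considered, so $\mu$ and $1-r^{2}$ are both strictly positive and the formula above is legitimate.

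The necessary Taylor expansions are organised around the phases $\phi_{\mu}:=2\pi\mu\cdot(\gamma(t_{2})-\gamma(t_{1}))$ entering $r,r_{1},r_{2}$. Since $|\mu|=\sqrt{\eigen}$, one has $|\phi_{\mu}|\le 2\pi c_{0}+O(c_{0}^{2})$ uniformly in $\mu\in\vE$, so for $c_{0}$ sufficiently small the $\cos$ and $\sin$ series can be truncated at any fixed order with controlled remainders. Using
\[
\gamma(t_{2})-\gamma(t_{1}) = u\,\dot\gamma(t_{1})+\tfrac{u^{2}}{2}\ddot\gamma(t_{1})+O(u^{3}),
\]
the lattice identities $\tfrac{1}{N_{\eigen}}\sum_{\mu\in\vE}\mu_{i}\mu_{j}=\tfrac{\eigen}{2}\delta_{ij}$ (which follow from the invariance of $\vE$ under sign changes and swap of coordinates), and the arc-length relation $\dot\gamma\cdot\ddot\gamma=0$, a direct expansion gives
\[
1-r = \pi^{2}u^{2}\eigen+O(u^{4}\eigen^{2}),\qquad r_{i}=2\pi^{2}u\,\eigen+O(u^{3}\eigen^{2}),\quad i=1,2.
\]
In particular, $1-r^{2}=(1-r)(1+r)\ge\pi^{2}u^{2}\eigen$ once $c_{0}$ is small, and hence $(1-r^{2})^{3/2}\gtrsim u^{3}\eigen^{3/2}$.

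The crux is the cancellation in the numerator: both $\alpha(1-r^{2})$ and $r_{i}^{2}$ have the same leading term $4\pi^{4}u^{2}\eigen^{2}$. Combining the expansions above (and recalling $\alpha=2\pi^{2}\eigen$) produces
\[
\alpha(1-r^{2})-r_{i}^{2} = O(u^{4}\eigen^{3}),\qquad i=1,2,
\]
so that $\mu=\sqrt{\alpha(1-r^{2})-r_{1}^{2}}\cdot\sqrt{\alpha(1-r^{2})-r_{2}^{2}}=O(u^{4}\eigen^{3})$. Assembling the numerator and denominator bounds completes the proof. The main technical obstacle is to carry the expansions of $r$ and $r_{i}$ far enough that the inevitable cross-terms produced by squaring $r_{i}$ (and multiplying $1-r^{2}$ by $\alpha$) remain of size $O(u^{4}\eigen^{3})$; without this bookkeeping, the individual terms of size $u^{2}\eigen^{2}$ only yield the crude bound $K_{2}\lesssim \eigen^{1/2}/u$, which is \emph{not} integrable near the diagonal.
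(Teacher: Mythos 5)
Your proof is correct and follows essentially the same route as the paper: bound the factor $\sqrt{1-\rho^{2}}+\rho\arcsin\rho$ by an absolute constant, Taylor expand $r$, $r_{1}$, $r_{2}$ near the diagonal using the arc-length parametrization and the lattice symmetries, and exploit the cancellation of the leading terms in $\alpha(1-r^{2})-r_{i}^{2}$ against the vanishing of $(1-r^{2})^{3/2}$. The only (harmless) difference is bookkeeping: you carry the expansions one order further (using that the cubic term of $1-r$ and the quadratic term of $r_{i}$ vanish on the diagonal) to get $\alpha(1-r^{2})-r_{i}^{2}=O(\eigen^{3}u^{4})$, whereas the paper packages the same cancellation as $1-\tfrac{r_{i}^{2}}{\alpha(1-r^{2})}=O(\eigen^{1/2}|u|)$ with a cruder remainder; both yield $O(\eigen)$.
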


The proof of Proposition \ref{prop:K2 diag =O(m)} is deferred to
section \ref{sec:K2 diag =O(m)}.

\begin{proof}[Proof of Proposition \ref{prop:var short = O(1)} assuming Lemma \ref{lem:matrix nongen diag}
and Proposition \ref{prop:K2 diag =O(m)}]

Thanks to Lemma \ref{lem:matrix nongen diag} the covariance matrix
$\Sigma(t_{1},t_{2})$ is nonsingular for all $(t_{1},t_{2})\in I_{i}^{2}$
with $t_{2}\ne t_{1}$, so, by the discussion in section \ref{sec:Kac-Rice var} above we may apply Kac-Rice
\eqref{eq:Kac-Rice cov} to $I_{i} \subseteq I$ to write
\begin{equation}
\label{eq:var Ii Kac-Rice}
\var(\Zc_{i}) = \int\limits_{I_{i}\times I_{i}}
(K_{2}(t_{1},t_{2})-K_{1}(t_{1})K_{1}(t_{2}))dt_{1}dt_{2} + \E[\Zc_{i}].
\end{equation}
Applying Proposition \ref{prop:K2 diag =O(m)} and the Kac-Rice formula \eqref{eq:Kac-Rice expectation}
for computing the expected number of zeros on $I_{i}$
\begin{equation*}
\E[\Zc_{i}] = \int\limits_{I_{i}} K_{1}(t)dt \ll  \sqrt{\eigen} \cdot \delta_{0}
\end{equation*}
(see Lemma \ref{lem K1}) to \eqref{eq:var Ii Kac-Rice} yields
\begin{equation*}
\var(\Zc_{i}) \ll \eigen \cdot \delta_{0}^{2} + \sqrt{\eigen} \cdot \delta_{0} \ll 1,
\end{equation*}
bearing in mind \eqref{eq:delta<<>>1/sqrt(m)}. This concludes the proof of the present proposition.

\end{proof}

\subsection{Proof of Proposition \ref{prop:var(Z) kac-rice approx}}

\label{sec:var(Zi) kac-rice approx}

Recalling the notation from section \ref{sec:short arcs}
we now divide the domain of the integration, namely, the cube $S:=I^{2}=[0,L]^{2}$ into
small cubes $S_{ij}=I_{i}\times I_{j}$ of side $\delta_{0}$; some of the latter
will be designated as ``singular" and the rest as ``nonsingular".
Let $\epsilon_{1}>0$ be a small number that will be fixed till the end
(e.g. $\epsilon_{1}=\frac{1}{100}$ is sufficient).

\begin{definition}(Singular and nonsingular cubes and sets.)

\label{def:singular}

\begin{enumerate}
\item We call a point $(t_{1},t_{2})\in [0,L]^{2}$ {\em singular}
if either $|r (t_{1},t_{2})| > \epsilon_{1}$ or
$|r_{1}(t_{1},t_{2})| > \epsilon_{1}\cdot \sqrt{m}$ or
$|r_{2}(t_{1},t_{2})| > \epsilon_{1}\cdot \sqrt{m}$ or
$|r_{12}(t_{1},t_{2})| > \epsilon_{1}\cdot m$.

\item Let
$$S_{ij} = I_{i}\times I_{j} = [i\delta_{0}, (i+1)\delta_{0}] \times [j\delta_{0}, (j+1)\delta_{0}]$$ be a cube in $[0,L]^{2}$.
We say that $S_{ij}$ is a singular cube if it contains a
singular point.

\item The union of all the singular cubes is the singular set
$$B=B_{\eigen} = \bigcup\limits_{S_{ij}\text{ singular}}S_{ij}.$$

\end{enumerate}

\end{definition}

Note that outside the singular set $\Sigma(t_{1},t_{2})$ is
nonsingular (provided that $\epsilon_{1}$ is chosen sufficiently
small) by \eqref{eq: mu def}, \eqref{eq:rho def} and
\eqref{eq:Delta(rho) def}; we are thereupon allowed to apply the
Kac-Rice formula on $S\setminus B$; in particular for all $i,j$ with
$S_{i,j}\cap Int (B)=\emptyset$ (this implies $i\ne j$):
\begin{equation*}
\E[\mathcal{Z}_{i} \mathcal{Z}_{j}] =
\int\limits_{S_{ij}}K_{2}(t_{1},t_{2})dt_{1}dt_{2}.
\end{equation*}

We plan to approximate the $2$-point correlation function as
the corresponding sum of powers of $r$ and its derivatives; by
Proposition \ref{prop:K2m expansion r moments} we are allowed to do
so unless $r$ is big, and we will bound the contribution of the
domain where it is.

\begin{lemma}
\label{lem:on B either r or der big} If $S_{ij}\subseteq
B$ is singular, then for all $(t_{1},t_{2})\in S_{ij}$ either
$r(t_{1},t_{2})>\epsilon_{1}/2$ or the analogous statement holds for
one of the derivatives in the definition of singular point
(Definition \ref{def:singular} (i)).
\end{lemma}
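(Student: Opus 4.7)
The plan is to establish Lipschitz-type bounds for each of the four quantities $r$, $r_1$, $r_2$, $r_{12}$ on $[0,L]^2$, with Lipschitz constants controlled by appropriate powers of $\sqrt{m}$, and then exploit the fact that $S_{ij}$ has diameter $O(\delta_0)=O(c_0/\sqrt{m})$. If the oscillation of each quantity across $S_{ij}$ can be forced to be at most half the corresponding threshold from Definition~\ref{def:singular} by taking $c_0$ sufficiently small, then the quantity that exceeds its threshold at the singular point $(t_1^*,t_2^*)$ will remain above half that threshold throughout $S_{ij}$, which is exactly the conclusion.

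To derive the Lipschitz bounds I would use that
\[
r_F(x)=\frac{1}{N_m}\sum_{\mu\in\vE}\cos(2\pi\langle\mu,x\rangle)
\]
is a trigonometric polynomial whose frequencies have modulus $\sqrt{m}$, which gives the uniform pointwise estimate $\|\partial^\alpha r_F\|_\infty = O(m^{|\alpha|/2})$. Combining this with the chain rule applied to $r(t_1,t_2)=r_F(\gamma(t_1)-\gamma(t_2))$, and using that $\dot\gamma,\ddot\gamma,\dddot\gamma$ are bounded by constants depending only on the fixed curve $\curve$, one obtains
\[
\bigl|\nabla_{(t_1,t_2)}\, r\bigr|=O(\sqrt{m}),\qquad |\nabla r_1|,\,|\nabla r_2|=O(m),\qquad |\nabla r_{12}|=O(m^{3/2}).
\]

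Now the mean value theorem across a cube of side $\delta_0 \asymp c_0/\sqrt{m}$ yields oscillations bounded respectively by $O(c_0)$, $O(c_0\sqrt{m})$, $O(c_0\sqrt{m})$, and $O(c_0 m)$, each one precisely a factor $O(c_0)$ smaller than the corresponding threshold $\epsilon_1$, $\epsilon_1\sqrt{m}$, $\epsilon_1\sqrt{m}$, $\epsilon_1 m$ in Definition~\ref{def:singular}. Choosing $c_0$ small enough (depending on $\epsilon_1$ and on $\curve$) that each of these $O(c_0)$ bounds falls below $\epsilon_1/2$ finishes the argument: whichever of the four inequalities in Definition~\ref{def:singular}(i) held at the singular point will persist, with the threshold halved, at every point of $S_{ij}$.

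No serious obstacle is expected; the main chore is the bookkeeping of the chain-rule expansion for the gradient of $r_{12}$, which brings in $\dddot\gamma$ and third-order partials of $r_F$, but both are controlled uniformly in $m$ by the above discussion. Note that this argument uses only the smoothness of $\curve$ and is insensitive to the hypothesis on $\widehat{\tau_m}(4)$, which enters only in the nondegeneracy Lemma~\ref{lem:matrix nongen diag}.
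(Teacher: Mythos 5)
Your proof is correct and follows essentially the same route as the paper: the paper's (very terse) argument likewise observes that the suitably rescaled covariance function and its derivatives are Lipschitz with constants uniform in $m$, so that over a cube of side $\delta_{0}\asymp c_{0}/\sqrt{m}$ the oscillation of each quantity is a factor $O(c_{0})$ of its threshold in Definition~\ref{def:singular}, and then chooses $c_{0}$ small. Your write-up simply makes explicit the derivative bookkeeping ($\|\partial^{\alpha}r_{F}\|_{\infty}=O(m^{|\alpha|/2})$ plus the chain rule along $\gamma$) that the paper leaves to the reader, and your remark that the hypothesis on $\widehat{\tau_{m}}(4)$ plays no role here is accurate.
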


\begin{proof}
The statement for $c_{0}$ sufficiently small follows from the fact
that the scaled covariance function $r_{F}(y/\sqrt{\eigen})$ of the
ambient field $F$ and its derivatives are Lipschitz with a universal
constant (independent of $\eigen$) (as it is easy to check, first
for the individual function $x\mapsto \cos(2\pi \langle \mu,x
\rangle)$, and then for their average), and thus the same holds for
$r$.
\end{proof}

\begin{lemma}
\label{lem:num sing cubes = O(r^4*m)} The total area of the
singular set is
\begin{equation*}
\meas(B)=O\left(N_{\eigen}^{-3/2} \right).
\end{equation*}
\end{lemma}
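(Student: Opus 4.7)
The plan is a direct reduction to fourth-moment estimates for $r$ and its partial derivatives along $\gamma$. The argument has three steps.

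First, I would use Lemma \ref{lem:on B either r or der big} to pass from the union-of-cubes description of $B$ to an ambient super-level set description. That lemma says that if a cube $S_{ij}$ is singular, then \emph{throughout} $S_{ij}$ at least one of the four normalized quantities
\begin{equation*}
|r|,\qquad \frac{|r_{1}|}{\sqrt{\eigen}},\qquad \frac{|r_{2}|}{\sqrt{\eigen}},\qquad \frac{|r_{12}|}{\eigen}
\end{equation*}
exceeds $\epsilon_{1}/2$ (after fixing $c_{0}$ small enough to absorb the universal Lipschitz constant of the scaled covariance). Consequently,
\begin{equation*}
B\;\subseteq\; \bigl\{|r|>\epsilon_{1}/2\bigr\}\cup\bigl\{|r_{1}|>(\epsilon_{1}/2)\sqrt{\eigen}\bigr\}\cup\bigl\{|r_{2}|>(\epsilon_{1}/2)\sqrt{\eigen}\bigr\}\cup\bigl\{|r_{12}|>(\epsilon_{1}/2)\eigen\bigr\}.
\end{equation*}

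Second, I would apply Chebyshev's inequality at the fourth-moment level to each of the four super-level sets. This gives
\begin{equation*}
\meas(B)\;\ll_{\epsilon_{1}}\; \iint_{[0,L]^{2}}\!\left(r^{4}+\frac{r_{1}^{4}}{\eigen^{2}}+\frac{r_{2}^{4}}{\eigen^{2}}+\frac{r_{12}^{4}}{\eigen^{4}}\right)dt_{1}\,dt_{2}.
\end{equation*}
The fourth power is the natural exponent here: we will need exactly this many factors to extract enough cancellation from the lattice sums defining $r$ and its derivatives to beat the naive diagonal contribution.

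Third, I would invoke the fourth-moment estimates proved in section \ref{sec:4th moment r and der along gamma} (the forward-referenced Lemma \ref{lem:4th moment r and der along gamma}), which yield that each of the four integrals above is $O(N_{\eigen}^{-3/2})$ under the hypothesis that $|\widehat{\tau_{\eigen}}(4)|$ stays bounded away from $\pm 1$. Combining this with the previous displays completes the proof.

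The main obstacle is not contained in this lemma itself, which is a clean reduction, but rather in the fourth-moment bound that is being invoked. Proving that ensemble requires the arithmetic input of section \ref{sec:r,r1,r2,r12 2nd mom}, in particular the $o(N_{\eigen})$ bound on the Riesz-type energy \eqref{Riesz energy} and the control on the number of pairs $(\mu,\mu')\in\vE\times\vE$ with prescribed inner product via Mordell's theorem; the non-accumulation hypothesis on $\widehat{\tau_{\eigen}}(4)$ is what prevents the fourth-moment quadrilateral sums (``$\mu+\mu'-\mu''-\mu'''=0$'' with extra angular weights) from producing a main term that would ruin the exponent $3/2$.
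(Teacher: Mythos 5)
Your proposal is correct and follows essentially the same route as the paper: Lemma \ref{lem:on B either r or der big} to place $B$ inside a union of super-level sets, Chebyshev--Markov at the fourth moment, and then Lemma \ref{lem:4th moment r and der along gamma}. One small correction to your closing commentary: the fourth-moment bounds of Lemma \ref{lem:4th moment r and der along gamma} require only the nowhere-vanishing curvature (via the oscillatory-integral Lemma \ref{osc int curve} and the lattice-point count of Lemma \ref{lem:sum 1/mu1+...mu4}), not the non-accumulation hypothesis on $\widehat{\tau_{\eigen}}(4)$, which enters elsewhere (in the non-degeneracy of the covariance matrix, Lemma \ref{lem:matrix nongen diag}).
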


\begin{proof}
We apply the Chebyshev-Markov inequality on the measure of $B$.
Lemma \ref{lem:on B either r or der big} shows that it is bounded
from above by
\begin{equation*}
\operatorname{meas}(B) \ll \int\limits_{0}^{L}\bigg(
r(t_{1},t_{2})^{4}+\frac{1}{m^{2}}r_1(t_{1},t_{2})^{4}
+\frac{1}{m^{2}}r_2(t_{1},t_{2})^{4} +
\frac{1}{m^{4}}r_{12}(t_{1},t_{2})^{4} \bigg)dt_{1}dt_{2},
\end{equation*}
which is small by Lemma \ref{lem:4th moment r and der along gamma}
(which is independent of the arguments of the present section).
\end{proof}

Recall that $B$ consists of cubes of side length $\delta\asymp \frac{1}{\sqrt{\eigen}}$ 
(see \eqref{eq:delta<<>>1/sqrt(m)}).
Corollary \ref{cor:Cov(Zi,Zj)=O(1)} implies that the number of singular cubes is $\ll \frac{\eigen}{N_{\eigen}^{3/2}}$
and, teamed with Lemma \ref{lem:num sing cubes = O(r^4*m)}, yields the following estimate on the total
contribution of the singular domain $B$.
\begin{corollary}
\label{cor:tot contr sing} The total contribution of the singular
set is:
\begin{equation*}
\left| \sum\limits_{S_{ij}\text{ singular}}\Cov\left(\mathcal{Z}_{i},
\mathcal{Z}_{j}\right)\right| = O(m\cdot N_{m}^{-3/2}).
\end{equation*}
\end{corollary}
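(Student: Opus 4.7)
The plan is to combine two already-established ingredients to produce a bound of the form (number of singular cubes) $\times$ (uniform bound per covariance). The needed inputs are Lemma \ref{lem:num sing cubes = O(r^4*m)}, which gives $\meas(B) = O(N_{\eigen}^{-3/2})$, and Corollary \ref{cor:Cov(Zi,Zj)=O(1)}, which gives the uniform bound $|\Cov(\Zc_i,\Zc_j)| = O(1)$ valid for every pair $i,j$ (including the diagonal $i=j$).

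The first step is to convert the area bound into a count of singular cubes. Each $S_{ij} = I_i\times I_j$ has side length $\delta_0 \asymp 1/\sqrt{\eigen}$ by \eqref{eq:delta<<>>1/sqrt(m)}, hence area $\delta_0^2 \asymp 1/\eigen$. Since the cubes partition $[0,L]^2$ up to boundaries of measure zero, and $B$ is by definition the union of the singular cubes, one gets
\[
\#\{(i,j) : S_{ij}\text{ singular}\} \;\le\; \frac{\meas(B)}{\delta_0^2} \;=\; O\!\left(\frac{\eigen}{N_{\eigen}^{3/2}}\right).
\]

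The second step is simply a triangle inequality:
\[
\left|\sum_{S_{ij}\text{ singular}} \Cov(\Zc_i,\Zc_j)\right|
\;\le\; \#\{\text{singular cubes}\} \cdot \max_{i,j}|\Cov(\Zc_i,\Zc_j)|
\;=\; O\!\left(\frac{\eigen}{N_{\eigen}^{3/2}}\right),
\]
the final estimate invoking Corollary \ref{cor:Cov(Zi,Zj)=O(1)}. This is the claim.

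There is no substantive obstacle here: the genuine work has already been carried out elsewhere — the Kac-Rice estimate on short arcs (Proposition \ref{prop:var short = O(1)}) feeding Corollary \ref{cor:Cov(Zi,Zj)=O(1)}, and the fourth-moment estimate underlying Lemma \ref{lem:num sing cubes = O(r^4*m)} — and the present corollary is a purely combinatorial bookkeeping consequence. The only small point worth flagging when writing this out is that the uniform $O(1)$ covariance bound must apply to diagonal cubes $S_{ii}$ as well, but this is already provided by Proposition \ref{prop:var short = O(1)} itself, so no additional argument is required.
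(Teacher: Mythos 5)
Your proposal is correct and follows essentially the same route as the paper: the authors likewise divide the measure bound $\meas(B)=O(N_{\eigen}^{-3/2})$ from Lemma \ref{lem:num sing cubes = O(r^4*m)} by the cube area $\delta_0^2\asymp 1/\eigen$ to count the singular cubes, and then multiply by the uniform $O(1)$ covariance bound of Corollary \ref{cor:Cov(Zi,Zj)=O(1)}. Your remark that the diagonal cubes are covered by Proposition \ref{prop:var short = O(1)} is a correct and worthwhile clarification of a point the paper leaves implicit.
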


\begin{proof}[Proof of Proposition \ref{prop:var(Z) kac-rice approx}]

Consider the equality \eqref{eq:tot var = sum covar} and apply
Kac-Rice on every nonsingular cube (i.e. use \eqref{eq:CovIiIj disjoint}
for those $I_{i}$ and $I_{j}$ such that $S_{ij}$ is not lying in $B$, bearing in mind that 
for all $(t_{1},t_{2})\in S_{ij}$, $\Sigma(t_{1},t_{2})$ is nonsingular). We then
obtain
\begin{equation*}
\begin{split}
\var(\mathcal{Z}) &= \int\limits_{S\setminus B}(K_{2}(t_{1},t_{2})-K_{1}(t_{1})K_{1}(t_{2}))dt_{1}dt_{2} + \sum\limits_{S_{ij}
\text{ singular}} Cov(\mathcal{Z}_{i}, \mathcal{Z}_{j})
\\&= \int\limits_{S\setminus B}(K_{2}(t_{1},t_{2})-K_{1}(t_{1})K_{1}(t_{2}))dt_{1}dt_{2} +
O(\eigen\cdot N_{m}^{-3/2}),
\end{split}
\end{equation*}
by Corollary \ref{cor:tot contr sing}. We finally use the expansion in Proposition
\ref{prop:K2m expansion r moments} for $K_{2}$ valid outside of $B$
(the latter of the two equivalent forms),
and use Lemma \ref{lem:4th moment r and der along gamma} again for bounding
the contribution of the error term in
\eqref{eq:K2m expansion r moments}, together with the everywhere boundedness of
the integrand on the rhs of \eqref{eq:var(Z) kac-rice approx} to conclude the proof.
\end{proof}

\subsection{Proof of Proposition \ref{prop:K2 diag =O(m)}}

\label{sec:K2 diag =O(m)}

\begin{proof}
From Lemma ~\ref{eq:K2 explicit}, since $1 \le G\le \frac{\pi}{2}$,
\begin{equation*}
K_{2}(t_{1},t_{2}) \ll \frac{1}{(1-r^{2})^{3/2}} \cdot \mu \ll
\frac{1}{(1-r)^{3/2}}\sqrt{\alpha(1-r^{2})-r_{1}^{2}}\cdot
\sqrt{\alpha(1-r^{2})-r_{2}^{2}}.
\end{equation*}
Note that
\begin{equation}
\label{eq:K2 bnd 1/sqrt(1-r)}
\begin{split}
&\frac{1}{(1-r^{2})^{3/2}}\sqrt{\alpha(1-r^{2})-r_{1}^{2}}\cdot
\sqrt{\alpha(1-r^{2})-r_{2}^{2}}
\\&= \frac{\alpha}{\sqrt{1-r^{2}}} \sqrt{1-\frac{r_{1}^{2}}{\alpha(1-r^{2})}}\cdot
\sqrt{1-\frac{r_{2}^{2}}{\alpha(1-r^{2})}} \\&\ll
\frac{\alpha}{\sqrt{1-r}}
\sqrt{1-\frac{r_{1}^{2}}{\alpha(1-r^{2})}}\cdot
\sqrt{1-\frac{r_{2}^{2}}{\alpha(1-r^{2})}} \le
\frac{\alpha}{\sqrt{1-r}}.
\end{split}
\end{equation}

The diagonal cube $S=S_{ij}$ contains a point of the form
$(t_{1},t_{1})$. We may Taylor expand the integrand
$K_{2}(t_{1},t_{2})$ for $(t_{1},t_{2})\in S$ about $(t_{1},t_{2})$
as a function of $t_{2}$, $t_{1}$ {\em fixed}, and assuming WLOG
$t_{2}>t_{1}$.

To expand $r$ we differentiate and evaluate the derivatives at the
diagonal $t_{2}=t_{1}$: The first derivative $r_2=\partial
r/\partial t_2$ is
\begin{equation*}
r_{2} = -\nabla r_{F_{m}}(\gamma(t_{1})-\gamma(t_{2})) \cdot
\dot{\gamma}(t_{2}),
\end{equation*}
and on the diagonal 
\begin{equation}
\label{eq:r22(t1,t2)}
r_{2}(t,t)=0. 
\end{equation}
The second derivative $r_{22} =
\partial^2 r/\partial t_2^2$ is
\begin{equation*}
r_{22} = \dot{\gamma}(t_{2})^{t}\cdot
H_{r_{F_{m}}}(\gamma(t_{1})-\gamma(t_{2})) \cdot \dot{\gamma}(t_{2})
- \nabla r_{F_{m}}(\gamma(t_{1})-\gamma(t_{2})) \cdot
\ddot{\gamma}(t_{2}),
\end{equation*}
on the diagonal $r_{22}(t,t)=-\alpha$. The third derivative is
\begin{equation}
\label{eq:r222(t1,t2)}
\begin{split}
r_{222} &= \frac{\partial}{\partial t_{2}}\left[
\dot{\gamma}(t_{2})^{t}\cdot
H_{r_{F_{m}}}(\gamma(t_{1})-\gamma(t_{2})) \cdot \dot{\gamma}(t_{2})
\right]\\&+ \dot{\gamma}(t_{2})\cdot
H_{r_{F_{m}}}(\gamma(t_{1})-\gamma(t_{2})) \cdot
\ddot{\gamma}(t_{2}) -\nabla r_{F_{m}}(\gamma(t_{1})-\gamma(t_{2}))
\cdot \dddot{\gamma}(t_{2})
\\&=  \dot{\gamma}(t_{2})^{t}\cdot \frac{\partial}{\partial t_{2}}\left[H_{r_{F_{m}}}(\gamma(t_{1})-\gamma(t_{2}))\right] \cdot \dot{\gamma}(t_{2}) \\&+ 3\dot{\gamma}(t_{2})^{t}\cdot H_{r_{F_{m}}}(\gamma(t_{1})-\gamma(t_{2})) \cdot \ddot{\gamma}(t_{2})
-\nabla r_{F_{m}}^{t}(\gamma(t_{1})-\gamma(t_{2})) \cdot
\dddot{\gamma}(t_{2}),
\end{split}
\end{equation}
and on the diagonal
\begin{equation}
\label{eq:r222 diag=0}
r_{222}(t_{1},t_{1}) = -3\alpha\dot{\gamma}(t_{2})^{t}\cdot
\ddot{\gamma}(t_{2}) = 0,
\end{equation}
since the acceleration is always orthogonal to the velocity ($t$ is
the arc-length parameter). Moreover, the Hessian satisfies $H\ll m$
and $\partial H/\partial t_2\ll m^{3/2}$ everywhere, so that we have
\begin{equation*}
r_{222}(t_{1},t_{2})= O(m^{3/2})
\end{equation*}
everywhere.

The expansion of $r(t_{1},t_{2})$ around the diagonal $t_{2}=t_{1}$,
valid for $0 < t_{2}-t_{1} \le \frac{c_{0}}{\sqrt{\eigen}}$ with
$c_{0}$ sufficiently small, is
\begin{equation*}
r= 1-\frac{\alpha}{2}(t_{2}-t_{1})^{2}+O(m^{3/2}(t_{2}-t_{1})^{3}),
\end{equation*}
and
\begin{equation}
\label{eq:(1-r^2) taylor diag}
\begin{split}
1-r^{2} &= (1-r)(1+r) \\&= \left[ \frac{\alpha}{2}(t_{2}-t_{1})^{2}
+
O\left(m^{3/2}(t_{2}-t_{1})^{3}\right)\right]\left[2-\frac{\alpha}{2}(t_{2}-t_{1})^{2}
+ O\left(m^{3/2}(t_{2}-t_{1})^{3}\right)\right]
\\&= \alpha(t_{2}-t_{1})^{2}\left( 1+O(\sqrt{m}(t_{2}-t_{1}))\right),
\end{split}
\end{equation}
\begin{equation*}
r_{2}^{2} \approx r_{1}^{2} =
\alpha^{2}(t_{2}-t_{1})^{2}\left(1+O\left(m^{1/2}(t_{2}-t_{1})\right)
\right),
\end{equation*}
thus
\begin{equation*}
\frac{r_{1}^{2}}{\alpha(1-r^{2})} =
1+O\left(m^{1/2}(t_{2}-t_{1})\right),
\end{equation*}
and hence
\begin{equation*}
0 \le 1- \frac{r_{1}^{2}}{\alpha(1-r^{2})} =
O\left(m^{1/2}(t_{2}-t_{1})\right),
\end{equation*}
and the same estimate holds for $$1-
\frac{r_{2}^{2}}{\alpha(1-r^{2})}.$$

Consolidating all the estimates we conclude that \eqref{eq:K2 bnd
1/sqrt(1-r)} is uniformly bounded by
\begin{multline*}
\frac{\alpha}{\sqrt{1-r}}
\sqrt{1-\frac{r_{1}^{2}}{\alpha(1-r^{2})}}\cdot
\sqrt{1-\frac{r_{2}^{2}}{\alpha(1-r^{2})}} \\
\ll \frac{\alpha}{m^{1/2}(t_{2}-t_{1})}\cdot O(m^{1/2}(t_{2}-t_{1}))
= O(\eigen),
\end{multline*}
recalling that $\alpha=2\pi^{2}\eigen$.
\end{proof}

\section{Asymptotics for the second moments of the covariance function and its derivatives}

\label{sec:r,r1,r2,r12 2nd mom}

Recall that $r$ is the covariance function restricted to the curve
$\curve$:
\begin{equation}
r(t_1,t_2 )= r(\gamma(t_1),\gamma(t_2))
\end{equation}

\begin{proposition}\label{second mom of r}
If $\curve \subset \TT^2$ is a (smooth) curve with nowhere
vanishing curvature, then for all $\epsilon>0$
\begin{equation}\label{2moment of r}
\int_\curve \int_\curve r^2 =  \int_0^L\int_0^L  r(t_1,t_2)^2 dt_1
dt_2 = \frac {L^2}{N_\eigen} + O\left(\frac {
1}{N_\eigen^{2-\epsilon}}\right)
\end{equation}
\begin{equation}
\int_\curve \int_\curve \left|\frac 1{ 2\pi\sqrt{\eigen}} \frac
{\partial r}{\partial t_1} \right|^2  =  \frac {L^2}{2N_\eigen} +
O\left(\frac { 1}{N_\eigen^{2-\epsilon}}\right)
\end{equation}
and
\begin{equation}
\int_\curve \int_\curve  \left|\frac 1{4\pi^2 \eigen}
\frac{\partial^2 r}{\partial t_1 \partial t_2}\right|^2 =   \frac
{B_\curve(\vE)}{N_\eigen} + O\left(\frac {
1}{N_\eigen^{2-\epsilon}}\right)
\end{equation}
where
\begin{equation} \label{def of B}
B_\curve(\vE):= \int_{\curve}\int_{ \curve} \frac 1N_\eigen
\sum_{\mu \in \vE} \left\langle \frac{\mu}{|\mu|}, \dot \gamma(t_1)
\right\rangle^2 \cdot  \left\langle \frac{\mu}{|\mu|}, \dot
\gamma(t_2) \right\rangle^2 dt_1 dt_2.
\end{equation}
\end{proposition}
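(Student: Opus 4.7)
The three second-moment integrals will be handled by a single Fourier-analytic scheme. First I would expand $r$ along $\curve$ as
\begin{equation*}
r(t_1,t_2) \;=\; \frac{1}{N_\eigen}\sum_{\mu\in\vE} \cos\bigl(2\pi\langle \mu, \gamma(t_1)-\gamma(t_2)\rangle\bigr),
\end{equation*}
and obtain $r_1$, $r_2$, $r_{12}$ by differentiating under the sum, which produces additional angular weights $\langle \mu,\dot\gamma(t_j)\rangle$. Squaring and applying the product-to-sum identities $\cos A\cos B = \tfrac12(\cos(A-B)+\cos(A+B))$ and $\sin A\sin B = \tfrac12(\cos(A-B)-\cos(A+B))$ rewrites each squared quantity as a double sum over $(\mu,\mu')\in\vE\times\vE$ of terms of the form $\cos\bigl(2\pi\langle \mu\pm\mu',\gamma(t_1)-\gamma(t_2)\rangle\bigr)$ multiplied by a bounded product of angular inner products.

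Next I would separate the diagonal pairs $\mu\pm\mu'=0$ from the off-diagonal remainder. On the diagonal the cosine is identically $1$ and the $t_1,t_2$ integration reduces to a sum over $\mu\in\vE$ of a weight that depends only on $\dot\gamma$. For $\int\!\!\int r^2$ both diagonals contribute $L^2/(2N_\eigen)$ for a total of $L^2/N_\eigen$. For $\int\!\!\int(r_{12}/(4\pi^2\eigen))^2$ the surviving weights are exactly $\langle\mu/|\mu|,\dot\gamma(t_1)\rangle^2\,\langle\mu/|\mu|,\dot\gamma(t_2)\rangle^2$, reproducing $B_\curve(\vE)/N_\eigen$ by the definition \eqref{def of B}. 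For $\int\!\!\int(r_1/(2\pi\sqrt{\eigen}))^2$ the weights depend only on $t_1$, so one $t_2$-integration is trivial; the identity
\begin{equation*}
\frac{1}{N_\eigen}\sum_{\mu\in\vE}\langle \mu/|\mu|,e\rangle^2 \;=\; \tfrac12
\end{equation*}
for every unit vector $e$ — a consequence of $\widehat{\tau_\eigen}(\pm 2)=0$, which follows from the $\pi/2$-rotation invariance of $\vE$ — then gives the main term $L^2/(2N_\eigen)$.

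For each off-diagonal frequency $\xi = \mu\pm\mu'\neq 0$ one uses the factorisation
\begin{equation*}
\int_0^L\!\!\int_0^L \cos\bigl(2\pi\langle \xi,\gamma(t_1)-\gamma(t_2)\rangle\bigr)\,dt_1\,dt_2 \;=\; \biggl|\int_0^L e^{2\pi i\langle \xi,\gamma(t)\rangle}\,dt\biggr|^2,
\end{equation*}
and appeals to van der Corput / stationary phase: because $\curve$ has nowhere-vanishing curvature, $\bigl|\int_0^L e^{2\pi i\langle \xi,\gamma(t)\rangle}\,dt\bigr|\ll |\xi|^{-1/2}$, so the squared modulus is $O(1/|\xi|)$ uniformly in the direction of $\xi$. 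The accompanying angular weights remain uniformly bounded because the normalising factors $\sqrt{\eigen}$ and $\eigen$ in the statement cancel the $\sqrt{\eigen}$ coming from the trivial pointwise estimate $|\langle\mu,\dot\gamma\rangle|\le|\mu|=\sqrt{\eigen}$.

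The main obstacle is then the arithmetic bound
\begin{equation*}
\frac{1}{N_\eigen^2}\sum_{\substack{\mu,\mu'\in\vE\\ \mu\neq\pm\mu'}} \frac{1}{|\mu\pm\mu'|} \;\ll\; \frac{1}{N_\eigen^{2-\epsilon}},
\end{equation*}
which converts the van der Corput estimates into the claimed error term. The key input is the Mordell-style counting alluded to in the introduction: for each nonzero $\xi\in\Z^2$ the number of pairs $(\mu,\mu')\in\vE^2$ with $\mu-\mu'=\xi$ is at most two, since $\mu$ must lie on both the circle $|\mu|^2=\eigen$ and the line $2\langle \mu,\xi\rangle=|\xi|^2$. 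This reduces the arithmetic sum to $\sum_k M(k)/\sqrt{k}$, where $M(k)$ is bounded by the number of representations of $k$ as a sum of two squares, hence $\ll k^\epsilon$. Combining this with the constraint that $\xi$ must lie in the difference set $\vE-\vE$ (of cardinality at most $N_\eigen^2$) and with the standard bound $N_\eigen=\eigen^{o(1)}$ should yield the desired estimate; the $\mu+\mu'$ case is treated identically via $\mu'\mapsto-\mu'$.
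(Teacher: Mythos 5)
Your treatment of the main (diagonal) terms and of the off-diagonal oscillatory integrals is essentially the paper's: separate the pairs with $\mu\pm\mu'=0$, use $\frac{1}{N_\eigen}\sum_{\mu}\langle\mu/|\mu|,e\rangle^{2}=\frac12$ for the first-derivative term, and apply the curvature-based van der Corput bound $\bigl|\int_0^L e^{2\pi i\langle\xi,\gamma(t)\rangle}dt\bigr|\ll|\xi|^{-1/2}$ with bounded amplitudes to reduce everything to the arithmetic sum $\sum_{\mu\neq\mu'}|\mu-\mu'|^{-1}$. The genuine gap is in your proof of the required bound $\sum_{\mu\neq\mu'}|\mu-\mu'|^{-1}\ll N_\eigen^{\epsilon}$. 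The two facts you invoke --- at most two pairs per difference $\xi$, and $M(k)\le r_2(k)\ll k^{\epsilon}$ for the number of differences with $|\xi|^{2}=k$ --- are true but insufficient. They are consistent with the roughly $N_\eigen^{2}$ differences being packed into the smallest annuli, i.e. $k\lesssim N_\eigen^{2}$, in which case $\sum_k M(k)k^{-1/2}$ could be as large as $\sum_{k\le N_\eigen^{2}}r_2(k)k^{-1/2}\asymp N_\eigen$; and summing over all $k\le 4\eigen$ without restriction gives $\asymp\sqrt{\eigen}$, which dwarfs $N_\eigen^{\epsilon}$ since $N_\eigen=\eigen^{o(1)}$. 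Even an $O(N_\eigen)$ bound would make the ``error'' term $O(1/N_\eigen)$, the same size as the main term $L^{2}/N_\eigen$, so the proposition would not follow from your final step as written.

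The missing ingredient is a sparsity statement for the set of admissible small distances; this is where Mordell's theorem actually enters, not merely as a multiplicity count. Writing $|\mu-\mu'|^{2}=2h$, Corollary \ref{cor:Mordell} shows that $h$ occurs only if $h(2\eigen-h)$ is a perfect square and $\gcd(\eigen,h)$ is a sum of two squares, with multiplicity $r_2(\gcd(\eigen,h))\ll h^{o(1)}$. The paper then splits the sum at $H=N_\eigen^{4}$ (the distant pairs contribute $N_\eigen^{2}/\sqrt{H}=O(1)$), and for $h<H$ factors out $d=\gcd(h,\eigen)$, uses Jarnik's theorem on lattice points in short arcs to show that for each divisor $d$ there are only $O(1)$ admissible values of $h'=h/d$, and finally bounds the resulting divisor sum $\sum_{d\mid\eigen,\ d=\square+\square}d^{-1/2}\ll N_\eigen^{\epsilon}$ (Lemma \ref{lem:divisorsum}). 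Without some version of this argument controlling how few small admissible differences there are, your reduction does not close.
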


Before proceeding with the proof, we can conclude the proof of
Theorem \ref{thm:nodal var intr asympt}: Use Proposition
\ref{prop:var(Z) kac-rice approx} to write an approximate integral
formula for the nodal intersections number variance and substitute
the result of Proposition \ref{second mom of r} in place of the main
term of \eqref{prop:var(Z) kac-rice approx}. \qed


\subsection{Main terms}
Squaring out, we have (on isolating the diagonal pairs $\mu=\mu'$)
\begin{equation}
|r(t_1,t_2)|^2 = \frac 1N_\eigen+ \frac 1{N_\eigen^2} \sum_{\substack{\mu,\mu'\in \vE\\
\mu\neq \mu'}} e^{2\pi i\langle
\mu-\mu',\gamma(t_1)-\gamma(t_2)\rangle}
\end{equation}
and hence integrating we find
\begin{equation}
\iint |r(t_1,t_2)|^2 dt_1dt_2 = \frac {L^2}{N_\eigen} +\frac 1{N_\eigen^2}\sum_{\substack{\mu,\mu'\in \vE\\
\mu\neq \mu'}}  \left| \int_0^L e^{2\pi i\langle
\mu-\mu',\gamma(t)\rangle} dt \right|^2.
\end{equation}

For the second moment of the derivative $r_1= \partial/\partial t_1$
we compute
\begin{equation}
\frac 1{2\pi i \sqrt{\eigen}} \frac {\partial r}{\partial t_1}(t_1,t_2) =
\frac 1{N_\eigen} \sum_\mu  \left\langle \frac{\mu}{|\mu|}, \dot
\gamma(t_1) \right\rangle e^{2\pi i\langle
\mu,\gamma(t_1)-\gamma(t_2)\rangle}
\end{equation}
and setting
\begin{equation}
A_{\mu,\mu'}(t) = \left\langle \frac{\mu}{|\mu|}, \dot \gamma(t)
\right\rangle \left\langle \frac{\mu'}{|\mu'|}, \dot \gamma(t)
\right\rangle
\end{equation}
we find
\begin{multline}\label{expand first derivative}
\iint \left|\frac 1{2\pi \sqrt{\eigen}} \frac {\partial r}{\partial
t_1}(t_1,t_2)\right|^2 dt_1dt_2 =   \frac 1{N_\eigen^2} \sum_\mu
\int_0^L A_{\mu,\mu}(t_1)dt_1 \int_0^L 1 dt_2
\\
+\frac 1{N_\eigen^2}\sum_{\substack{\mu,\mu'\in \vE\\
\mu\neq \mu'}}  \int_0^L A_{\mu,\mu'}(t_1)   e^{2\pi i\langle
\mu-\mu',\gamma(t_1)\rangle} dt_1
 \int_0^L e^{2\pi i\langle
\mu'-\mu,\gamma(t_2)\rangle} dt_2.
\end{multline}

Similarly,
\begin{multline}\label{expand second derivative}
\iint \left|\frac 1{4\pi^2\eigen} \frac{\partial^2 r}{\partial t_1
\partial t_2}(t_1,t_2)\right|^2 dt_1dt_2 = \frac 1{N_\eigen^2} \sum_\mu
\iint A_{\mu,\mu}(t_1)A_{\mu,\mu}(t_2)dt_1dt_2
\\
+\frac 1{N_\eigen^2}\sum_{\substack{\mu,\mu'\in \vE\\
\mu\neq \mu'}}  \left| \int_0^L A_{\mu,\mu'}(t) e^{2\pi i\langle
\mu-\mu',\gamma(t)\rangle} dt \right|^2.
\end{multline}

For $\partial r/\partial t_1$ we use (see \cite[Lemma 2,3]{RW}) that
for any $v\in \R^2$,
\begin{equation}
\frac 1N_\eigen \sum_{\mu\in \vE} \langle \mu, v \rangle^2 = \frac
\eigen 2 ||v||^2
\end{equation}
and applying it for $v=\dot \gamma(t)$ which has unit length we get
that
\begin{equation}
\frac 1N_\eigen \sum_\mu A_{\mu,\mu}(t) =\frac 12 ||\dot
\gamma(t)||^2 = \frac 12.
\end{equation}
Integrating over $t_1$ and $t_2$ shows that the diagonal term in
\eqref{expand first derivative} is $L^2/2N_\eigen$.

For $\partial^2 r/\partial t_1\partial t_2$ the diagonal term in
\eqref{expand second derivative} is
\begin{equation}
\frac 1N_\eigen \iint \frac 1N_\eigen \sum_\mu \left\langle
\frac{\mu}{|\mu|}, \dot \gamma(t_1) \right\rangle^2 \cdot
\left\langle \frac{\mu}{|\mu|}, \dot \gamma(t_2) \right\rangle^2
dt_1 dt_2 = \frac{B_\curve(\vE)} {N_\eigen}.
\end{equation}

\subsection{Off-diagonal terms}
To handle the off-diagonal terms $\mu\neq \mu'$, we need the
following consequence of van der Corput's lemma (see \cite{BRNI}):
For each $0\neq \xi \in \R^2$ define a phase function on the curve
$\curve$ by
\begin{equation}
\phi_\xi(t) = \left\langle \frac{\xi}{|\xi|},\gamma(t) \right\rangle.
\end{equation}
Let $A\in C^\infty(0,L)$ be a smooth amplitude and for $k$ real, set
\begin{equation}
I(k) = \int A(t)e^{i k \phi_\xi(t)}dt.
\end{equation}
\begin{lemma} \label{osc int curve}
Assume $\curve$ has nowhere vanishing curvature. Then for $|k|\geq
1$,
\begin{equation}
|I(k)|\ll \frac 1{|k|^{1/2}}\left\{||A||_\infty + ||A'||_1 \right\},
\end{equation}
the implied constant depending only on the curve $\curve$
(independent of $\xi$).
\end{lemma}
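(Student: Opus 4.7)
\medskip
\noindent\textbf{Proof plan for Lemma \ref{osc int curve}.}

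The plan is to reduce the estimate to the classical first- and second-derivative van der Corput lemmas, after a decomposition of $[0,L]$ that depends on $\xi$ but has a bounded number of pieces. First I would take the arc-length parametrization of $\curve$, so that $|\dot\gamma|=1$, $\ddot\gamma\perp\dot\gamma$, and $|\ddot\gamma(t)|=\kappa(t)$. Writing $e_\xi:=\xi/|\xi|$, one computes
\begin{equation*}
\phi_\xi'(t)=\langle e_\xi,\dot\gamma(t)\rangle,\qquad
\phi_\xi''(t)=\langle e_\xi,\ddot\gamma(t)\rangle.
\end{equation*}
Since $\{\dot\gamma(t),\ddot\gamma(t)/\kappa(t)\}$ is an orthonormal frame of $\R^2$ at every $t$, the unit vector $e_\xi$ decomposes into this frame and gives the uniform identity
\begin{equation*}
\phi_\xi'(t)^{2}+\frac{\phi_\xi''(t)^{2}}{\kappa(t)^{2}}=1.
\end{equation*}

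The nowhere vanishing curvature assumption then yields $\kappa_0:=\min_{t}\kappa(t)>0$, and the identity above forces a dichotomy at every point: either $|\phi_\xi'(t)|\ge 1/\sqrt{2}$ or $|\phi_\xi''(t)|\ge \kappa_0/\sqrt{2}$. I would now partition $[0,L]$ into maximal open intervals on which one of these two alternatives holds throughout. The crucial observation is that the number of such intervals can be bounded \emph{independently of $\xi$}: writing $\phi_\xi'(t)=\cos\theta_\xi(t)$, where $\theta_\xi(t)$ is the angle between $e_\xi$ and $\dot\gamma(t)$, the total variation of $\theta_\xi$ on $[0,L]$ is at most the total curvature $\int_0^L\kappa(t)\,dt$, which depends only on $\curve$. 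Hence the number of times $|\phi_\xi'|$ can cross the threshold $1/\sqrt{2}$ is bounded by a constant $N_0=N_0(\curve)$, and the partition has at most $N_0+1$ intervals.

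On each interval $J$ of the first type, where $|\phi_\xi'|\ge 1/\sqrt{2}$, integration by parts (equivalently, the first-derivative van der Corput lemma, using that $\phi_\xi'$ is monotone on each sub-sub-interval by possibly splitting $J$ further into $O(1)$ pieces according to sign changes of $\phi_\xi''$) yields
\begin{equation*}
\Bigl|\int_{J}A(t)e^{ik\phi_\xi(t)}dt\Bigr|\ll \frac{1}{|k|}\bigl(\|A\|_\infty+\|A'\|_1\bigr).
\end{equation*}
On each interval of the second type, where $|\phi_\xi''|\ge \kappa_0/\sqrt{2}$, the second-derivative van der Corput lemma applies directly and produces the bound $O\bigl(|k|^{-1/2}(\|A\|_\infty+\|A'\|_1)\bigr)$. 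Summing over the $O(1)$ pieces yields the conclusion.

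The step I expect to be the main delicate point is the uniform-in-$\xi$ partition: keeping the number of pieces bounded by a constant depending only on $\curve$ is what prevents a loss in the estimate, and the argument via the total curvature of $\dot\gamma$ on $\mathcal{S}^1$ is the cleanest way I see to achieve this. The rest is a standard application of van der Corput in each of the two regimes, where the square-root savings comes only from the second-derivative pieces and dominates the faster decay from the first-derivative ones.
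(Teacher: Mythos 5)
Your argument is correct, and it is the standard proof of this stationary-phase estimate; the paper itself does not prove Lemma \ref{osc int curve} but cites \cite{BRNI} for it, and the argument there is the same reduction to the first- and second-derivative van der Corput tests via the dichotomy forced by the identity $\phi_\xi'(t)^{2}+\phi_\xi''(t)^{2}/\kappa(t)^{2}=1$. So there is no genuinely different route here, only a written-out version of what the paper delegates to a reference.

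One step deserves tightening. You justify the uniform-in-$\xi$ bound on the number of pieces by saying that the total variation of $\theta_\xi$ is at most the total curvature, ``hence'' the number of crossings of the threshold $|\cos\theta_\xi|=1/\sqrt{2}$ is bounded. Bounded total variation alone does not bound the number of level crossings (a function can oscillate across a level arbitrarily many times with arbitrarily small total variation). What saves you is precisely the nowhere-vanishing curvature: writing $\dot\gamma(t)=e^{i\psi(t)}$ one has $\psi'(t)=\kappa(t)$, which is continuous and nonvanishing, hence of constant sign on each component of $\curve$, so $\theta_\xi(t)=\psi(t)-\beta$ is \emph{strictly monotone}. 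A monotone function of total variation $\int_0^L\kappa$ crosses the four relevant levels $\pm\pi/4$, $\pi\pm\pi/4$ (mod $2\pi$) at most $\frac{2}{\pi}\int_0^L\kappa\,dt+O(1)$ times, which gives the constant $N_0(\curve)$ you need. The same monotonicity of $\theta_\xi$ also cleanly supplies the $O(1)$ sign changes of $\phi_\xi''=-\kappa\sin\theta_\xi$ that you invoke on the first-derivative intervals. With that observation inserted, the proof is complete.
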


Applying Lemma~\ref{osc int curve}, we see that for
  $\mu\neq \mu'$,
\begin{equation}
\ \int_0^Le^{2\pi i\langle \mu-\mu',\gamma(t)\rangle} dt \ll_\curve
\frac 1{|\mu-\mu'|^{1/2}}.
\end{equation}
Moreover, $|A_{\mu,\mu'}|\leq 1$ and $|A_{\mu,\mu'}'|\leq 2
K_{\max}$ where $K_{\max}$ is the maximum value of the curvature on
$\curve$, because
\begin{equation}
\begin{split}
A_{\mu,\mu'}' &=   \left\langle \frac{\mu}{|\mu|}, \ddot \gamma(t)
\right\rangle \left\langle \frac{\mu'}{|\mu'|}, \dot \gamma(t)
\right\rangle +   \left\langle \frac{\mu}{|\mu|}, \dot \gamma(t)
\right\rangle
\left\langle \frac{\mu'}{|\mu'|}, \ddot \gamma(t) \right\rangle =\\
& =\kappa(t) \left( \left\langle  \frac{\mu}{|\mu|}, \nu(t)
\right\rangle \cdot \left\langle \frac{\mu'}{|\mu'|}, \dot \gamma(t)
\right\rangle +  \left\langle \frac{\mu}{|\mu|}, \dot \gamma(t)
\right\rangle \left\langle \frac{\mu'}{|\mu'|}, \nu(t) \right\rangle
\right),
\end{split}
\end{equation}
where $\ddot \gamma = \kappa \nu$ with $\kappa$ the curvature and
$\nu$ the unit normal to the curve. Therefore we likewise find
\begin{equation}
 \int_0^L A_{\mu,\mu'}(t) e^{2\pi i\langle
\mu-\mu',\gamma(t)\rangle} dt \ll_\curve \frac 1{|\mu-\mu'|^{1/2}}.
\end{equation}

Hence we find that
\begin{equation}
\iint |r(t_1,t_2)|^2 dt_1dt_2 = \frac {L^2}{N_\eigen} +O\left( \frac 1{N_\eigen^2} \sum_{\substack{\mu,\mu'\in \vE\\
\mu\neq \mu'}}  \frac 1{|\mu-\mu'|} \right)
\end{equation}
and for $j=1,2$
\begin{equation}
\iint \left|\frac 1{ 2\pi \sqrt{\eigen}} \frac{\partial r}{\partial t_j} (t_1,t_2)\right|^2 dt_1dt_2 = \frac {L^2}{2N_\eigen} +O\left( \frac 1{N_\eigen^2} \sum_{\substack{\mu,\mu'\in \vE\\
\mu\neq \mu'}}  \frac 1{|\mu-\mu'|} \right),
\end{equation}
and finally
\begin{equation}
\iint \left|\frac 1{4\pi^2 \eigen} \frac{\partial^2 r}{\partial t_1\partial t_2} (t_1,t_2)\right|^2 dt_1dt_2 = \frac{B_\curve(\vE)}{N_\eigen} +O\left( \frac 1{N_\eigen^2} \sum_{\substack{\mu,\mu'\in \vE\\
\mu\neq \mu'}}  \frac 1{|\mu-\mu'|} \right).
\end{equation}

Proposition~\ref{second mom of r} hence follows from
\begin{proposition}\label{prop:sum}
\begin{equation}
\sum_{\substack{\mu,\mu'\in \vE\\
\mu\neq \mu'}}  \frac 1{|\mu-\mu'|}  \ll  N_\eigen^\epsilon, \quad
\forall \epsilon>0.
\end{equation}
\end{proposition}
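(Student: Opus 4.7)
}

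The plan is to partition the pairs $(\mu,\mu')$ by the squared chord length $d=|\mu-\mu'|^2$, an even integer in $[2,4m]$, and write
\[
S:=\sum_{\mu\neq\mu'\in\vE}\frac{1}{|\mu-\mu'|}=\sum_{d}\frac{N(d)}{\sqrt d},\qquad N(d):=\#\{(\mu,\mu')\in\vE^2:|\mu-\mu'|^2=d\}.
\]
Two elementary bounds are available: fixing $\mu$ and intersecting the circles $|y|^2=m$ and $|y-\mu|^2=d$ gives $N(d)\le 2N_\eigen$; dually, fixing the chord vector $\nu=\mu-\mu'$ and intersecting $|y|^2=m$ with $|y+\nu|^2=m$ gives $N(d)\le 2r_2(d)$. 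Applied on their own, either bound yields only $S\ll m^{1/2+\epsilon}$, far weaker than the target.

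The arithmetic sharpening is Mordell's 1932 theorem \cite{Mordell 1932} on representations of binary quadratic forms. Identifying $\vE$ with the Gaussian integers of norm $m$, each pair $(\mu,\mu')$ corresponds to $\eta:=\mu\bar{\mu'}\in\Z[i]$ of norm $m^2$, with $\Re\eta=\langle\mu,\mu'\rangle=:a$; the fibre above $\eta$ is in bijection with Gaussian divisors of $\eta$ of norm $m$. Mordell's theorem bounds this divisor count via the arithmetic of the discriminant $D^2=m^2-a^2$. Moreover, the admissible values of $a$ are the real parts of Gaussian integers on the circle $|\eta|^2=m^2$, a set of cardinality at most $r_2(m^2)$, which from the Gaussian factorisation of $m$ is polynomial in $N_\eigen$ (in fact $r_2(m^2)\le N_\eigen^2$).

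The plan is then to split the sum over admissible $a$ dyadically by the size of $m-a$. In the bulk regime, where $m-a$ is large, the weight $1/\sqrt{2(m-a)}$ is small; there the trivial bound $N(d)\le 2N_\eigen$ combined with the polynomial-in-$N_\eigen$ sparsity of admissible $a$'s already yields a contribution $\ll N_\eigen^\epsilon$, using that $N_\eigen=m^{o(1)}$. In the near-diagonal regime, corresponding to short chords with $m-a$ small, one applies Mordell's divisor bound to $N(d)$ together with classical short-arc lattice-point bounds (in the style of Jarn\'ik or Cilleruelo--Granville) for the circle $|\eta|^2=m^2$, to control the scarce set of admissible $a$'s close to $m$.

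The main obstacle is the near-diagonal regime: precisely there the weight $1/\sqrt{2(m-a)}$ blows up, and the elementary bound $N(d)\le 2r_2(d)$ wastes the crucial constraint that \emph{both} endpoints of the chord lie in $\vE$. Only the combination of Mordell's sharp divisor bound together with short-arc sparsity for lattice points on $|\eta|^2=m^2$ tames this contribution down to the target $N_\eigen^\epsilon$.
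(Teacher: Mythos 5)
Your skeleton --- splitting by chord length, using Mordell/Pall to bound the fibre $N(d)$ by $r_2(\gcd(\eigen,h))\ll N_\eigen^{o(1)}$ (writing $d=2h$), killing the long chords trivially, and then arguing that the admissible short chord-lengths are sparse --- is the same as the paper's. The bulk regime and the fibre count are fine. The gap is in the near-diagonal regime, and it is genuine: the tool you name there cannot do the job. The admissible $a$ with $1\le m-a\le T$ (where $T$ must be a positive power of $N_\eigen$ for your bulk bound to close) correspond to lattice points $(a,b)$ on $a^2+b^2=\eigen^2$ with $|b|\le\sqrt{2\eigen T}$, i.e.\ to an arc of length $\asymp\sqrt{\eigen T}=\eigen^{1/2+o(1)}$ about the point $(\eigen,0)$. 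This is far beyond the $\eigen^{1/3}$ threshold of Jarn\'ik and beyond the $\eigen^{1/2-\delta}$ thresholds of the Cilleruelo--C\'ordoba/Granville type short-arc results you invoke, so none of them gives anything better than the trivial count; and the trivial count of $r_2(\eigen^2)^{O(1)}$ admissible $a$'s against the weight $1/\sqrt{2(m-a)}$, summed over $1\le m-a\le T$, only yields $N_\eigen^{O(1)}\sqrt{T}$, which does not close.

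What the paper does at exactly this point is an arithmetic descent rather than a generic short-arc bound on the radius-$\eigen$ circle. Writing $h=m-a$, the lattice-point condition is $h(2\eigen-h)=\square$; setting $d=\gcd(h,\eigen)$, $h=dh'$, $\eigen=d\eigen'$ with $\gcd(h',\eigen')=1$, coprimality forces (up to a factor of $2$ depending on the parity of $h'$) $h'=u^2$ and $2\eigen'-h'=v^2$. Thus $(u,v)$ is a lattice point on a circle of radius $\approx\sqrt{2\eigen/d}$ confined to the strip $0<u<\sqrt{H/d}$, an arc of length $\eigen^{o(1)}\ll(\eigen')^{1/3}$ --- and \emph{there} Jarn\'ik applies, giving at most four admissible $h'$ for each divisor $d$. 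The near-diagonal contribution then reduces to $\sum_{d\mid\eigen,\; d=\square+\square}d^{-1/2}$, which needs its own argument (the paper's divisor-sum lemma, exploiting $N_\eigen=\prod_j(a_j+1)$ for the prime factorisation of $\eigen$) to be bounded by $N_\eigen^\epsilon$. Without this descent, or a substitute for it, the sparsity of admissible chord-lengths near the diagonal --- the heart of the proposition --- is not established.
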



\subsection{A result of Mordell}

Denote by $\mathcal H$ the set of $h\leq H$ for which the system
\begin{equation}\label{system h}
|\mu|^2 = \eigen=|\mu'|^2,\; |\mu-\mu'|^2=2h
\end{equation}
has integer solutions, and by $A(\eigen,h)$ the number such
solutions.

We   give an arithmetic characterization of the set $\mathcal H$.
To do so, we will need a result of Mordell \cite{Mordell 1932} (see
also Niven \cite{Niven})
 on the representation of a binary quadratic form as a sum of two squares of integer linear forms.
\begin{theorem}[Mordell \cite{Mordell 1932}]
 Let $A,B,C\in \Z$. Assume that the integer binary quadratic form
$$F(x,y):=Ax^2+2Bxy +Cy^2$$
is  positive definite, i.e. that $A,C>0$ and $AC-B^2>0$.
Then we can represent
$$ F(x,y) = (ux+u'y)^2 + (vx+v'y)^2$$
with integer $u,v,u',v'$ if and only if
\begin{equation}\label{square disc cond}
AC-B^2 = \square \mbox{ is a perfect square},
\end{equation}
and
\begin{equation}\label{gcd is sum of 2 cond}
 \gcd(A,B,C) = \square+\square \mbox{  is a sum of two integer
squares}.
\end{equation}
\end{theorem}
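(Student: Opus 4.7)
The natural approach is to work in the Gaussian integers $\Z[i]$, since the desired representation $(ux+u'y)^2+(vx+v'y)^2 = |\alpha x + \beta y|^2$ for $\alpha = u+iv$ and $\beta = u'+iv'$ recasts the question as: when is $F$ a norm form from $\Z[i]$? Under this identification $A = |\alpha|^2$, $C = |\beta|^2$, $B = \Re(\alpha\bar\beta)$, and $AC-B^2 = (\Im\alpha\bar\beta)^2$ by Lagrange's identity.

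\emph{Necessity.} Condition (i) is immediate from the identity above. For (ii), let $d = \gcd(\alpha,\beta)$ in the PID $\Z[i]$ and write $\alpha = d\alpha_1$, $\beta = d\beta_1$ with $\gcd(\alpha_1,\beta_1) = 1$. Then $\gcd_\Z(A,B,C) = |d|^2 \cdot \gcd_\Z(|\alpha_1|^2,\Re(\alpha_1\bar\beta_1), |\beta_1|^2)$, and I would verify by a short case analysis over rational primes $p$ (split, inert, or ramified in $\Z[i]$) that the inner rational gcd is $1$: in each case a common prime factor would yield a nontrivial Gaussian prime divisor of $\gcd(\alpha_1,\beta_1)$. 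Hence $\gcd(A,B,C)=|d|^2$, which is a sum of two squares.

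\emph{Sufficiency; reduction to the primitive case.} Assume (i) and (ii), set $\delta := \gcd(A,B,C) = p^2+q^2$, $D_0 := D/\delta$, and $A_0 := A/\delta$, etc., so that $F = \delta F_0$ with $F_0 := A_0 x^2 + 2B_0 xy + C_0 y^2$ primitive and $A_0 C_0 - B_0^2 = D_0^2$. Granting the primitive case, $F_0 = (u_0 x+u_0'y)^2 + (v_0 x+v_0'y)^2$; then the Brahmagupta--Fibonacci identity $(p^2+q^2)(a^2+b^2)=(pa-qb)^2+(pb+qa)^2$ applied with $a,b$ the two component linear forms produces the representation of $F = \delta F_0$ as a sum of two squares of integer linear forms.

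\emph{The primitive case (the main obstacle).} The goal is to manufacture $\alpha,\beta \in \Z[i]$ with $|\alpha|^2 = A_0$, $|\beta|^2 = C_0$, and $\alpha\bar\beta = B_0 + iD_0$, since then $F_0 = |\alpha x + \beta y|^2$. First I would establish $\gcd(A_0,C_0)=1$: a common prime $p$ would satisfy $p \nmid B_0$ by primitivity, yet $p \mid D_0$ together with $p^2 \mid A_0 C_0$ would force $p \mid B_0$ via $A_0 C_0 = B_0^2 + D_0^2$, a contradiction. Next, any prime $p \equiv 3 \pmod 4$ dividing $A_0$ is inert in $\Z[i]$ and divides $B_0^2+D_0^2$, hence divides both $B_0$ and $D_0$; an induction on the $p$-adic valuation (using primitivity) shows the exponent of $p$ in $A_0$ is even, so $A_0$, and symmetrically $C_0$, is a sum of two squares. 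Finally, factor $B_0 + iD_0$ in $\Z[i]$: its norm equals $A_0 C_0$ with $\gcd(A_0,C_0)=1$, so its Gaussian prime factors cleanly partition according to whether the underlying rational prime divides $A_0$ or $C_0$, and one collects the two halves (adjusted by a unit) into $\alpha$ and $\bar\beta$ with the prescribed norms and the on-the-nose identity $\alpha\bar\beta = B_0+iD_0$. This partitioning step, exploiting unique factorization in $\Z[i]$ together with the coprimality $\gcd(A_0,C_0)=1$ and the evenness of inert-prime exponents, is the delicate heart of the proof.
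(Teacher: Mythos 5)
The paper itself does not prove this theorem: it is quoted from Mordell (with a pointer to Niven), so there is no in-paper argument to compare against, and I am assessing your attempt on its own. Your necessity argument, the reduction to the primitive case, and the Brahmagupta--Fibonacci step are all sound. The problem is in the primitive case, where your key intermediate claim $\gcd(A_0,C_0)=1$ is false. Counterexample: $A_0=C_0=5$, $B_0=3$, so $F_0=5x^2+6xy+5y^2$ is primitive ($\gcd(5,3,5)=1$) with $A_0C_0-B_0^2=16=D_0^2$, yet $\gcd(A_0,C_0)=5$. (The theorem survives: $5x^2+6xy+5y^2=(x-y)^2+(2x+2y)^2$, coming from $B_0+iD_0=3+4i=(2+i)^2$ with $\alpha=2+i$, $\beta=2-i$.) Your derivation of the claim silently assumes $p\mid D_0$, which is unjustified and false here ($5\nmid 4$); from $p\mid A_0$, $p\mid C_0$ one only gets $p^2\mid B_0^2+D_0^2$, which for $p\equiv 1\pmod 4$ does not force $p\mid B_0$.

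This is not cosmetic: the step you yourself identify as the delicate heart --- partitioning the Gaussian prime factors of $B_0+iD_0$ ``according to whether the underlying rational prime divides $A_0$ or $C_0$'' --- is exactly what breaks, since a split prime $p$ may divide both. The repair is to treat such $p$ separately: primitivity gives $p\nmid B_0$, hence $p\nmid B_0+iD_0$ as a rational integer, so writing $p=\pi\bar\pi$ only one of the conjugate Gaussian primes, say $\pi$, divides $B_0+iD_0$, and it does so with multiplicity $v_p(A_0)+v_p(C_0)$; one then allocates $\pi^{v_p(A_0)}$ to $\alpha$ and $\pi^{v_p(C_0)}$ to $\bar\beta$, which is consistent with the required norms and with $\alpha\bar\beta=B_0+iD_0$. (One also checks $2$ and primes $\equiv 3\pmod 4$ cannot divide both $A_0$ and $C_0$.) Without this extra bookkeeping at split primes dividing $\gcd(A_0,C_0)$, the sufficiency proof is incomplete.
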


Pall \cite{Pall} gives the exact number of solutions as
$r_2(\gcd(A,B,C))$ if $AC-B^2>0$, and $2r_2(\gcd(A,B,C))$ if
$AC-B^2=0$, where $r_2(n)$ is the number of representations of $n$
as a sum of two integer squares.

 Writing $\mu
= (u,v)$ and $\mu'=(u',v')$ we have
$$
(ux+u'y)^2 + (vx+v'y)^2 = |x\mu+y\mu'|^2
$$
so that we can interpret Mordell's theorem as saying that given
$A,B,C$ as above, there are integer vectors $\mu,\mu'\in \Z^2$
satisfying
\begin{equation}
|\mu|^2 = A,\quad \langle \mu,\mu' \rangle = B, \quad |\mu'|^2 = C
\end{equation}
if and only if \eqref{square disc cond} and \eqref{gcd is sum of 2
cond} hold.

A consequence is
\begin{corollary}\label{cor:Mordell}
Let $\eigen,h\in \Z$, $0<h<\eigen$. There are two integer vectors
$\mu,\mu'$ with $|\mu|^2 = \eigen = |\mu'|^2$ and $|\mu-\mu'|^2 =
2h$ if and only if
\begin{enumerate}
\item\label{cond1}
   $ h(2\eigen-h)=\square$ is a perfect square, and
\item \label{cond2}
   $  \gcd(\eigen,h)=\square + \square$ is a sum of two squares.
\end{enumerate}
In this case the number of solutions is
$A(\eigen,h)=r_2(\gcd(\eigen,h))\ll h^{o(1)}$.
\end{corollary}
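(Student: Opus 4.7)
The plan is to apply Mordell's theorem directly, with the key step being to identify the correct binary quadratic form. Given a pair $(\mu,\mu')\in \Z^2\times\Z^2$ with $|\mu|^2=|\mu'|^2=\eigen$ and $|\mu-\mu'|^2=2h$, expanding gives
$$
|\mu-\mu'|^2=2\eigen-2\langle\mu,\mu'\rangle=2h,
$$
so $\langle\mu,\mu'\rangle=\eigen-h$. Writing $\mu=(u,v)$, $\mu'=(u',v')$, the identity
$$
|x\mu+y\mu'|^2=(ux+u'y)^2+(vx+v'y)^2
$$
shows that $A(\eigen,h)$ equals the number of representations of the binary quadratic form
$$
F(x,y)=\eigen\,x^2+2(\eigen-h)xy+\eigen\,y^2
$$
as a sum $(ux+u'y)^2+(vx+v'y)^2$ of squares of integer linear forms.

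Next I would verify the hypotheses of Mordell's theorem. With $A=C=\eigen$ and $B=\eigen-h$, the discriminant is
$$
AC-B^2=\eigen^2-(\eigen-h)^2=h(2\eigen-h),
$$
which is strictly positive since $0<h<\eigen<2\eigen$; combined with $A,C>0$, this gives positive definiteness. Mordell's theorem then asserts that the required representation exists if and only if $AC-B^2=h(2\eigen-h)$ is a perfect square, which is condition (\ref{cond1}), and $\gcd(A,B,C)=\gcd(\eigen,\eigen-h,\eigen)=\gcd(\eigen,h)$ is a sum of two squares, which is condition (\ref{cond2}). The ``if and only if'' part of the corollary thus follows immediately.

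For the count, Pall's refinement states that when $AC-B^2>0$ the exact number of representations is $r_2(\gcd(A,B,C))=r_2(\gcd(\eigen,h))$, establishing the stated formula $A(\eigen,h)=r_2(\gcd(\eigen,h))$. The final bound $A(\eigen,h)\ll h^{o(1)}$ then follows from the standard divisor-like bound $r_2(n)\ll n^{\epsilon}$ for any $\epsilon>0$, since $\gcd(\eigen,h)\le h$.

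There is no real obstacle here; the argument is a direct translation of Mordell's theorem plus Pall's counting refinement, together with a well-known bound on $r_2$. The only point that requires a moment of care is the positive-definiteness check, which is the reason the hypothesis $h<\eigen$ (ensuring $h<2\eigen$) is imposed.
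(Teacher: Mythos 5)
Your proposal is correct and follows essentially the same route as the paper: the paper derives the corollary by the same reinterpretation $|x\mu+y\mu'|^2=(ux+u'y)^2+(vx+v'y)^2$ of Mordell's theorem, with $A=C=\eigen$, $B=\eigen-h$, discriminant $h(2\eigen-h)$ and $\gcd(A,B,C)=\gcd(\eigen,h)$, plus Pall's count and the standard bound on $r_2$. No gaps.
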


\subsection{Proof of Proposition~\ref{prop:sum}}

Let $H=N_\eigen^4$. We separate the sum into that over "distant"
pairs $|\mu-\mu'|^2>H$ and "close" pairs $1\leq |\mu-\mu'|^2\leq H$.
For the sum over distant pairs, we crudely use
\begin{equation}\label{distant}
\sum_{\substack{\mu,\mu'\in \vE\\|\mu-  \mu'|^2>H}} \frac
1{|\mu-\mu'|} \ll \frac {N_\eigen^2}{\sqrt{H}}.
\end{equation}

To handle the sum over "close" pairs, we write
\begin{equation}
\sum_{ \substack{(\mu,\mu')\in \vE \times \vE\\0<|\mu-\mu'|^2 < H}}
\frac 1{|\mu-\mu'|} = \sum_{\substack{h\in \mathcal H\\ 0<h<H}}
\frac {A(\eigen,h)}{\sqrt{2h}}\ll H^\epsilon \sum_{\substack{h\in
\mathcal H\\ 0<h<H}} \frac {1}{\sqrt{ h}}.
\end{equation}
For $h\in \mathcal H$, we write $d=(h,\eigen)$, which is a sum of
two squares ($d=\Box+\Box$), $h=dh'$, $\eigen = d\eigen'$ with
$(\eigen',h')=1$. Then $h\in \mathcal H$ means $h(2\eigen-h)=\Box$
and so $h'(2\eigen'-h')=\Box$. Thus we find

\begin{equation}\label{eq:double sum}
\sum_{\substack{h\in \mathcal H\\ 0<h<H}} \frac 1{\sqrt{ h}}  =
\sum_{\substack{d\mid\eigen\\d=\Box+\Box\\d<H}}\frac 1{\sqrt{d}}
\sum_{\substack{ h'(2\eigen'-h')=\Box\\(h',\eigen')=1\\ h'<H/d}}
\frac 1{\sqrt{h'}}.
\end{equation}

We claim that the inner sum over $h'$ is $O(1)$. To see this, use
$1/\sqrt{h'}\leq 1$ and separate into cases according to $h'$ being
odd or even. If $h'$ is odd and $(h',\eigen')=1$, then the condition
$h'(2\eigen'-h')=\Box$ implies $h'=\Box$ and $2\eigen'-h'=\Box$,
that is $h'=u^2$ and $2\eigen-h'=v^2$ with $v>0$, $0<u<\sqrt{H/d}$.
If $h'$ is even, the the condition $h'(2\eigen'-h')=\Box$ and
$(h',\eigen')=1$ implies $(h'/2,\eigen'-h'/2)=1$ and $h'/2=\Box$,
$\eigen'-h'/2-\Box$ so that $h'/2=u^2$, $\eigen'-h'/2=v^2$ with
$v>0$, $0<u<\sqrt{H/d}$. Summarizing, we get lattice points on the
circle $u^2+v^2=2\eigen'$ or $u^2+v^2=\eigen'$ depending on the
parity of $h'$, with $0<u<\sqrt{H/d}$, $v>0$. These conditions puts
these lattice points on a "short" arc on the circle, since $H\ll
\eigen^{o(1)}$. Recall  Jarnik's theorem \cite{Jarnik}, which states
that on an arc of size $<R^{1/3}$ on a circle of radius $R$ there
can be at most two lattice points. Hence there are at most two such
lattice points in each of the two cases, and thus the number of
participating $h'$ is at most $4$. This proves that the inner sum in
\eqref{eq:double sum} is bounded.

We conclude that
\begin{equation}
\sum_{\substack{h\in \mathcal H\\ 0<h<H}} \frac 1{\sqrt{ h}} \ll
\sum_{\substack{d\mid\eigen\\d=\Box+\Box\\d<H}}\frac 1{\sqrt{d}}.
\end{equation}
Below in Lemma~\ref{lem:divisorsum} we will show that this sum is
bounded by $O(N_\eigen^\epsilon)$. This will show that the
contribution of close pairs is $O(N_\eigen^\epsilon)$. Combining
with the bound \eqref{distant} on distant pairs we get
\begin{equation}
\sum_{\substack{\mu,\mu'\in \vE\\
\mu\neq \mu'}}  \frac 1{|\mu-\mu'|}  \ll \frac{N_m^2}{\sqrt{H}}
+N_\eigen^\epsilon \ll N_\eigen^\epsilon
\end{equation}
on recalling that $H=N_\eigen^4$. This will conclude the proof of
Proposition~\ref{prop:sum}, once we prove:
\begin{lemma}\label{lem:divisorsum}
Suppose that $H=N_\eigen^\alpha$ for some $\alpha>0$. Then
$$ \sum_{\substack{d\mid\eigen\\d=\Box+\Box\\d<H }}\frac 1{\sqrt{d}}
\ll _\epsilon N_\eigen^\epsilon.
$$
\end{lemma}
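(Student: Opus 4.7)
The plan is to combine Rankin's trick with the multiplicative structure of divisors that are sums of two squares. Since $\eigen$ is itself a sum of two squares (as $N_\eigen > 0$), every prime $q \equiv 3 \pmod 4$ appears in $\eigen$ to an even exponent; write
\begin{equation*}
\eigen = 2^{a} P Q^{2}, \qquad P = \prod_{p \equiv 1 \pmod 4} p^{a_p}, \qquad Q = \prod_{q \equiv 3 \pmod 4} q^{b_q}.
\end{equation*}
A divisor $d \mid \eigen$ is then a sum of two squares precisely when it factors as $d = 2^{b} d_{1} d_{2}^{2}$ with $0 \le b \le a$, $d_{1} \mid P$, $d_{2} \mid Q$. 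The Jacobi two-square formula gives $N_\eigen = 4 \prod_{p \mid P}(a_{p}+1)$, so if $k$ denotes the number of distinct prime factors of $P$ then $N_\eigen \ge 4 \cdot 2^{k}$, hence $k \le \log_{2} N_\eigen$.

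Fix $\sigma > 1/2$. Since $d \le H$ implies $d^{-1/2} \le H^{\sigma - 1/2} d^{-\sigma}$, Rankin's trick yields
\begin{equation*}
\sum_{\substack{d \mid \eigen \\ d = \Box + \Box, \, d \le H}} \frac{1}{\sqrt{d}} \ \le\  H^{\sigma - 1/2} \Big(\sum_{b=0}^{a} 2^{-b\sigma}\Big) \Big(\sum_{d_{1} \mid P} d_{1}^{-\sigma}\Big) \Big(\sum_{d_{2} \mid Q} d_{2}^{-2\sigma}\Big).
\end{equation*}
The first factor is bounded by $(1-2^{-\sigma})^{-1}$. The crucial observation is that, since $2\sigma > 1$, the third factor is dominated by the full Euler product and hence by $\zeta(2\sigma) = O_{\sigma}(1)$, independently of $Q$. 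This is what saves the argument, since the number of prime factors of $Q$ may be arbitrarily large even when $N_\eigen$ is bounded.

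The main estimate is for the $P$-sum. Bound $\sum_{d_{1} \mid P} d_{1}^{-\sigma} \le \prod_{p \mid P}(1-p^{-\sigma})^{-1}$. Every prime $p \mid P$ satisfies $p \ge 5$, so $p^{-\sigma} \le 5^{-1/2} < 1/2$ and $-\log(1-p^{-\sigma}) \le 2 p^{-\sigma}$. Cauchy--Schwarz then gives
\begin{equation*}
\sum_{p \mid P} p^{-\sigma} \ \le\  \sqrt{k} \Big(\sum_{p \mid P} p^{-2\sigma}\Big)^{1/2} \ \le\  \sqrt{k \cdot \zeta(2\sigma)},
\end{equation*}
so $\prod_{p \mid P}(1-p^{-\sigma})^{-1} \le \exp\bigl(C_{\sigma}\sqrt{k}\bigr) \le \exp\bigl(C_{\sigma}\sqrt{\log_{2} N_\eigen}\bigr) = N_\eigen^{o(1)}$.

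Assembling the pieces,
\begin{equation*}
\sum_{\substack{d \mid \eigen \\ d = \Box+\Box, \, d \le H}} \frac{1}{\sqrt{d}} \ \ll_{\sigma}\  H^{\sigma - 1/2} N_\eigen^{o(1)} \ =\  N_\eigen^{\alpha(\sigma - 1/2) + o(1)}.
\end{equation*}
Given $\epsilon > 0$, fixing $\sigma = 1/2 + \epsilon/(2\alpha)$ once and for all produces the bound $\ll_\epsilon N_\eigen^{\epsilon}$. The main obstacle is that one cannot just take $\sigma = 1/2$: the $Q$-Euler product then diverges for $Q$ with many prime factors, so one is forced to move $\sigma$ strictly above $1/2$, and it is precisely the cutoff $d \le H = N_\eigen^{\alpha}$ that allows one to absorb the resulting factor $H^{\sigma - 1/2}$ into $N_\eigen^{\epsilon}$.
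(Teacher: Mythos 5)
Your proof is correct. It shares the paper's essential skeleton: factor $d$ according to the residue class mod $4$ of its prime divisors, note that the number $\omega$ of distinct primes $p\equiv 1\pmod 4$ dividing $\eigen$ satisfies $\omega\le\log_2 N_\eigen$ (via the two-squares counting formula $N_\eigen=4\prod_p(a_p+1)$), and observe that the primes $q\equiv 3\pmod 4$ --- which contribute a square factor $d_2^2$ to $d$ and are invisible to $N_\eigen$ --- are exactly where the cutoff $d<H=N_\eigen^\alpha$ must be spent. The execution differs at both pressure points. For the $q\equiv 3$ part the paper stays at exponent $1/2$ and uses the cutoff as a harmonic-sum bound $\sum_{d_1<\sqrt{H}}1/d_1\ll\log H$, whereas you shift to exponent $\sigma>1/2$ by Rankin's trick, paying $H^{\sigma-1/2}$ so that the corresponding Euler product converges to $\zeta(2\sigma)$ independently of how many such primes there are. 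For the $p\equiv 1$ part the paper compares factor by factor against $N_\eigen^\epsilon\ge\prod_p 2^\epsilon$, so that only the finitely many primes with $(1-p^{-1/2})2^\epsilon<1$ contribute a constant, while you use Cauchy--Schwarz together with $\omega\le\log_2 N_\eigen$ to get $\exp(C_\sigma\sqrt{\log_2 N_\eigen})=N_\eigen^{o(1)}$. Both arguments are complete and rest on the same two structural facts; the paper's comparison is the more elementary, while your Rankin route is more systematic, gives the slightly more precise intermediate bound $\exp(O(\sqrt{\log N_\eigen}))$ for the $p\equiv 1$ contribution, and would adapt directly to other exponents or weights.
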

\begin{proof}
Write $\eigen = \eigen_1^2\eigen_1$ where $\eigen_1 =
2^r\prod_{q_k=3\bmod 4} q_k^{b_k}$ is a product of powers of primes
$q_k=3\bmod 4$ and possibly a power of $2$, and
$\eigen_2=2^{c}\prod_j p_j^{a_j}$ is a product of powers of primes
$p_j=1\bmod 4$, possibly times $2$ ($c=0,1$). Then
$$
N_\eigen =\prod_j(a_j+1).
$$
Likewise we write $d=d_1^2d_2$ in the same fashion, so that  $d\mid
\eigen$ is equivalent to $d_1\mid \eigen_1$ and $d_2\mid \eigen_2$.

The sum over $d$'s is bounded by
$$
\sum_{\substack{d\mid\eigen\\d=\Box+\Box\\d<H }} \frac 1{\sqrt{d}}
\ll \sum_{\substack{d_1 \mid \eigen_1 \\ d_1<\sqrt{H}}}  \frac
1{d_1} \sum_{d_2\mid \eigen_2} \frac 1{\sqrt{d_2}} \ll \log H
\sum_{d_2\mid \eigen_2} \frac 1{\sqrt{d_2}} \ll \log N_\eigen
\sum_{d_2\mid \eigen_2} \frac 1{\sqrt{d_2}},
$$
where in the sum over $d_2$  we have dropped the condition $d<H$.

It now suffices to show that for all $\epsilon>0$, there is some
$C(\epsilon)>0$ so that
$$
\sum_{d_2\mid \eigen_2} \frac 1{\sqrt{d_2}} \leq C(\epsilon)
N_\eigen^\epsilon.
$$

Ignoring the possible factor of $2$,
$$ \sum_{d_2\mid \eigen_2}
\frac 1{\sqrt{d_2}} \ll \prod_j \left(1+\frac 1{\sqrt{p_j}} +\dots +
\frac 1{p_j^{(a_j+1)/2}} \right) \leq \prod_j \frac 1{1-\frac
1{\sqrt{p_j}}}.
$$
Recalling that $N_\eigen = \prod_j(a_j+1)\geq \prod_j 2$ we find
\begin{equation}
\begin{split}
\frac{1}{N_\eigen^\epsilon}  \sum_{d_2\mid \eigen_2} \frac
1{\sqrt{d_2}}   & \ll\prod_{\substack{p_j\mid \eigen\\p_j=1\bmod
4}} \frac 1{(1-\frac 1{\sqrt{p_j}}) 2^\epsilon} \\
&\leq \prod_{*}\frac 1{(1-\frac 1{\sqrt{p}})2^\epsilon}=:C(\epsilon),
\end{split}
\end{equation} where in the last line, the product is over all
primes satisfying $(1-\frac 1{\sqrt{p}})2^\epsilon<1$. This gives
$\sum_{d_2\mid \eigen_2}   1/ \sqrt{d_2} \leq C(\epsilon)
N_\eigen^\epsilon $ as claimed.
\end{proof}

\section{Bounds for the higher moments of $r$ and its derivatives}
\label{sec:4th moment r and der along gamma}

\begin{lemma}
\label{lem:4th moment r and der along gamma} We have the following
estimates on the $4$th moments of the covariance function and its
various derivatives along a (smooth) reference curve $\gamma$ with
nowhere vanishing curvature:
\begin{equation}
\label{eq:4th moment r along gamma}
\iint\limits_{[0,L]^{2}}r(t_{1},t_{2})^{4}dt_{1}dt_{2}
 = O\left( \frac{1}{N_{m}^{3/2} }\right),
\end{equation}
\begin{equation*}
\frac{1}{m^{2}}\iint\limits_{[0,L]^{2}}r_1(t_{1},t_{2})^{4}dt_{1}dt_{2}
 = O\left( \frac{1}{N_{m}^{3/2}} \right),
\end{equation*}
\begin{equation*}
\frac{1}{m^{2}}\iint\limits_{[0,L]^{2}}r_2(t_{1},t_{2})^{4}dt_{1}dt_{2}
 = O\left( \frac{1}{N_{m}^{3/2}} \right),
\end{equation*}
\begin{equation*}
\frac{1}{m^{4}}\iint\limits_{[0,L]^{2}}r_{12}(t_{1},t_{2})^{4}dt_{1}dt_{2}
 = O\left( \frac{1}{N_{m}^{3/2}} \right).
\end{equation*}

\end{lemma}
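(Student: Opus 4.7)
The plan is to Fourier-expand each integrand as a $4$-fold sum over $\vE$, carry out the $(t_{1},t_{2})$-integration, and apply the van der Corput estimate (Lemma \ref{osc int curve}) to the remaining oscillatory integrals, reducing everything to counting $4$-correlations of lattice points in $\vE$. Using the symmetry $\vE = -\vE$ we write $r = \frac{1}{N_{\eigen}}\sum_{\mu \in \vE}e^{2\pi i\langle\mu, \gamma(t_{1})-\gamma(t_{2})\rangle}$; expanding $r^{4}$ and integrating yields
$$\iint_{[0,L]^{2}} r^{4}\,dt_{1}dt_{2} = \frac{1}{N_{\eigen}^{4}}\sum_{\mu_{1},\ldots,\mu_{4}\in\vE}|J(\xi)|^{2},\qquad \xi := \sum_{j=1}^{4}\mu_{j},$$
where $J(\xi) = \int_{0}^{L}e^{2\pi i\langle\xi,\gamma(t)\rangle}\,dt$. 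The cases of $r_{1}$, $r_{2}$, $r_{12}$ admit the identical formula (after the stated $\eigen$-normalizations) with additional amplitudes of the form $\prod_{j}\langle\mu_{j}/|\mu_{j}|,\dot\gamma(t_{k})\rangle$ inside the integrals; these satisfy $\|A\|_{\infty}\le 1$ and $\|A'\|_{1}$ uniformly bounded by the smoothness of $\gamma$, so Lemma \ref{osc int curve} still delivers $|J(\xi)|\ll|\xi|^{-1/2}$ for $\xi \ne 0$.

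Next, I would split the sum according to whether $\xi = 0$ (diagonal) or $\xi \neq 0$ (off-diagonal). Setting $R_{4}(\xi) = \#\{(\mu_{1},\ldots,\mu_{4})\in\vE^{4}:\mu_{1}+\mu_{2}+\mu_{3}+\mu_{4}=\xi\}$, the diagonal contribution equals $L^{2} R_{4}(0)/N_{\eigen}^{4}$. Since $\vE = -\vE$, $R_{4}(0)$ is precisely the additive energy $E(\vE)$; the ``trivial'' pairings (where $\{\mu_{1},\mu_{2}\} = \{-\mu_{3},-\mu_{4}\}$ as multisets) contribute $O(N_{\eigen}^{2})$, while the non-trivial correlations can be bounded by $O(N_{\eigen}^{5/2+o(1)})$ via the Bombieri-Bourgain estimate for the additive energy of lattice points on the circle. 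Together, these yield a diagonal bound of $O(N_{\eigen}^{-3/2+o(1)})$.

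For the off-diagonal, Lemma \ref{osc int curve} reduces the problem to bounding
$$\frac{1}{N_{\eigen}^{4}}\sum_{\xi\neq 0}\frac{R_{4}(\xi)}{|\xi|}.$$
Writing $R_{4}=R_{2}\ast R_{2}$ as a convolution of $2$-correlation counts $R_{2}(\eta) = \#\{(\mu,\mu')\in\vE^{2}:\mu+\mu'=\eta\}$ reduces matters to the same type of arithmetic machinery already used for Proposition \ref{prop:sum}: Mordell's theorem on representing binary quadratic forms as sums of two squares, together with Jarnik's theorem on lattice points in short arcs, gives dyadic control of $R_{2}$, and summing in $|\xi|$ dyadically (noting $|\xi|\ll\sqrt{\eigen}$) yields an off-diagonal contribution of the same order $O(N_{\eigen}^{-3/2+o(1)})$.

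The main obstacle is the uniform arithmetic control of the $4$-point correlation counts $R_{4}(\xi)$, and specifically the additive energy bound $E(\vE) \ll N_{\eigen}^{5/2+o(1)}$; both inputs are delicate but rest on the same classical lattice-point estimates already exploited in Proposition \ref{prop:sum}. The three derivative cases are handled identically, the extra smooth bounded amplitudes being inert as regards the order of magnitude of the estimates.
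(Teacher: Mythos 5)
Your overall architecture --- Fourier-expand the fourth power as a sum over quadruples in $\vE$, integrate, apply Lemma \ref{osc int curve} to get $|J(\xi)|\ll|\xi|^{-1/2}$ for $\xi=\mu_1+\mu_2+\mu_3+\mu_4\ne 0$, and split into the diagonal $\xi=0$ and the off-diagonal --- is exactly the paper's, including the observation that the amplitudes coming from the derivatives are inert for Lemma \ref{osc int curve}. Two remarks, one minor and one substantive.

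Minor: for the diagonal you invoke an additive-energy bound $O(N_\eigen^{5/2+o(1)})$ attributed to Bombieri--Bourgain. The additive energy of $\vE$ is in fact $O(N_\eigen^{2})$ by an elementary observation (Zygmund, as used in the paper): for fixed $\mu_1+\mu_2\ne 0$, any $\mu_3,\mu_4\in\vE$ with $\mu_3+\mu_4=-(\mu_1+\mu_2)$ lie on the intersection of two circles of radius $\sqrt{\eigen}$, hence there are at most two ordered pairs; the degenerate case $\mu_1=-\mu_2$ contributes another $N_\eigen^2$. Your version imports a much deeper result than needed and, because of the $+o(1)$ in the exponent, only yields $O(N_\eigen^{-3/2+o(1)})$ rather than the stated $O(N_\eigen^{-3/2})$ (harmless for the applications, but not the lemma as written).

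Substantive: the off-diagonal reduction via $R_4=R_2\ast R_2$ does not close as described. Off the origin one has $R_2\le 2$, so the convolution gives only the pointwise bound $R_4(\xi)=O(N_\eigen^2)$, and then $\sum_{0<|\xi|\ll\sqrt{\eigen}}R_4(\xi)/|\xi|\ll N_\eigen^2\sqrt{\eigen}$, which is far too large since $\eigen$ is not controlled by any power of $N_\eigen$. The sum $\sum_{\xi\ne 0}R_4(\xi)/|\xi|$ is genuinely a four-fold object and is not reduced to the two-fold sum of Proposition \ref{prop:sum} by convolution alone. What is actually needed (the content of the paper's Lemma \ref{lem:sum 1/mu1+...mu4}) is a three-range split with a free cutoff $A$: for $0<|\xi|\le A$ use the Zygmund count again ($O(N_\eigen^2)$ quadruples per target $\xi$), giving a contribution $\ll N_\eigen^2 A$; for $A\le|\xi|\le N_\eigen^{3/2}$ show via Jarnik that, given $\mu_1,\mu_2,\mu_3$, the fourth vector is confined to an arc of length $\ll N_\eigen^{3/2}\ll \eigen^{1/3}$ and hence has at most two choices, so the total mass of $R_4$ in that annulus is $O(N_\eigen^3)$ and the contribution is $\ll N_\eigen^3/A$; beyond $N_\eigen^{3/2}$ bound trivially by $N_\eigen^4/N_\eigen^{3/2}$. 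Optimizing $A=N_\eigen^{1/2}$ gives $O(N_\eigen^{5/2})$ and hence the stated $O(N_\eigen^{-3/2})$. You name Jarnik and dyadic summation, but these specific counting inputs and the optimization over the cutoff are where the proof lives, and they are missing from your sketch.
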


\begin{proof}
Abbreviating $e(z):=e^{2\pi i z}$, we have
\begin{multline*}
\iint\limits_{[0,L]^{2}}r(t_{1},t_{2})^{4}dt_{1}dt_{2} \\=
\frac{1}{N_{m}^{4}} \sum_{\mu_1,\dots,\mu_4\in \vE}\iint_{[0,L]^{2}}
e(\left\langle \mu_{1}+\mu_{2}+\mu_{3}+\mu_{4},
\gamma(t_{1})-\gamma(t_{2})\right\rangle)dt_{1}dt_{2}
\\= \frac{1}{N_{m}^{4}}
\sum_{\mu_1,\dots,\mu_4\in \vE}
\left|I_{1}(\mu_{1},\mu_{2},\mu_{3},\mu_{4})\right| ^{2}
\end{multline*}
with
\begin{equation}
\label{eq:I1 def} I_{1}(\mu_{1},\mu_{2},\mu_{3},\mu_{4}) =
\int\limits_{[0,L]} e\left(\left\langle
\mu_{1}+\mu_{2}+\mu_{3}+\mu_{4}, \gamma(t)\right\rangle \right)dt.
\end{equation}
Now by Lemma \ref{osc int curve}, for
$\mu_{1}+\mu_{2}+\mu_{3}+\mu_{4}\ne 0$ we have the estimate
\begin{equation}
\label{eq:I1<<1/|mu1+...mu4|^1/2}
|I_{1}(\mu_{1},\mu_{2},\mu_{3},\mu_{4})| \ll
\frac{1}{|\mu_{1}+\mu_{2}+\mu_{3}+\mu_{4}|^{1/2}}.
\end{equation}
Hence
\begin{equation*}
\begin{split}
&\iint_{[0,L]^{2}}r(t_{1},t_{2})^{4}dt_{1}dt_{2} \ll
\frac{1}{N_{m}^{2}} + \frac{1}{N_{m}^{4}} \sum_{\substack {\mu_1,\dots,\mu_4\in \vE\\
\mu_{1}+\mu_{2}+\mu_{3}+\mu_{4}\ne
0}}\frac{1}{\|\mu_{1}+\mu_{2}+\mu_{3}+\mu_{4}\|},
\end{split}
\end{equation*}
since for given $\mu_{1}, \mu_{2}\in\Ec$ with $\mu_{1}\ne -\mu_{2}$
there exist (precisely) $2$ choices for $\mu_{3},\mu_{4}\in\Ec$ so
that $$\mu_{1}+\mu_{2}+\mu_{3}+\mu_{4}=0,$$ by an elementary
argument due to Zygmund ~\cite{Zygmund}. The estimate \eqref{eq:4th
moment r along gamma} now follows from Lemma \ref{lem:sum
1/mu1+...mu4}.

For the derivative $r_1$ we have:
\begin{equation}
\label{eq:r1 4th moment << sum mui}
\begin{split}
&\iint_{[0,L]^{2}}r_1(t_{1},t_{2})^{4}dt_{1}dt_{2}=
\frac{(2\pi)^{4}}{N_{m}^{4}}\sum_{\mu_1,\dots,\mu_4\in
\vE}I_{2}(\mu_{1},\mu_{2},\mu_{3},\mu_{4})\cdot
\overline{I_{1}(\mu_{1},\mu_{2},\mu_{3},\mu_{4})},
\end{split}
\end{equation}
where $I_{1}$ was defined in \eqref{eq:I1 def}, and
\begin{equation}
\label{eq:I2 def} I_{2}(\mu_{1},\mu_{2},\mu_{3},\mu_{4}) = \int_0^L
e(\left\langle \mu_{1}+\mu_{2}+\mu_{3}+\mu_{4},
\gamma(t)\right\rangle)
\mu_{1}^{t}\dot{\gamma}(t)\mu_{2}^{t}\dot{\gamma}(t)
\mu_{3}^{t}\dot{\gamma}(t)\mu_{4}^{t}\dot{\gamma}(t)dt.
\end{equation}
We invoke Lemma \ref{osc int curve} again to yield the bound
\begin{equation}
\label{eq:I1<<m^{2}/|mu1+...mu4|^1/2} |I_{2}| \ll m^{2} \cdot
\frac{1}{|\mu_{1}+\mu_{2}+\mu_{3}+\mu_{4}|^{1/2}},
\end{equation}
so that combined with the estimate \eqref{eq:I1<<1/|mu1+...mu4|^1/2}
and \eqref{eq:r1 4th moment << sum mui} it implies
\begin{equation*}
\frac{1}{m^{2}}\iint\limits_{[0,L]^{2}}r_1(t_{1},t_{2})^{4}dt_{1}dt_{2}
\ll \frac{1}{N_{m}^{2}} + \frac{1}{N_{m}^{4}} \sum\limits_{\substack {\mu_1,\dots,\mu_4\in \vE\\
\mu_{1}+\mu_{2}+\mu_{3}+\mu_{4}\ne
0}}\frac{1}{\|\mu_{1}+\mu_{2}+\mu_{3}+\mu_{4}\|},
\end{equation*}
yielding the statement of the present lemma in this case as before,
via Lemma \ref{lem:sum 1/mu1+...mu4}. The argument for $r_{2}$ is
identical.

For the second mixed derivative $r_{12}$ we have:
\begin{equation*}
r_{12} (t_{1},t_{2}) = -\frac{(2\pi)^{2}}{N_{m}}\sum_{\mu\in \vE}
\mu^{t}\dot{\gamma}(t_{1}) \mu^{t}\dot{\gamma}(t_{2})
e\left(\left\langle \mu_{1}+\mu_{2}+\mu_{3}+\mu_{4},
\gamma(t_{1})-\gamma(t_{2})\right\rangle \right),
\end{equation*}
and
\begin{multline*}
r_{12} (t_{1},t_{2})^{4}=\\
\frac{(2\pi)^{8}}{N_{m}^{4}}\sum_{\mu_1,\dots,\mu_4\in \vE}
\prod_{j=1}^4 \langle \mu_j,\dot\gamma(t_1)\rangle\cdot \langle
\mu_j,\dot\gamma(t_2)\rangle
\times  e\left(\left\langle \mu_{1}+\mu_{2}+\mu_{3}+\mu_{4},
\gamma(t_{1})-\gamma(t_{2})\right\rangle \right)
\end{multline*}
so that by separation of variables and upon recalling \eqref{eq:I2
def}, we have
\begin{equation*}
\iint\limits_{[0,L]^{2}}r_{12}(t_{1},t_{2})^{4}dt_{1}dt_{2}=
\frac{(2\pi)^{8}}{N_{m}^{4}}\sum_{\mu_1,\dots,\mu_4\in \vE}
|I_{2}(\mu_{1},\mu_{2},\mu_{3},\mu_{4})|^{2}\;,
\end{equation*}
and invoking \eqref{eq:I1<<m^{2}/|mu1+...mu4|^1/2} (valid for
$\mu_{1}+\mu_{2}+\mu_{3}+\mu_{4}\ne 0$), we finally have
\begin{equation*}
\begin{split}
\frac{1}{m^{4}}\iint\limits_{[0,L]^{2}}r_{12}(t_{1},t_{2})^{4}dt_{1}dt_{2}
&\ll \frac{1}{N_{m}^{2}} + \frac{1}{N_{m}^{4}} \sum\limits_{\substack {\mu_1,\dots,\mu_4\in \vE\\
\mu_{1}+\mu_{2}+\mu_{3}+\mu_{4}\ne
0}}\frac{1}{\|\mu_{1}+\mu_{2}+\mu_{3}+\mu_{4}\|} \\&=
O\left(\frac{1}{N_{m}^{3/2}}\right),
\end{split}
\end{equation*}
by Lemma \ref{lem:sum 1/mu1+...mu4}.
\end{proof}

\begin{lemma}
\label{lem:sum 1/mu1+...mu4} We have the following bound
\begin{equation}
\label{eq:sum 1/mu1+...mu4}
\sum_{\substack{\mu_1,\dots,\mu_4\in \vE\\
\mu_{1}+\mu_{2}+\mu_{3}+\mu_{4}\ne 0}}
\frac{1}{\|\mu_{1}+\mu_{2}+\mu_{3}+\mu_{4}\|} =
O\left(N_{m}^{5/2}\right).
\end{equation}
\end{lemma}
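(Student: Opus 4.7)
The plan is to split $T := \sum_{(\mu_i)\in \vE^4,\ s\ne 0} 1/|s|$, where $s := \mu_1+\mu_2+\mu_3+\mu_4$, into a \emph{degenerate} part (at least one of the pair sums $\mu_1+\mu_2$ or $\mu_3+\mu_4$ vanishes) and a \emph{main} part (both pair sums are nonzero). In the degenerate part, symmetry reduces to the case $\mu_2=-\mu_1$; the factor $N_\eigen$ from summing over $\mu_1$, together with the change of variable $\mu_4'=-\mu_4$ (legal since $-\vE=\vE$) and Proposition~\ref{prop:sum}, bounds the degenerate contribution by $O(N_\eigen\cdot N_\eigen^\epsilon)=O(N_\eigen^{1+\epsilon})$, well within the target.

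For the main contribution I would set $a := \mu_1+\mu_2$ and $b := \mu_3+\mu_4$, each ranging over $W := (\vE+\vE)\setminus\{0\}$ with multiplicity $R_2 \le 2$, so that the main term is at most $4\sum_{a,b\in W,\ a+b\ne 0} 1/|a+b|$. I would then dyadically decompose the outer sum by $|s|^2 = |a+b|^2 \in [K, 2K)$ and bound each level by $K^{-1/2}\cdot\#\{(a,b)\in W^2: |a+b|^2\in [K, 2K)\}$.

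For $K \gtrsim \eigen$ the trivial bound $|W|^2\le N_\eigen^4$ combined with the divisor-type estimate $N_\eigen\ll \eigen^{o(1)}$ (which forces $\eigen \gg N_\eigen^3$ for $\eigen$ large) gives a contribution $\ll N_\eigen^4/\sqrt{\eigen}\ll N_\eigen^{5/2}$. For $K \ll \eigen^{1/3}$ I would invoke Jarnik's theorem in the spirit of Lemma~\ref{lem:divisorsum}: writing $b=\mu+\nu$ with $\mu,\nu\in\vE$, the constraint $|a+b|^2\le K$ forces $\mu$ into a short arc of the circle of radius $\sqrt{\eigen}$ containing at most two lattice points, giving $O(N_\eigen^3)$ admissible pairs $(a,b)$ in total.

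The main obstacle is the intermediate regime $\eigen^{1/3}\ll K\ll \eigen$, where neither the small-$K$ Jarnik bound nor the large-$K$ trivial count is sharp enough on its own. I would close this gap by invoking the Mordell--Pall characterization (Corollary~\ref{cor:Mordell}) to restrict $|a|^2=2h_a$ to $h_a\in\mathcal H$ with multiplicity $A(\eigen,h_a)\ll h_a^{o(1)}$, and bounding the resulting summation over $h_a$ via the divisor-sum estimate of Lemma~\ref{lem:divisorsum}. Assembling the dyadic estimates across all scales then yields $T=O(N_\eigen^{5/2})$, with any logarithmic losses absorbed into an arbitrarily small power of $N_\eigen$ in the implied constant.
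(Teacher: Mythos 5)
Your reduction to pairs $(a,b)\in W^2$ with $W=(\vE+\vE)\setminus\{0\}$ and multiplicity at most $2$, and your disposal of the degenerate terms via Proposition~\ref{prop:sum}, are both fine. The genuine gap is in the small-$K$ regime. Your Jarnik count fixes $a$ and $\nu=\mu_3$ (that is $N_{\eigen}^2\cdot N_{\eigen}$ choices) and then constrains $\mu_4$ to a short arc, so it produces $O(N_{\eigen}^{3})$ admissible quadruples \emph{in total} over all $K\ll \eigen^{1/3}$; since at the bottom scale each of these contributes $1/\|s\|\asymp 1$, this regime yields only $O(N_{\eigen}^{3})$, not $O(N_{\eigen}^{5/2})$ --- and the exponent $5/2$ is exactly what is needed downstream (it is what makes $\meas(B)=O(N_{\eigen}^{-3/2})$ rather than $O(N_{\eigen}^{-1})$, the latter being of the same order as the main term of the variance). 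The missing idea, which is the paper's range (i), is a further cutoff at $\|s\|\le A$ with $A=N_{\eigen}^{1/2}$: for each \emph{fixed} target $v\ne 0$, Zygmund's observation gives at most $2N_{\eigen}^{2}$ quadruples with $\mu_1+\mu_2+\mu_3+\mu_4=v$ (choose $\mu_1,\mu_2$ freely; then $\mu_3+\mu_4=v-\mu_1-\mu_2$ has at most two solutions), and $\sum_{0<\|v\|\le A}\|v\|^{-1}\ll A$, so this range contributes $O(N_{\eigen}^{2}A)=O(N_{\eigen}^{5/2})$. One needs all three counting devices --- per-$v$ Zygmund for $\|s\|\le N_{\eigen}^{1/2}$, per-triple Jarnik for $N_{\eigen}^{1/2}\le\|s\|\le N_{\eigen}^{3/2}$, and the trivial bound for $\|s\|\ge N_{\eigen}^{3/2}$ --- each giving $N_{\eigen}^{5/2}$.

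Your ``main obstacle,'' the intermediate regime $\eigen^{1/3}\ll K\ll\eigen$, is in fact empty of difficulty and needs no Mordell--Pall input. Since $N_{\eigen}\ll \eigen^{o(1)}$ (which you already invoke), one has $N_{\eigen}^{3}\ll \eigen^{1/3}$ for $\eigen$ large; the trivial bound $N_{\eigen}^{4}K^{-1/2}$, summed geometrically over dyadic $K\ge N_{\eigen}^{3}$, is already $O\bigl(N_{\eigen}^{4}\cdot N_{\eigen}^{-3/2}\bigr)=O(N_{\eigen}^{5/2})$, so the Jarnik range and the trivial range overlap and cover everything above $N_{\eigen}^{1/2}$. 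The Mordell/divisor-sum step you sketch is therefore superfluous, and as stated it would not obviously control the joint count of pairs $(a,b)$ with $|a+b|^{2}\sim K$ in any case: Corollary~\ref{cor:Mordell} constrains the multiplicity of $|a|^{2}$, not the correlation between $a$ and $b$.
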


\begin{proof}

Let us denote $v=\mu_{1}+\mu_{2}+\mu_{3}+\mu_{4}$. We choose a big
parameter $A>0$ and split the summation into $3$ ranges:
\begin{enumerate}

\item $\|v\| \le A$.

We invoke Zygmund's elementary observation ~\cite{Zygmund} again to
deduce that, given $\mu_{1}$ and $\mu_{2}$ and $v$ such that
$$\mu_{1}+\mu_{2}\ne v,$$ there are (at most) two choices for
$\mu_{3},\mu_{4}\in \Ec$ that solve
$$\mu_{1}+\mu_{2}+\mu_{3}+\mu_{4}=v .$$ Therefore we may bound the contribution to the
sum \eqref{eq:sum 1/mu1+...mu4} of this range as
\begin{equation}
\label{eq:sum mu1-4 v<=A}
\begin{split}
&\sum\limits_{\substack {\mu_1,\dots,\mu_4\in \vE\\
\|\mu_{1}+\mu_{2}+\mu_{3}+\mu_{4}\| \le A}}
\frac{1}{\|\mu_{1}+\mu_{2}+\mu_{3}+\mu_{4}\|} \\&\le
N_{m}^{2}\cdot\sum\limits_{\|v\|\le A} \frac{1}{\|v\|} \ll
N_{m}^{2}\int\limits_{1\le |x\|\le A} \frac{dx}{\|x\|} =
N_{m}^{2}\int\limits_{1}^{A}dt =A \cdot N_{m}^2,
\end{split}
\end{equation}
by comparing the sum $\sum\limits_{\|v\|\le A}\frac{1}{\|v\|}$ to
the integral $\int\limits_{1\le |x\|\le A} \frac{dx}{\|x\|}.$

\item $A \le \|v\| \le N_{m}^{3/2}$.

We claim that given $\mu_{1},\mu_{2},\mu_{3}$ there exist at most
$2$ lattice points $\mu_{4}$ that lie in the relevant range so that
$\|\mu_{1}+\mu_{2}+\mu_{3}+\mu_{4}\| \le N_\eigen^{3/2}$. Once
established the above, the contribution of this range is, bounding
the summands point-wise,
\begin{equation}
\label{eq:sum mu1-4 A<=v<N^3/2}
\sum\limits_{\substack {\mu_1,\dots,\mu_4\in \vE\\
A \le \|\mu_{1}+\mu_{2}+\mu_{3}+\mu_{4}\| \le N_{m}^{3/2}}}
\frac{1}{\|\mu_{1}+\mu_{2}+\mu_{3}+\mu_{4}\|} \le \frac{1}{A}\cdot
N_{m}^{3}.
\end{equation}

To see that indeed, given $\mu_{1},\ldots \mu_{3}$ there are at most
two vectors $\mu_{4}$ that return us to the relevant range, we
consider the geometric picture. Let $\mu_{1},\mu_{2},\mu_{3}$ be
fixed, define $w= \mu_{1}+\mu_{2}+\mu_{3}$ and suppose that there
exists a vector $\mu_{4}$ so that
$v=\mu_{1}+\mu_{2}+\mu_{3}+\mu_{4}$ satisfies $N_{m}/\log{N_{m}} \le
\|v\| \le N_{m}^{3/2}$ indeed. By the triangle inequality, the
vector $w$ satisfies $$\sqrt{m}-N_{m}^{3/2} \le \|w\| \le
N_{m}^{3/2}+ \sqrt{m};$$ adding the vector $\mu_{4}$ translates it
to a circle of a small radius $N_{m}\log{N_{m}}$ around the origin,
which means that $\mu_4$ has to be on a circular arc of angle
$\alpha$ of the order at most
$$\alpha \sim \sin\alpha\le \frac{N_{m}^{3/2}}{\sqrt{m}-N_{m}^{3/2}},$$
with arc length $\le \sqrt{m}\frac{N_{m}^{3/2}}{\sqrt{m}-N_{m}^{3/2}}$,
which is much smaller than $m^{1/3}$, so by Jarnik
there exists at most two such lattice points, as claimed.

\item $\|v\| \ge N_{m}^{3/2}$.

Here it is sufficient to bound the summands in \eqref{eq:sum
1/mu1+...mu4} pointwise; since the total number of summands is
$N_{m}^{4}$ the sum is bounded as
\begin{equation}
\label{eq:sum mu1-4 v>=N^3/2}
\sum_{\substack{\mu_1,\dots,\mu_4\in \vE\\
\|\mu_{1}+\mu_{2}+\mu_{3}+\mu_{4}\| \ge N_{m}^{3/2}}}
\frac{1}{\|\mu_{1}+\mu_{2}+\mu_{3}+\mu_{4}\|} \le
\frac{1}{N_{m}^{3/2}} \cdot N_{m}^4 = N_{m}^{5/2}.
\end{equation}

\end{enumerate}

Consolidating \eqref{eq:sum mu1-4 v<=A}, \eqref{eq:sum mu1-4
A<=v<N^3/2} and \eqref{eq:sum mu1-4 v>=N^3/2} we find that the sum
\eqref{eq:sum 1/mu1+...mu4} is bounded by
\begin{equation*}
A\cdot N_{m}^{2} + \frac{1}{A}\cdot N_{m}^{3}   +N_{\eigen}^{5/2},
\end{equation*}
and the lemma follows by taking $A=N_{m}^{1/2}$.
\end{proof}

\section{Fluctuations of the leading constant}

\label{sec:leading constant}

\subsection{Some basic observations}

\label{sec:fluct leading basic}

Recall that given $m$ we denoted $\Ec$ to be the set of lattice points
on the circle of radius $\sqrt{m}$, and that we defined the probability measures
$\tau_{m}$ on $\Sc^{1}$ as in \eqref{eq:taum def}.
We may then rewrite $B_{\curve}(\Ec)$
\eqref{def of B in int} as
\begin{equation*}
B_\curve(\vE):=
\int_{\curve}\int_{ \curve}  \int\limits_{\mathcal{S}^{1}} \langle
\theta,\dot{\gamma}(t_{1}) \rangle^{2} \langle
\theta,\dot{\gamma}(t_{2}) \rangle ^{2} d\tau_{m}(\vartheta) dt_{1}dt_{2}.
\end{equation*}
More generally, for any probability measure $\tau$ on $\Sc^{1}$,
invariant w.r.t. $\frac{\pi}{2}$--rotations and the reflection
$(x,y)\mapsto (x,-y)$ we define the number
\begin{equation}
\label{eq:c(mu,gamma) def}
c(\tau,\gamma) = \int\limits_{0}^{L}\int\limits_{0}^{L}
\int\limits_{\mathcal{S}^{1}} \langle
\theta,\dot{\gamma}(t_{1}) \rangle^{2} \langle
\theta,\dot{\gamma}(t_{2}) \rangle ^{2} d\tau(\vartheta)dt_{1}dt_{2} =
\int\limits_{\mathcal{S}^{1}}d\tau(\theta) \left[\int\limits_{0}^{L}
\langle \theta,\dot{\gamma}(t) \rangle ^{2}dt \right]^{2},
\end{equation}
so that, in particular,
\begin{equation}
\label{eq:B=c(tau,gamma)}
B_\curve(\vE) = c(\tau_{m},\gamma).
\end{equation}

The leading constant \eqref{eq:c(mu,gamma) def} is intimately
related with the (weak) limiting angular distribution of lattice
points in $\Ec$. As usual when we deal with convergence of measures,
weak convergence is denoted by ``$\Rightarrow$". Thus if $\{m_{j}
\}$ is a subsequence of energy levels such that
$\tau_{m_{j}}\Rightarrow \tau$ for some probability measure $\tau$
on $\Sc^{1}$ then
$$c(\tau_{m_{j}},\gamma)\rightarrow c(\tau,\gamma).$$
Therefore the variety of limiting values of $B$ is related to the
weak partial limits of $\{\tau_{\eigen}\}$, i.e. probability
measures $\tau$ on $\Sc^{1}$ such that for some subsequence
$\eigen_{j}$ of energy levels, such that
$\tau_{\eigen_{j}}\Rightarrow\tau$. The classification of {\em all}
such measures $\tau$, called {\em attainable}, was first addressed
in ~\cite{KKW}, and was subsequently studied in more detail in
~\cite{KW}. It is well known that the lattice points $\Ec$ are
equidistributed on $\Sc^{1}$ along generic subsequences of energy
levels (see e.g. \cite{FKW}, Proposition 6) in the sense that
$\tau_{m_{j}}\Rightarrow \frac{1}{2\pi}d\theta$ along some density
$1$ sequence $\{m_{j}\}$, and thus, in particular, the normalized arc-length
measure $\frac{1}{2\pi}d\theta$ on $\Sc^{1}$ is attainable. Among
other things it was shown in ~\cite{KW} that for $\tau$ attainable
the value of the Fourier transform $\widehat{\tau}(4)$ attains the
whole interval $[-1,1]$, a fact that is going to be important in the
example considered in section \ref{sec:leading const circ} below.

\subsection{An example: explicit computation of $c(\tau,\gamma)$ for circular arcs}

\label{sec:leading const circ}

Let $\curve$ be the circular arc $$\gamma(t) =
(r\cos(t/r),r\sin(t/r)),$$ $t\in [0,L]$. Here we obtain
after some elementary manipulations
\begin{equation}
\label{eq:c(mu,gamma) circle explicit}
\begin{split}
c(\tau,\gamma) =
\frac{1}{4}L^{2} +\frac{1}{8}r^{2}\sin^{2}(L/r)
+\frac{1}{8}r^{2}\sin^{2}(L/r)\cos(2L/r) \cdot \widehat{\tau}(4),
\end{split}
\end{equation}
where we exploited the $\pi/2$--invariance of $\tau$ to write $\widehat{\tau}(2)=0$.
Since, as it was mentioned in section \ref{sec:fluct leading basic},
all the values of $\widehat{\tau}(4)\in [-1,1]$ are hit by
attainable measures,
the leading constant $4c(\tau,\gamma)-L^{2}$ in \eqref{eq:nodal var
intr asympt} takes all values between
\begin{equation*}
 r^{2}\sin^{4}(L/r) \mbox{ and } r^{2}\sin^{2}(L/r)\cos^{2}(L/r).
\end{equation*}

We may also infer from \eqref{eq:c(mu,gamma) circle explicit} that
if $\gamma$ is a $\frac{1}{8}$-circle plus a multiple of a
quarter-circle ($L/r=\frac{\pi}{4} + k\pi/2$, $k=0,1,2,3$), or a
multiple of a semi-circle ($L/r=\pi,2\pi$), then the leading
constant is independent of $\tau$. For the latter
case the constant {\em vanishes universally}; here the nodal length
fluctuations are of lower
order of magnitude than prescribed by Theorem \ref{thm:nodal var intr asympt}.
The only other case when the leading constant vanishes
occurs for quarter circles plus multiples of semi-circles and
\begin{equation}
\label{eq:tilt Cil measure}
\tau = \frac{1}{4}(\delta_{\pm \pi/4}+\delta_{\pm 3\pi/4})
\end{equation}
the ``tilted Cilleruelo measure" (attainable), name inspired from the ``Cilleruelo measure" ~\cite{KKW,Cil}
\begin{equation}
\label{eq:Cil measure}
\tau = \frac{1}{4}(\delta_{\pm 1}+\delta_{\pm i})
\end{equation}
(when thinking $\Sc^{1}\subseteq\C$); these are excluded from our discussion by
bounding $|\widehat{\tau}(4)|$ away from $\pm 1$ (see e.g. the formulation of Theorem \ref{thm:expected, variance small}).

\subsection{Classification of the leading constants}

By applying the Cauchy-Schwartz inequality on \eqref{eq:c(mu,gamma) def}
it is obvious that for
all $\tau$, $\gamma$, one has $c(\tau,\gamma)\le L^{2}$. A stronger
bound is possible, thanks to the $\pi/2$--rotation invariance of
$\tau$.

We will employ an auxiliary notation in order to rewrite the definition
\eqref{eq:c(mu,gamma) def} of $c(\tau,\gamma)$ in a more useful way
for our purposes.
Given a direction $$\theta=e^{i\vartheta}\in\mathcal{S}^{1}$$ we
denote the $L^{2}$-squared energy of the projection of the tangent
directions of $\gamma$ in the direction $\theta$:
\begin{equation}
\label{eq:A(gamma,theta) def} A(\gamma,\theta):= \int\limits_{0}^{L}
\langle \theta,\dot{\gamma}(t) \rangle ^{2}dt,
\end{equation}
so that we may rewrite \eqref{eq:c(mu,gamma) def} as
\begin{equation}
\label{eq:c(mu,gamma)=int(A^2)} c(\tau,\gamma) =
\int\limits_{\mathcal{S}^{1}} A(\gamma,\theta)^{2}d\tau(\theta).
\end{equation}

\begin{proposition}

\label{prop:c(mu,gamma)=L^2/4}

\begin{enumerate}

\item For all $\tau$ measures on $\mathcal{S}^{1}$, and smooth toral curves $\gamma$ one has
\begin{equation}
\label{eq:c(mu,gamma)<=L^{2}/2} \frac{L^{2}}{4} \le c(\tau,\gamma)\le
L^{2}/2.
\end{equation}

\item The minimum value $$c(\tau,\gamma)= \frac{L^{2}}{4}$$ is attained
for a given measure $\tau$ if and only if for all $\theta$ in the support of $\tau$,
$A(\gamma,\theta)=\frac{L}{2}$.

\end{enumerate}

\end{proposition}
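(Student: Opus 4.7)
The plan is to expand the integrand defining $A(\gamma,\theta)$ via a half-angle identity and then exploit the $\pi/2$-rotation invariance of $\tau$ to kill the cross-term in the square.

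First, since $\gamma$ is a unit-speed parameterization we may write $\dot\gamma(t)=(\cos\phi(t),\sin\phi(t))$ for a measurable angle function $\phi$, and write $\theta=(\cos\vartheta,\sin\vartheta)$. Then $\langle\theta,\dot\gamma(t)\rangle^{2}=\cos^{2}(\vartheta-\phi(t))$, and the identity $2\cos^{2}x=1+\cos 2x$ gives
\begin{equation*}
A(\gamma,\theta)=\frac{L}{2}+\frac{1}{2}J(\gamma,\theta),\qquad J(\gamma,\theta):=\int_{0}^{L}\cos\bigl(2\vartheta-2\phi(t)\bigr)\,dt.
\end{equation*}
Substituting into \eqref{eq:c(mu,gamma)=int(A^2)} and expanding the square yields
\begin{equation*}
c(\tau,\gamma)=\frac{L^{2}}{4}+\frac{L}{2}\int_{\mathcal{S}^{1}}J\,d\tau+\frac{1}{4}\int_{\mathcal{S}^{1}}J^{2}\,d\tau.
\end{equation*}

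The key observation is that the cross-term vanishes: under the substitution $\vartheta\mapsto \vartheta+\pi/2$ one has $\cos(2\vartheta+\pi-2\phi(t))=-\cos(2\vartheta-2\phi(t))$, so $J(\gamma,R_{\pi/2}\theta)=-J(\gamma,\theta)$. Since $\tau$ is invariant under $\pi/2$-rotations, the change of variables gives $\int J\,d\tau=-\int J\,d\tau$, hence $\int J\,d\tau=0$. Therefore
\begin{equation*}
c(\tau,\gamma)=\frac{L^{2}}{4}+\frac{1}{4}\int_{\mathcal{S}^{1}}J(\gamma,\theta)^{2}\,d\tau(\theta),
\end{equation*}
which is the crucial identity from which both claims follow.

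The lower bound $c(\tau,\gamma)\ge L^{2}/4$ is now immediate, and equality holds if and only if $J(\gamma,\theta)=0$ for $\tau$-almost every $\theta$, equivalently $A(\gamma,\theta)=L/2$ on $\operatorname{supp}\tau$; this establishes part (ii) and the left inequality of part (i). For the upper bound, the trivial pointwise estimate $|J(\gamma,\theta)|\le L$ gives $\int J^{2}\,d\tau\le L^{2}$, whence $c(\tau,\gamma)\le L^{2}/4+L^{2}/4=L^{2}/2$. There is no serious obstacle: the entire argument hinges on noticing the half-angle reformulation together with the sign-reversal of $J$ under a quarter-turn, after which the inequalities are automatic.
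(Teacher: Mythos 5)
Your proof is correct, and while it rests on the same underlying mechanism as the paper's, it is packaged differently enough to be worth comparing. The paper pairs each direction $\theta$ with a perpendicular direction $\theta^{\perp}$, uses the complementarity $A(\gamma,\theta)+A(\gamma,\theta^{\perp})=L$ together with the elementary two-point inequality $\frac{(a+b)^{2}}{2}\le a^{2}+b^{2}\le (a+b)^{2}$ for $a,b\ge 0$, and then integrates over the quotient $\mathcal{S}^{1}/i$ using the $\pi/2$-invariance of $\tau$. You instead extract the second Fourier mode directly, writing $A(\gamma,\theta)=\frac{L}{2}+\frac{1}{2}J(\gamma,\theta)$, observing that $J$ changes sign under a quarter turn, so that the invariance kills the cross term and yields the exact identity $c(\tau,\gamma)=\frac{L^{2}}{4}+\frac{1}{4}\int_{\mathcal{S}^{1}}J^{2}\,d\tau$. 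The two routes are equivalent in content ($J(\theta^{\perp})=-J(\theta)$ is precisely the complementarity relation, and $A(\theta)^{2}+A(\theta^{\perp})^{2}=\frac{L^{2}}{2}+\frac{1}{2}J(\theta)^{2}$ recovers the paper's two-point bounds), but your exact formula makes the lower bound, the upper bound, and the equality case of part (ii) all immediate from a single identity; it is essentially the computation the paper defers to the proof of Corollary \ref{cor:lead non vanish}, where the same expansion $A(\gamma,\theta)=\frac{L}{2}+\frac{1}{2}\int_{0}^{L}\cos(2(\vartheta-\varphi(t)))\,dt$ appears. One small point to make explicit: the passage from ``$J=0$ for $\tau$-almost every $\theta$'' to ``$A(\gamma,\theta)=\frac{L}{2}$ for all $\theta$ in $\operatorname{supp}(\tau)$'' uses that $J$ is continuous in $\theta$ (which holds since $\gamma$ is smooth); also note that both your argument and the paper's rely on the standing assumption that $\tau$ is $\pi/2$-rotation invariant, without which the lower bound fails.
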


\begin{proof}

\vspace{5mm}

We observe that for $\theta^{\perp}$ a perpendicular direction to
$\theta$ (any of the two),
$$A(\gamma,\theta)+A(\gamma,\theta^{\perp}) = L,$$ from
which it is easy to show that
\begin{equation}
\label{eq:Atheta+Athetaperp>=1/2} \frac{L^{2}}{2} \le
A(\gamma,\theta)^{2}+ A(\gamma,\theta^{\perp})^{2} \le L^{2}.
\end{equation}
We then use the invariance properties of $\tau$ to write
\eqref{eq:c(mu,gamma)=int(A^2)} as
\begin{equation*}
c(\tau,\gamma) = \int\limits_{\mathcal{S}^{1}/i}
2(A(\gamma,\theta)^{2}+ A(\gamma,\theta^{\perp})^{2})d\tau(\theta),
\end{equation*}
where $\mathcal{S}^{1}/i$ is a quarter of the circle identifying
$\vartheta$ and $\vartheta+\pi/2$ of measure
$$\tau(\mathcal{S}^{1}/i)=\frac{1}{4}$$ by the invariance. It then
readily yields via \eqref{eq:Atheta+Athetaperp>=1/2} that
\begin{equation}
\label{eq:c(mu,gamma)>=L^{2}/4} c(\tau,\gamma) =
\int\limits_{\mathcal{S}^{1}/i} 2(A(\gamma,\theta)^{2}+
A(\gamma,\theta^{\perp})^{2})d\tau(\theta)\ge \frac{L^{2}}{4},
\end{equation}
and also \eqref{eq:c(mu,gamma)<=L^{2}/2}.
This concludes the proof of the first statement of the present
proposition, and, in fact, this proof also yields the second one.
\end{proof}

The following corollary from Proposition \ref{prop:c(mu,gamma)=L^2/4} gives the necessary and
sufficient conditions for the leading constant to vanish (equivalently,
for $c(\tau,\gamma)$ to attain its theoretical minimum $c(\tau,\gamma)=\frac{L^{2}}{4}$).
Define the complex number $\mathcal{I}(\gamma)$ as
\begin{equation*}
\mathcal{I}(\gamma) = \int\limits_{0}^{L}e^{2i\varphi(t)}dt =0,
\end{equation*}
where $\dot{\gamma}(t) = e^{i\varphi(t)}$, i.e. $\varphi(t)$ is the angle of
$\dot{\gamma}(t)$ w.r.t. the coordinate axes.

\begin{corollary}
\label{cor:lead non vanish}
\begin{enumerate}

\item
The minimum value $$c(\tau,\gamma)= \frac{L^{2}}{4}$$ is attained universally
(i.e. for all $\tau$),
if and only if
\begin{equation}
\label{eq:int(e^{2iphi})=0}
\mathcal{I}(\gamma) =0,
\end{equation}

\item

If \eqref{eq:int(e^{2iphi})=0} is not satisfied, then the the only
measures $\tau$ where $c(\tau,\gamma)$ may equal $\frac{L^{2}}{4}$ are the Cilleruelo
measure \eqref{eq:Cil measure} and the tilted Cilleruelo
\eqref{eq:tilt Cil measure}; it will occur if and only if
$$\Re\mathcal{I}(\gamma)=\int\limits_{0}^{L}\cos(2\varphi(t))dt =
0  \mbox{ or }
\Im\mathcal{I}(\gamma)=\int\limits_{0}^{L}\sin(2\varphi(t))dt=0$$
respectively.

\end{enumerate}

\end{corollary}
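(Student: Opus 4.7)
The plan is to reduce everything to an explicit formula for the quantity $A(\gamma,\theta)$ defined in \eqref{eq:A(gamma,theta) def}, and then read off the conclusions via Proposition \ref{prop:c(mu,gamma)=L^2/4}(ii).

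\textbf{Step 1 (explicit formula for $A(\gamma,\theta)$).} Writing $\theta=e^{i\vartheta}$ and $\dot\gamma(t)=e^{i\varphi(t)}$, I would compute
\begin{equation*}
\langle\theta,\dot\gamma(t)\rangle^{2}=\cos^{2}(\varphi(t)-\vartheta)=\tfrac12+\tfrac12\cos(2\varphi(t)-2\vartheta),
\end{equation*}
and integrate in $t$ to obtain
\begin{equation*}
A(\gamma,\theta)=\tfrac{L}{2}+\tfrac12\,\Re\!\left[e^{-2i\vartheta}\,\mathcal{I}(\gamma)\right].
\end{equation*}
In particular $A(\gamma,\theta)=L/2$ if and only if $\Re\!\left[e^{-2i\vartheta}\mathcal{I}(\gamma)\right]=0$.

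\textbf{Step 2 (part (i), the universal case).} By Proposition \ref{prop:c(mu,gamma)=L^2/4}(ii), the equality $c(\tau,\gamma)=L^{2}/4$ holds universally if and only if $A(\gamma,\theta)=L/2$ holds for every $\theta$ in the support of every admissible measure $\tau$. Since admissible $\tau$ can have arbitrarily large support (e.g.\ normalised arc-length, or the uniform measure on the orbit of any $\theta$), this forces $\Re\!\left[e^{-2i\vartheta}\mathcal{I}(\gamma)\right]=0$ for all $\vartheta\in\R$, which is equivalent to $\mathcal{I}(\gamma)=0$. Conversely if $\mathcal{I}(\gamma)=0$ then Step 1 yields $A(\gamma,\theta)\equiv L/2$, whence $c(\tau,\gamma)=L^{2}/4$ for every $\tau$.

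\textbf{Step 3 (part (ii), the non-vanishing case).} Suppose $\mathcal{I}(\gamma)\ne 0$ and write $\mathcal{I}(\gamma)=|\mathcal{I}(\gamma)|e^{i\psi}$. The equation $\Re\!\left[e^{i(\psi-2\vartheta)}\right]=0$ has exactly four solutions modulo $2\pi$, namely
\begin{equation*}
\vartheta\in\left\{\tfrac{\psi}{2}+\tfrac{\pi}{4}+\tfrac{k\pi}{2}:k=0,1,2,3\right\},
\end{equation*}
which form a single orbit under the group generated by $\pi/2$-rotations and the reflection $(x,y)\mapsto(x,-y)$. By Proposition \ref{prop:c(mu,gamma)=L^2/4}(ii), the condition $c(\tau,\gamma)=L^{2}/4$ forces $\supp(\tau)$ to lie inside this four-point orbit, and the symmetry assumption on $\tau$ then forces $\tau$ to be the uniform probability measure on that orbit.

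\textbf{Step 4 (identifying the two limit measures).} Finally I would match the parameter $\psi$ against the two candidates. The four support points coincide with $\{\pm 1,\pm i\}$ (the Cilleruelo measure \eqref{eq:Cil measure}) exactly when $\psi/2+\pi/4\equiv 0\pmod{\pi/2}$, i.e.\ $\psi\equiv\pi/2\pmod{\pi}$, equivalently $\Re\mathcal{I}(\gamma)=0$. They coincide with $\{e^{\pm i\pi/4},e^{\pm 3i\pi/4}\}$ (the tilted Cilleruelo measure \eqref{eq:tilt Cil measure}) exactly when $\psi\equiv 0\pmod{\pi}$, equivalently $\Im\mathcal{I}(\gamma)=0$. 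This gives the two alternatives in the statement.

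The only mildly delicate point is making sure that in Step 2 one is genuinely allowed to conclude $A(\gamma,\theta)=L/2$ for \emph{all} $\theta$, and not merely on a generic set; this is handled by noting that admissible $\tau$ include the rotation-invariant normalised arc-length on $\Sc^{1}$, whose support is everything, so continuity of $A(\gamma,\cdot)$ upgrades the pointwise-a.e.\ statement to an identity. All remaining steps are routine trigonometry together with the characterisation provided by Proposition \ref{prop:c(mu,gamma)=L^2/4}(ii).
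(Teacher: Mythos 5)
Your proposal is correct and follows essentially the same route as the paper: the identity $A(\gamma,\theta)=\tfrac{L}{2}+\tfrac12\Re\bigl[e^{-2i\vartheta}\mathcal{I}(\gamma)\bigr]$ is exactly the paper's expansion $\int_0^L\cos(2(\vartheta-\varphi(t)))\,dt=\cos(2\vartheta)\Re\mathcal{I}(\gamma)+\sin(2\vartheta)\Im\mathcal{I}(\gamma)$, and both arguments then feed this into Proposition \ref{prop:c(mu,gamma)=L^2/4}(ii). The one place you should be more careful is Step 3: the four solutions $\vartheta\in\{\psi/2+\pi/4+k\pi/2\}$ form a single orbit under $\pi/2$-rotations, but they are \emph{not} in general invariant under the reflection $\vartheta\mapsto-\vartheta$; that happens only when $\psi\equiv 0\pmod{\pi/2}$. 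So the phrase ``the symmetry assumption on $\tau$ then forces $\tau$ to be the uniform probability measure on that orbit'' is not quite right as stated — for generic $\psi$ there is \emph{no} admissible (reflection-invariant) $\tau$ supported in that four-point set, and this is precisely what rules out all cases other than the two in the statement. Once you add that observation, Step 4 correctly identifies the reflection-invariant cases $\psi\equiv\pi/2\pmod\pi$ (Cilleruelo, $\Re\mathcal{I}(\gamma)=0$) and $\psi\equiv0\pmod\pi$ (tilted Cilleruelo, $\Im\mathcal{I}(\gamma)=0$). A minor difference worth noting: the paper phrases part (ii) via the dichotomy of whether $(\cos 2\vartheta,\sin 2\vartheta)$ attains two non-colinear values on $\supp(\tau)$ and then quotes that Cilleruelo and tilted Cilleruelo are the only attainable measures failing this, whereas your version derives the two candidate supports directly from the zero set of $\vartheta\mapsto\Re\bigl[e^{-2i\vartheta}\mathcal{I}(\gamma)\bigr]$ together with the symmetry hypotheses alone, without appealing to the classification of attainable measures; after the fix above this is a slightly more self-contained argument.
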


\begin{proof}

Under the notation $\dot{\gamma}(t)
= e^{i\varphi(t)}$ as above,
\begin{equation*}
A(\gamma,\theta) =
\int\limits_{0}^{L}\cos(\vartheta-\varphi(t))^{2}dt = \frac{L}{2} +
\frac{1}{2}\int\limits_{0}^{L}\cos(2(\vartheta-\varphi(t)))dt,
\end{equation*}
and therefore $A(\gamma,\theta) = \frac{L}{2}$ if and only if
\begin{equation*}
\int\limits_{0}^{L}\cos(2(\vartheta-\varphi(t)))dt = 0.
\end{equation*}
Now the latter integral is
\begin{equation*}
\int\limits_{0}^{L}\cos(2(\vartheta-\varphi(t)))dt =
\cos(2\vartheta)\cdot \int\limits_{0}^{L}\cos(2\varphi(t))dt +
\sin(2\vartheta)\int\limits_{0}^{L}\sin(2(\vartheta-\varphi(t)))dt.
\end{equation*}
Thus, if the tuple $(\cos(2\vartheta),\sin(2\vartheta))$ attains at
least two not co-linear values with
$\vartheta\in\operatorname{supp}(\tau)$, it implies that
\begin{equation*}
\int\limits_{0}^{L}\cos(2\varphi(t))dt =
\int\limits_{0}^{L}\sin(2\varphi(t))dt = 0,
\end{equation*}
which is equivalent to \eqref{eq:int(e^{2iphi})=0}; in this case
the constant $c(\tau,\gamma)$ vanishes {\em universally}, i.e.
for all measures $\tau$.

The only two attainable measures that violate the condition of
$$(\cos(2\vartheta),\sin(2\vartheta))$$ attaining at least two not
co-linear values with $\vartheta\in\operatorname{supp}(\tau)$ as
above are Cilleruelo \eqref{eq:tilt Cil measure} and tilted
Cilleruelo \eqref{eq:tilt Cil measure}. Here the condition for
vanishing of the leading constant is
$\int\limits_{0}^{L}\cos(2\varphi(t))dt = 0$ or
$\int\limits_{0}^{L}\sin(2\varphi(t))dt=0$ respectively, as
prescribed.
\end{proof}

The next proposition studies when $c(\tau,\gamma)$ attains the ``theoretical maximum" $\frac{L^{2}}{2}$.

\begin{proposition}

The maximum value $c(\tau,\gamma)=\frac{L^{2}}{2}$ is attained for
$\tau$ the Cilleruelo measure \eqref{eq:Cil measure} and $\curve$ a
straight line parallel to either of the axes, or $\tau$ the tilted
Cilleruelo measure \eqref{eq:tilt Cil measure} and $\curve$ parallel
to $y=\pm x$. Though excluded by Theorem \ref{thm:nodal var intr
asympt}, this could be approximated arbitrarily well by
$c(\tau,\gamma)$ for length-$L$ smooth curves with non-vanishing
curvature.
\end{proposition}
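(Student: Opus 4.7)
The plan is to saturate the inequality chain that led to $c(\tau,\gamma)\le L^2/2$ in the proof of Proposition~\ref{prop:c(mu,gamma)=L^2/4}, identify the equality cases, and then exhibit explicit approximating smooth curves with nowhere-vanishing curvature.

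First I would invoke the representation
$$c(\tau,\gamma) = \int_{\Sc^1/i} 2\bigl(A(\gamma,\theta)^2 + A(\gamma,\theta^\perp)^2\bigr)d\tau(\theta)$$
used in the previous proof. In view of \eqref{eq:Atheta+Athetaperp>=1/2} and $\tau(\Sc^1/i)=1/4$, equality $c(\tau,\gamma)=L^2/2$ holds if and only if $A(\gamma,\theta)^2+A(\gamma,\theta^\perp)^2 = L^2$ for $\tau$-a.e.\ $\theta$. Combining with the identity $A(\gamma,\theta)+A(\gamma,\theta^\perp)=L$ and non-negativity, saturation forces $\{A(\gamma,\theta), A(\gamma,\theta^\perp)\}=\{0,L\}$ for $\tau$-a.e.\ $\theta$.

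Next, I would observe that $A(\gamma,\theta)=L$ in \eqref{eq:A(gamma,theta) def}, together with the unit-speed condition $|\dot\gamma|\equiv 1$, forces $\langle \theta,\dot\gamma(t)\rangle^2 \equiv 1$, hence (by continuity of $\dot\gamma$) that $\dot\gamma$ is constant and $\curve$ is a straight line segment of length $L$ in some direction $\theta_0$. But then $A(\gamma,\theta)=L\langle\theta_0,\theta\rangle^2$, and the requirement that this take only the values $0$ or $L$ on $\supp(\tau)$ forces $\supp(\tau)\subseteq\{\pm\theta_0,\pm\theta_0^\perp\}$. The $\pi/2$-rotation and reflection invariances of $\tau$ force the support to be the entire four-point orbit $\{e^{i(\alpha+k\pi/2)}\}_{k=0}^{3}$ (writing $\theta_0 = e^{i\alpha}$) with uniform mass $1/4$ at each atom, and reflection invariance then requires $\alpha\equiv 0\pmod{\pi/4}$. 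When $\alpha\equiv 0\pmod{\pi/2}$ this is the Cilleruelo measure \eqref{eq:Cil measure} with $\curve$ parallel to a coordinate axis, and when $\alpha\equiv \pi/4\pmod{\pi/2}$ this is the tilted Cilleruelo measure \eqref{eq:tilt Cil measure} with $\curve$ parallel to $y=\pm x$. A direct computation (using $A(\gamma,\theta_0)=L$ and $A(\gamma,\theta_0^\perp)=0$) verifies that each of these cases indeed achieves $c(\tau,\gamma)=L^2/2$.

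For the approximation statement, I would construct, in the Cilleruelo case, the family of circular arcs
$$\gamma_r(t) = \bigl(r\sin(t/r),\, r(1-\cos(t/r))\bigr),\qquad t\in[0,L],$$
parameterized by arc length and of constant curvature $1/r>0$, with $\dot\gamma_r(t)=(\cos(t/r),\sin(t/r))$. As $r\to\infty$, dominated convergence gives $A(\gamma_r,e_1)\to L$ and $A(\gamma_r,e_2)\to 0$, so $c(\tau,\gamma_r)=\tfrac{1}{2}\bigl(A(\gamma_r,e_1)^2+A(\gamma_r,e_2)^2\bigr)\to L^2/2$. For the tilted Cilleruelo case the same construction rotated by $\pi/4$ does the job. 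The only mildly delicate step in the whole argument is the third one: ensuring that the joint symmetry constraints (rotation by $\pi/2$ and the axis-reflection) select exactly these two measures on the 4-point orbit, with no additional flexibility; once this rigidity is pinned down, everything else is a direct verification.
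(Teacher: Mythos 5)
Your proposal is correct and follows essentially the same route as the paper: saturate the chain $A(\gamma,\theta)+A(\gamma,\theta^{\perp})=L$, $A(\gamma,\theta)^{2}+A(\gamma,\theta^{\perp})^{2}\le L^{2}$ to force $\{A(\gamma,\theta),A(\gamma,\theta^{\perp})\}=\{0,L\}$ on $\supp(\tau)$, hence a straight segment and a four-atom measure pinned down to Cilleruelo or tilted Cilleruelo by the $\pi/2$-rotation and reflection invariances. Your explicit family of circular arcs of radius $r\to\infty$ supplies a detail for the approximation claim that the paper merely asserts, but this is a refinement rather than a different argument.
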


\begin{proof}
By the proof of Proposition \ref{prop:c(mu,gamma)=L^2/4} above
the upper bound in \eqref{eq:c(mu,gamma)<=L^{2}/2} is attained if and only if for
all $\theta\in \operatorname{supp}(\tau)$,
$$A(\gamma,\theta)^{2} + A(\gamma,\theta^{\perp})^{2}=L^2,$$ which happens if and only if for all
$\theta\in\operatorname{supp}(\tau)$ one has $A(\gamma,\theta)=0$ or
$A(\gamma,\theta^{\perp})=0$. Equivalently, for all $\theta \in
\operatorname{supp}(\tau)$ and all $t\in [0,L]$, either $\theta
\perp \dot{\gamma}(t)$ or $\theta^{\perp}\perp\dot{\gamma}(t)$. Thus
there is a ``unique" maximizer for $c(\tau,\gamma)$, where $\tau$ is
an attainable measure and $\gamma$ is a curve, namely the only cases
prescribed in the statement of the present proposition. Since we
exclude the straight lines from our discussion, this is the supremum
rather than maximum.
\end{proof}

\appendix

\section{Non-degeneracy of the covariance matrix}

\label{app:det>0}

In this section we prove Lemma \ref{lem:matrix nongen diag}: given a fixed $0<\epsilon_{0} < 1$ we are to find a
constant $c_{0}=c_{0}(\epsilon_{0})$, so that for all $\eigen$ satisfying $|\tau_{\eigen}|<1-\epsilon_{0}$
(with $\tau_{\eigen}$ defined in \eqref{eq:taum def}), we have $\det{\Sigma(t_{1},t_{2})}>0$ 
(with $\Sigma(t_{1},t_{2})$ given by \eqref{eq:Sigma def})
is strictly positive for $|t_{2}-t_{1}|\le \frac{c_{0}}{\sqrt{m}}$.
Recall that $\mu$ and $\rho$ are given by \eqref{eq: mu def} and \eqref{eq:rho def} respectively
(with $\alpha=2\pi^{2}\eigen$); we have explicitly
\begin{equation}
\label{eq:det(Sigma) expl}
\det\Sigma(t_{1},t_{2}) = \det{A}\cdot \det{\Omega}=(1-r^{2})\cdot (1-r^{2})^{-2}\mu^{2}(1-\rho^{2}) =
(1-r^{2})^{-1}\mu^{2}(1-\rho^{2}).
\end{equation}

As above, $\dot{\gamma}(t)=e^{i\varphi(t)}$, i.e. the vector $\dot{\gamma}(t)$ is a unit vector in the direction $\varphi(t)$,
and $$A(t):=\widehat{\tau_{\eigen}}(4)\cdot \cos(4\varphi(t)).$$
In order to establish the positivity of $\det{\Sigma}(t_{1},t_{2})$ we Taylor expand the expression $\mu^{2}\cdot (1-\rho)^{2}$,
considered as a function of $t_{2}$ and $t_{1}$ constant, around $t_{2}=t_{2}$, as in the following lemma,
with the other term $(1-r^{2})^{-1}$ having been readily expanded \eqref{eq:(1-r^2) taylor diag}.

\begin{lemma}
\label{lem:taylor exp det}
We have 
\begin{equation*}
\begin{split}
\mu^{2}(1-\rho^{2}) &= 
\frac{2}{9}\pi^{14}\eigen^{7}(A(t_{1})-1)(A(t_{1})^{2}-1)(t_{2}-t_{1})^{10}
\\&+O(m^{13/2}(t_{2}-t_{1})^{10}+m^{15/2}(t_{2}-t_{1})^{11}),
\end{split}
\end{equation*}
valid for 
$|t_{2}-t_{1}|\ll \frac{1}{\sqrt{\eigen}}$.
\end{lemma}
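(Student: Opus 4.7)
The plan is to Taylor expand the four ingredients $r, r_1, r_2, r_{12}$ in the variable $s := t_2 - t_1$ with $t_1$ fixed, substitute into the explicit formula
\begin{equation*}
\mu^2(1-\rho^2) = \bigl[\alpha(1-r^2) - r_1^2\bigr]\bigl[\alpha(1-r^2) - r_2^2\bigr] - \bigl[r_{12}(1-r^2) + r r_1 r_2\bigr]^2,
\end{equation*}
and identify the leading non-zero coefficient in $s$. The expansion of $r(t_1,t_1+s)$ will be obtained by composing the Taylor series of the ambient covariance $r_F$ with the unit-speed curve expansion $\gamma(t_1+s) - \gamma(t_1) = s\dot\gamma(t_1) + (s^2/2)\ddot\gamma(t_1) + \cdots$. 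For $r_F$ I will use the Fourier representation $r_F(x) = N_m^{-1}\sum_{\mu\in\vE}\cos(2\pi\langle\mu,x\rangle)$ and expand in powers of $|x|$, evaluating angular averages via the Fourier transform of $\tau_m$; the $\pi/2$--rotational invariance of $\tau_m$ forces $\widehat{\tau_m}(2) = 0$, so that $\widehat{\tau_m}(4)$ is the first non-trivial Fourier coefficient and
\begin{equation*}
r_F(x) = 1 - \pi^2 m|x|^2 + \tfrac{\pi^4 m^2|x|^4}{12}\bigl[3 + \widehat{\tau_m}(4)\cos(4\psi)\bigr] + O(m^3|x|^6),
\end{equation*}
where $\psi$ is the argument of $x$. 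In particular, the Taylor coefficients of $r, r_1, r_2, r_{12}$ depend on $t_1$ (at leading order in $m$) only through $A(t_1) = \widehat{\tau_m}(4)\cos(4\varphi(t_1))$, together with curvature contributions and the next Fourier coefficient $\widehat{\tau_m}(8)$, both of which are lower-order in $m$ and will be absorbed into the error.

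The leading substitutions $r\sim 1 - \alpha s^2/2$, $r_1\sim\alpha s$, $r_2\sim -\alpha s$, $r_{12}\sim \alpha$ into the formula above produce systematic cancellations: both $\alpha(1-r^2) - r_i^2$ and $r_{12}(1-r^2) + r r_1 r_2$ have their naive $O(\alpha^2 s^2)$ leading parts vanish exactly, so each factor begins at order $O(m^2 s^4)$. A further structural cancellation at order $s^8$ then forces $\mu^2(1-\rho^2)$ to begin at order $s^{10}$; conceptually this reflects the fact that the Taylor expansion of the process $f$ makes $(f(t_2), f'(t_2))$ nearly linearly dependent on $(f, f', f'', f''')$ at $t_1$, so that $\det\Sigma(t_1,t_2) = O(s^8)$ and, using $1 - r^2 = \alpha s^2 + O(\alpha^2 s^4)$ together with $\mu^2(1-\rho^2) = (1-r^2)\det\Sigma$, one obtains $\mu^2(1-\rho^2) = O(m s^{10})$. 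Identifying the coefficient at $s^{10}$ amounts to a direct computation from the second- through sixth-order Taylor coefficients of $r$ on the diagonal; it is a polynomial in $A(t_1)$ that, after collecting terms, equals exactly $\frac{2}{9}\pi^{14}m^7(A(t_1) - 1)(A(t_1)^2 - 1)$. The error $O(m^{13/2}s^{10})$ absorbs the curvature and $\widehat{\tau_m}(8)$ corrections, each one power of $m^{1/2}$ smaller than the main term, and $O(m^{15/2}s^{11})$ collects the next order in $s$.

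The main obstacle will be the length and delicacy of the algebraic computation: one needs to expand $r$ through order $s^8$, $r_1$ and $r_2$ through $s^7$, and $r_{12}$ through $s^6$, and then carry out the resulting polynomial manipulation while tracking which terms are leading in $m$ versus subleading. The factorization $(A(t_1) - 1)^2(A(t_1) + 1)$ has a natural structural interpretation: its zeros at $A(t_1) = \pm 1$ correspond precisely to the excluded Cilleruelo and tilted Cilleruelo measures, where $\widehat{\tau_m}(4) = \pm 1$ and the underlying random wave degenerates to an essentially one-dimensional oscillation. The hypothesis $|\widehat{\tau_m}(4)| < 1 - \epsilon_0$ ensures $|A(t_1)|$ is uniformly bounded away from $1$, and hence the leading coefficient is bounded away from zero, which is ultimately what is needed for the non-degeneracy conclusion of Lemma \ref{lem:matrix nongen diag}.
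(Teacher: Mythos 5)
Your overall strategy coincides with the paper's: both start from the identity $\mu^{2}(1-\rho^{2})=(\alpha(1-r^{2})-r_{1}^{2})(\alpha(1-r^{2})-r_{2}^{2})-(r_{12}(1-r^{2})+rr_{1}r_{2})^{2}$, Taylor expand $r,r_{1},r_{2},r_{12}$ in $z=t_{2}-t_{1}$ using the diagonal derivatives of $r$ (whose fourth and sixth order values are computed from the angular moments of $\tau_{\eigen}$, exactly as in your expansion of $r_{F}$), observe the cancellation of everything through order $z^{8}$, and read off the $z^{10}$ coefficient as a polynomial in $A(t_{1})$. Your identification of the main term, of the mechanism of cancellation, and of the role of $A(t_{1})=\pm1$ is correct.

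The gap is in the error term. Your justification --- that the curvature and $\widehat{\tau_{\eigen}}(8)$ corrections are ``one power of $m^{1/2}$ smaller than the main term'' and hence absorbed into $O(m^{13/2}z^{10})$ --- does not account for the fact that these corrections first enter at \emph{odd} order $z^{9}$, not at $z^{10}$. Concretely, $\partial^{5}r/\partial t_{2}^{5}(t_{1},t_{1})=O(m^{2})$ is generically nonzero (it carries the curvature of $\gamma$), so each factor $\alpha(1-r^{2})-r_{j}^{2}$ has a $z^{5}$ coefficient of size $O(m^{3})$, and the product of the two factors acquires a $z^{9}$ coefficient of size $O(m^{6})$. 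A term $m^{6}z^{9}$ is \emph{not} dominated by $m^{13/2}z^{10}+m^{15/2}z^{11}$ throughout $0<|z|\ll m^{-1/2}$ (it exceeds both, e.g., for $|z|\le m^{-3/4}$), so your stated error bound fails unless the $z^{9}$ coefficient vanishes identically --- and this is precisely what the application to Lemma \ref{lem:matrix nongen diag} needs, since an uncontrolled signed $z^{9}$ term could overwhelm the positive $z^{10}$ main term arbitrarily close to the diagonal. The paper disposes of this with a short structural observation you should add: $\mu^{2}(1-\rho^{2})=(1-r^{2})\det\Sigma\ge 0$ is a smooth non-negative function of $z$ whose Taylor coefficients vanish through order $8$, so its first non-vanishing coefficient must occur at an even power; hence the $z^{9}$ coefficient is exactly zero and the $O(m^{6}z^{9})$ remainder upgrades to $O(m^{13/2}z^{10})$. (Alternatively one could verify the $z^{9}$ cancellation by direct computation of the odd-order contributions, but that is considerably more work than the parity argument.)
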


\begin{proof}[Proof of Lemma \ref{lem:matrix nongen diag} assuming Lemma \ref{lem:taylor exp det}]

Recall that by \eqref{eq:det(Sigma) expl} we have
\begin{equation*}
\det{\Sigma}=(1-r^{2})^{-1}\cdot \mu^{2}(1-\rho^{2}). 
\end{equation*}
It is obvious from \eqref{eq:(1-r^2) taylor diag} that $(1-r^{2})$ (and hence its 
reciprocal) is strictly positive for $|t_{2}-t_{1}|<\frac{c_{0}}{\sqrt{m}}$ with $c_{0}$ 
depending on $\gamma$ only. Concerning the other factor, we use Lemma \ref{lem:taylor exp det}
to expand 
\begin{equation}
\label{eq:musqr*(1-rho^2) taylor diag}
\begin{split}
&\mu^{2}(1-\rho^{2}) = 
\frac{2}{9}\pi^{14}\eigen^{7}(A(t_{1})-1)(A(t_{1})^{2}-1)(t_{2}-t_{1})^{10}
\\&+O(m^{13/2}(t_{2}-t_{1})^{10}+m^{15/2}(t_{2}-t_{1})^{11}).
\end{split}
\end{equation}
Note that
\begin{equation*}
|A(t_{1})|\le |\widehat{\tau_{\eigen}}(4)| < 1-\epsilon_{0}
\end{equation*}
is bounded away from $1$. That implies that the leading term in 
\eqref{eq:musqr*(1-rho^2) taylor diag}, 
\begin{equation*}
\frac{2}{9}\pi^{14}\eigen^{7}(1-A(t_{1}))(1-A(t_{1})^{2})(t_{2}-t_{1})^{10} \ge \frac{2}{9}\pi^{14}\epsilon_{0}^{3}\cdot\eigen^{7}(t_{2}-t_{1})^{10}
\gg \eigen^{7}(t_{2}-t_{1})^{10},
\end{equation*}
is bigger than the remainder term in \eqref{eq:musqr*(1-rho^2) taylor diag} for $|t_{2}-t_{1}|< \frac{c_{0}}{\sqrt{\eigen}}$
for $c_{0}$ chosen sufficiently small.
\end{proof}

\begin{proof}[Proof of Lemma \ref{lem:taylor exp det}]
We have
\begin{equation}
\begin{split}
\label{eq:musqr(1-r^2) expl}
\mu^{2}(1-\rho^{2}) &= \left( \alpha(1-r^{2})-r_{1}^{2} \right)\left( \alpha(1-r^{2})-r_{2}^{2} \right)
-\left( r_{12}(1-r^{2})+rr_{1}r_{2}\right)^{2}.
\end{split}
\end{equation}

Let $c_{m}=c_{m}(t_{1}):=\frac{\partial^{4} r}{\partial t_{2}^{4}}(t_{1},t_{1})$, 
$e_{m}=e_{m}(t_{1}):=\frac{\partial^{6} r}{\partial t_{2}^{6}}(t_{1},t_{1})$. Using the identities
\begin{equation*}
\cos^{4}{\theta}=\frac{3}{8}+\frac{1}{2}\cos(2\theta)+\frac{1}{8}\cos(4\theta),
\end{equation*}
and
\begin{equation*}
\cos^{6}(\theta)=\frac{5}{16}+\frac{15}{32}\cos(2\theta)+\frac{3}{16}\cos(4\theta)+\frac{1}{32}\cos(6\theta),
\end{equation*}
and $\widehat{\tau_{m}}(k) = 0$ unless $4|k$, by the $\pi/2$ rotation invariance,
we may compute 
\begin{equation}
\label{eq:cm muhat}
\begin{split}
c_{m}&=\frac{(2\pi)^4}{N}\sum\limits_{\mu\in\Ec} (\mu^{t} \cdot \dot{\gamma}(t))^{4} + O(m^{3/2})
\\&=(2\pi)^{4}m^{2}\left(\frac{3}{8}+\frac{1}{8}\widehat{\tau_{m}}(4)\cos(4\phi)\right)+O(m^{3/2}),
\end{split}
\end{equation}
\begin{equation*}
c_{m}' = O(m^{2}),
\end{equation*}
\begin{equation}
\label{eq:em muhat}
\begin{split}
e_{m}&:= -\frac{(2\pi)^6}{N}\sum\limits_{\mu\in\Ec} (\mu^{t} \cdot \dot{\gamma}(t))^{6} + O(m^{5/2})
\\&= -(2\pi)^{6}m^{3}\left( \frac{5}{16}+\frac{3}{16}\widehat{\tau_{m}}(4)\cos(4\phi)  \right) +O(m^{5/2}).
\end{split}
\end{equation}

Let $z:=t_{2}-t_{1}$. Bearing in mind that \eqref{eq:r222 diag=0},
$$\frac{\partial r}{\partial t_{2}}(t_{1},t_{1})=\frac{\partial^{3} r}{\partial t_{2}^{3}}(t_{1},t_{1})=0$$ (cf. 
\eqref{eq:r22(t1,t2)} and \eqref{eq:r222 diag=0}), 
and $$\left |\frac{\partial^{5} r}{\partial t_{2}^{5}}(t_{1},t_{1})\right |=O(m^{2}),$$
we may Taylor expand $r=r(t_{1},t_{2})$ for $t_{1}$ fixed as:
\begin{equation*}
\begin{split}
r&=1-\frac{\alpha}{2}z^{2}+\frac{1}{24}c_{m}(t_{1})z^{4} + 
\frac{1}{720}e_{m}(t_{1})z^{6}+O(m^{2}z^{5}+m^{7/2}z^{7}),
\end{split}
\end{equation*}
where the constant involved in the ``O"-notation depends on $\gamma$ only.
We may differentiate term-wise to obtain
(the terms involving $c_{m}'$, $e_{m}'$ are of smaller order and are absorbed in
the various error terms)
\begin{equation*}
r_{2}=-\alpha z+\frac{1}{6}c_{m}z^{3} + 
\frac{1}{120}e_{m}z^{5}+O\left(m^{2}z^{4}+m^{7/2}z^{6} \right),
\end{equation*}
\begin{equation*}
\begin{split}
r_{1}&=\alpha z -\frac{1}{6}c_{m}z^{3}-
\frac{1}{120}e_{m}z^{5} +
O(m^{2}z^{4}+m^{7/2}z^{6})
\\&=z\left(\alpha -\frac{1}{6}c_{m}z^{2}-
\frac{1}{120}e_{m}z^{4}\right) +
O(m^{2}z^{4}+m^{7/2}z^{6}) ,
\end{split}
\end{equation*}
\begin{equation*}
r_{12} =\alpha-\frac{1}{2}c_{m}z^{2}-
\frac{1}{24}e_{m}z^{4}+
O(m^{2}z^{3}+m^{7/2}z^{5}).
\end{equation*}

Incorporating the above, we have (using $|z|\ll \frac{1}{\sqrt{\eigen}}$ to consolidate the various error
terms throughout)
\begin{equation*}
\begin{split}
1-r^{2}&=(1-r)(1+r) \\&= z^2\left(\frac{\alpha_{2}}{2}-\frac{1}{24}c_{m}z^{2} - \frac{1}{720}e_{m}z^{4}\right)\cdot \left(2-\frac{\alpha}{2}z^{2}+\frac{1}{24}c_{m}z^{4}\right)+ O(m^{2}z^{5}+m^{7/2}z^{7} )
\\&= z^{2}\left(\alpha -\left(\frac{c_{m}}{12}+\frac{\alpha^{2}}{4}\right)z^{2}
+\left(-\frac{e_{m}}{360} + \frac{\alpha}{24}c_{m}\right)z^{4}\right)+ O(m^{2}z^{5}+m^{7/2}z^{7} ),
\end{split}
\end{equation*}
\begin{equation*}
\begin{split}
r_{1}^{2} &= z^{2}\left(\alpha^{2} -\frac{\alpha}{3}c_{m}z^{2}+
\left(\frac{c_{m}^{2}}{36}-\frac{\alpha}{60}e_{m}\right)z^{4}\right) + O(m^{3}z^{5}+m^{9/2}z^{7}),
\end{split}
\end{equation*}
and the same estimate holds for $r_{2}^{2}$;
\begin{equation*}
\begin{split}
\alpha(1-r^{2})-r_{1}^{2} &= z^{4}\left( \frac{\alpha}{4}\left(c_{m}-\alpha^{2}\right)
+\frac{1}{72}\left(e_{m}\alpha+3\alpha^{2}c_{m}- 2c_{m}^{2}\right)z^{2}
 \right)\\&+O(m^{3}z^{5}+m^{9/2}z^{7}),
\end{split}
\end{equation*}
and the same estimate holds for $\alpha(1-r^{2})-r_{2}^{2}$;
\begin{equation}
\label{eq:rhonnorm 1/2}
\begin{split}
&(\alpha(1-r^{2})-r_{1}^{2})\cdot (\alpha(1-r^{2})-r_{2}^{2})\\& =
z^{8}\bigg( \frac{\alpha^{2}}{16}\left(c_{m}-\alpha^{2}\right)^{2}
+\frac{\alpha_{2}}{144}\left(c_{m}-\alpha^{2}\right)\left(e_{m}\alpha+3\alpha^{2}c_{m}-  2c_{m}^{2}\right)z^{2}  \bigg)\\&
+O(m^{6}z^{9}+m^{15/2}z^{11}).
\end{split}
\end{equation}
Continuing,
\begin{equation*}
\begin{split}
&r_{12}(1-r^{2})=z^{2}\left( \alpha-\frac{1}{2}c_{m}z^{2}-
\frac{1}{24}e_{m}z^{4}\right)\times\\&\times\left(\alpha -
\left(\frac{c_{m}}{12}+\frac{\alpha^{2}}{4}\right)z^{2}
+\left(-\frac{e_{m}}{360} + \frac{\alpha}{24}c_{m}\right)z^{4}\right)
+ O(m^{3}z^{5}+m^{9/2}z^{7} ) \\&=
z^{2} \left(\alpha^{2}-
\alpha\left(\frac{7}{12}c_{m}+\frac{\alpha^{2}}{4}\right)z^{2}+
\left( -\frac{2}{45}\alpha e_{m}+\frac{1}{6}\alpha^{2}c_{m}+\frac{c_{m}(t_{1})^{2}}{24}\right)z^{4} \right)
\\&+ O(m^{3}z^{5}+m^{9/2}z^{7} ),
\end{split}
\end{equation*}
\begin{equation*}
\begin{split}
&rr_{1}r_{2} =
-z^{2}\left( 1-\frac{\alpha}{2}z^{2}+\frac{1}{24}c_{m}z^{4}
\right) \cdot
\left(\alpha -\frac{1}{6}c_{m}z^{2}-
\frac{1}{120}e_{m}z^{4}\right)^{2} + O(m^{3}z^{5}+m^{9/2}z^{7})
\\&= -z^{2}\left(\alpha^{2}-\frac{\alpha}{6}(3\alpha^{2}+2c_{m})z^{2}
+\left( \frac{5}{24}\alpha^{2}c_{m}-\frac{\alpha}{60}e_{m}+\frac{1}{36}c_{m}^{2}\right)
z^{4}   \right)  \\&+ O(m^{3}z^{5}+m^{9/2}z^{7})
\end{split}
\end{equation*}
Combining the last couple of estimates we obtain:
\begin{equation*}
\begin{split}
&r_{12}(1-r^{2})+rr_{1}r_{2} =z^{4}\left(
\frac{\alpha}{4}(\alpha^{2}-c_{m} + \left( -\frac{1}{36}\alpha e_{m}
-\frac{1}{24}\alpha^{2}c_{m} + \frac{1}{72}c_{m}^{2}\right)z^{2}   \right)
\\&+ O(m^{3}z^{5}+m^{9/2}z^{7}),
\end{split}
\end{equation*}
and
\begin{equation*}
\begin{split}
&(r_{12}(1-r^{2})+rr_{1}r_{2})^{2}\\& =\alpha(\alpha^{2}-c_{m})z^{8}\left(
\frac{\alpha}{16}(\alpha^{2}-c_{m})+
\frac{1}{144}\left( -2\alpha e_{m}
-3\alpha^{2}c_{m} + c_{m}^{2}\right)\cdot z^{2}   \right)
\\&+ O(m^{6}z^{9}+m^{15/2}z^{11}).
\end{split}
\end{equation*}

Finally using the latter estimate with \eqref{eq:rhonnorm 1/2} we obtain (the
term corresponding to $z^{8}$ cancels out precisely, and by the non-negativity
the Taylor series necessarily starts from an even power)
\begin{equation}
\label{eq:musqr*(1-rho^2)}
\begin{split}
&\left( \alpha(1-r^{2})-r_{1}^{2} \right)\left( \alpha(1-r^{2})-r_{2}^{2} \right)
-\left( r_{12}(1-r^{2})+rr_{1}r_{2}\right)^{2}
\\&= \frac{\alpha}{144}(\alpha^{2}-c_{m})(c_{m}^{2}+\alpha e_{m})z^{10}
+O(m^{6}z^{9}+m^{15/2}z^{11}).
\end{split}
\end{equation}

Note that by \eqref{eq:alpha2 explicit}, \eqref{eq:cm muhat} and \eqref{eq:em muhat} we have
\begin{equation*}
\alpha^{2}-c_{m}=2\pi^{4}m^{2}\left( \widehat{\tau_{m}}(4)\cos(4\varphi)-1  \right) + O(m^{3/2}),
\end{equation*}
and
\begin{equation*}
c_{m}^{2}+\alpha e_{m} = 4\pi^{8}m^{4}(\widehat{\tau_{m}}(4)^{2}\cos^{2}(4\varphi)-1)+O(m^{7/2}),
\end{equation*}
so that, bearing in mind \eqref{eq:musqr(1-r^2) expl}, \eqref{eq:musqr*(1-rho^2)} is
\begin{equation}
\label{eq:musqr(1-rho^2) taylor}
\begin{split}
&\mu^{2}(1-\rho^{2})= \frac{2}{9}\pi^{14}\eigen^{7}(A(t_{1})-1)(A(t_{1})^{2}-1)(t_{2}-t_{1})^{10}
+O(m^{6}z^{9}+m^{15/2}z^{11});
\end{split}
\end{equation}
this is almost identical to the statement of the present lemma, except that we have to improve
the error term.
To this end we observe that since, in light of \eqref{eq:det(Sigma) expl}, the expression on the l.h.s.
of \eqref{eq:musqr(1-rho^2) taylor} is non-negative, the Taylor expansion on the r.h.s. of \eqref{eq:musqr(1-rho^2) taylor} 
is guaranteed to begin with an even power of $z$. Hence the first error term
$O(m^{6}z^{9})$ is $O(m^{13/2}z^{10})$
(recall that this expansion is valid for $|t_{2}-t_{1}|\ll \frac{1}{\sqrt{m}}$).
\end{proof}

\end{document}